\PassOptionsToPackage{unicode}{hyperref}
\PassOptionsToPackage{hyphens}{url}
\PassOptionsToPackage{dvipsnames,svgnames,x11names}{xcolor}
\documentclass[12pt]{article}

\usepackage{amsmath,amssymb}
\usepackage{enumerate}
\usepackage{iftex}
\usepackage{bbm}
\usepackage{tikz}
\usetikzlibrary{decorations.pathreplacing}
\ifPDFTeX
  \usepackage[T1]{fontenc}
  \usepackage[utf8]{inputenc}
  \usepackage{textcomp} 
\else 
  \usepackage{unicode-math}
  \defaultfontfeatures{Scale=MatchLowercase}
  \defaultfontfeatures[\rmfamily]{Ligatures=TeX,Scale=1}
\fi
\usepackage{lmodern}
\ifPDFTeX\else  
\fi
\IfFileExists{upquote.sty}{\usepackage{upquote}}{}
\IfFileExists{microtype.sty}{
  \usepackage[]{microtype}
  \UseMicrotypeSet[protrusion]{basicmath} 
}{}
\makeatletter
\@ifundefined{KOMAClassName}{
  \IfFileExists{parskip.sty}{%
    \usepackage{parskip}
  }{
    \setlength{\parindent}{0pt}
    \setlength{\parskip}{6pt plus 2pt minus 1pt}}
}{
  \KOMAoptions{parskip=half}}
\makeatother
\usepackage{xcolor}
\makeatletter
\ifx\paragraph\undefined\else
  \let\oldparagraph\paragraph
  \renewcommand{\paragraph}{
    \@ifstar
      \xxxParagraphStar
      \xxxParagraphNoStar
  }
  \newcommand{\xxxParagraphStar}[1]{\oldparagraph*{#1}\mbox{}}
  \newcommand{\xxxParagraphNoStar}[1]{\oldparagraph{#1}\mbox{}}
\fi
\ifx\subparagraph\undefined\else
  \let\oldsubparagraph\subparagraph
  \renewcommand{\subparagraph}{
    \@ifstar
      \xxxSubParagraphStar
      \xxxSubParagraphNoStar
  }
  \newcommand{\xxxSubParagraphStar}[1]{\oldsubparagraph*{#1}\mbox{}}
  \newcommand{\xxxSubParagraphNoStar}[1]{\oldsubparagraph{#1}\mbox{}}
\fi
\makeatother

\usepackage{longtable,booktabs,array}
\usepackage{calc} 
\usepackage{etoolbox}
\makeatletter
\patchcmd\longtable{\par}{\if@noskipsec\mbox{}\fi\par}{}{}
\makeatother
\IfFileExists{footnotehyper.sty}{\usepackage{footnotehyper}}{\usepackage{footnote}}
\makesavenoteenv{longtable}
\usepackage{graphicx}
\makeatletter
\def\maxwidth{\ifdim\Gin@nat@width>\linewidth\linewidth\else\Gin@nat@width\fi}
\def\maxheight{\ifdim\Gin@nat@height>\textheight\textheight\else\Gin@nat@height\fi}
\makeatother
\setkeys{Gin}{width=\maxwidth,height=\maxheight,keepaspectratio}
\makeatletter
\def\fps@figure{htbp}
\makeatother

\makeatletter
\@ifpackageloaded{caption}{}{\usepackage{caption}}
\AtBeginDocument{%
\ifdefined\contentsname
  \renewcommand*\contentsname{Table of contents}
\else
  \newcommand\contentsname{Table of contents}
\fi
\ifdefined\listfigurename
  \renewcommand*\listfigurename{List of Figures}
\else
  \newcommand\listfigurename{List of Figures}
\fi
\ifdefined\listtablename
  \renewcommand*\listtablename{List of Tables}
\else
  \newcommand\listtablename{List of Tables}
\fi
\ifdefined\figurename
  \renewcommand*\figurename{Figure}
\else
  \newcommand\figurename{Figure}
\fi
\ifdefined\tablename
  \renewcommand*\tablename{Table}
\else
  \newcommand\tablename{Table}
\fi
}
\@ifpackageloaded{float}{}{\usepackage{float}}
\floatstyle{ruled}
\@ifundefined{c@chapter}{\newfloat{codelisting}{h}{lop}}{\newfloat{codelisting}{h}{lop}[chapter]}
\floatname{codelisting}{Listing}

\makeatother
\makeatletter
\@ifpackageloaded{caption}{}{\usepackage{caption}}
\@ifpackageloaded{subcaption}{}{\usepackage{subcaption}}
\makeatother

\ifLuaTeX
  \usepackage{selnolig}  
\fi
\usepackage[]{natbib}
\usepackage{bookmark}

\IfFileExists{xurl.sty}{\usepackage{xurl}}{} 
\hypersetup{
  pdftitle={Title},
  pdfauthor={Author 1; Author 2},
  pdfkeywords={3 to 6 keywords, that do not appear in the title},
  colorlinks=true,
  linkcolor={blue},
  filecolor={Maroon},
  citecolor={Blue},
  urlcolor={Blue},
  pdfcreator={LaTeX via pandoc}}

\newtheorem{theorem}{Theorem}
\newtheorem{lemma}{Lemma}
\newtheorem{remark}{Remark}
\newtheorem{proposition}{Proposition}

\newtheorem{defin}{Definition}
\newtheorem{condition}{Condition}

\def\bm{\boldsymbol}

\def\beq{\begin{equation}}
\def\eeq{\end{equation}}
\def\beqs{\begin{equation*}}
\def\eeqs{\end{equation*}}
\def\beqr{\begin{eqnarray}}
\def\eeqr{\end{eqnarray}}
\def\beqrs{\begin{eqnarray*}}
\def\eeqrs{\end{eqnarray*}}
\def\bet{\begin{theorem}}
\def\eet{\end{theorem}}
\def\bel{\begin{lemma}}
\def\eel{\end{lemma}}
\def\bep{\begin{proposition}}
\def\eep{\end{proposition}}
\def\bg{\begin{figure}[tbph]\begin{center}}
\def\eg{\end{center}\end{figure}}

\def\bc{\begin{center}}
\def\ec{\end{center}}

\newcommand{\anon}{1}

\begin{document}

\def\spacingset#1{\renewcommand{\baselinestretch}%
{#1}\small\normalsize} \spacingset{1}


\if1\anon
{
  \title{\bf Local spectral clustering for heterogeneous clustering structures}
  \author{
    Yuanxing Chen \\
    School of Economics and Management, Fuzhou University\\
    and\\
    Qingzhao Zhang \\
    School of Economics, Xiamen University\\
    The Wang Yanan Institute for Studies in Economics, Xiamen University\\
    and\\
    Yuhong Yang\\
    Yau Mathematical Sciences Center, Tsinghua University\\
    Beijing Institute of Mathematical Sciences and Applications
    }
  \maketitle
} \fi

\if0\anon
{
  \bigskip
  \bigskip
  \bigskip
  \begin{center}
    {\LARGE\bf Fused Spatial Latent Block Models for Co-Clustering}
\end{center}
  \medskip
} 
\fi

\bigskip
\begin{abstract}
Classical clustering methods typically assume that all informative features support a single latent partition of the observations. This assumption can be overly restrictive for modern high-dimensional data, where different subsets of features may encode distinct notions of similarity and induce heterogeneous sample partitions, while some features may contain no meaningful clustering information. We develop a frequentist framework for local clustering that simultaneously identifies feature groups and estimates the sample clustering structure associated with each group. Our approach represents each sample partition by a label-invariant clustering matrix and groups features according to their shared clustering structures, thereby reformulating local clustering as a feature-grouping, or clustering-of-clusterings, problem. Under a heterogeneous sub-Gaussian mixture model, we construct feature-specific Gaussian-kernel similarity matrices and propose a local spectral clustering procedure based on a clustering-matrix optimization criterion. The proposed method avoids explicit likelihood specification and Bayesian posterior computation, accommodates heterogeneous feature distributions, and permits the presence of non-informative features. Extensive simulations and applications further demonstrate the practical utility and superiority of the proposed
approach.
\end{abstract}

\noindent%
{\it Keywords:} Local clustering; heterogeneous clustering structures; Feature grouping; Clustering matrix; Spectral clustering; Sub-Gaussian mixture model.
\vfill

\newpage

\spacingset{1.8} 
\section{Introduction}\label{sec1}

Clustering is one of the most fundamental tasks in statistical learning, aiming to partition a collection of observations into internally homogeneous and mutually heterogeneous groups based on certain similarities in the observed features. 
Classical approaches, including $K$-means clustering \citep{jain2010data}, hierarchical clustering \citep{murtagh2012algorithms}, model-based clustering \citep{gormley2023model}, and spectral clustering \citep{von2007tutorial}, differ substantially in their formulations and computational strategies. Nevertheless, they are generally built upon a common structural assumption: a dataset is characterized by a single latent partition of the observations. Under this global clustering paradigm, all informative features are expected to contribute, possibly with different strengths, to the recovery of the same underlying clustering structure.

Although the assumption of a single global partition is natural and desirable in many classical applications, it can be overly restrictive for modern high-dimensional datasets \citep{witten2010framework,lee2013nonparametric}.
In such cases, features are often highly heterogeneous in their scientific meanings, dependence structures, and sources of variation, and only subsets of features may carry meaningful clustering information
relevant to particular clustering patterns \citep{raftery2006variable,liu2023clustering}.
Consequently, different subsets of features may encode different notions of similarity among the same observations, thereby inducing distinct sample partitions \citep{lee2013nonparametric,xu2013nonparametric}. 
Rather than being characterized by a unique global clustering structure, a high-dimensional dataset may therefore contain multiple clustering structures that coexist across different subsets of features.

This phenomenon is particularly common in modern omics and also epidemiological studies, where distinct subsets of features often correspond to different biological processes or disease types and, therefore, induce different partitions of the same observations. 
The success of gene set enrichment analysis \citep{subramanian2005gene} demonstrated that  biologically meaningful signals frequently arise from the coordinated behavior of groups of genes, such as pathways or functional modules, rather than from isolated individual genes. 
Genes involved in different biological pathways may distinguish observations according to various disease subtypes, cellular states, or molecular mechanisms. As a result, one subset of genes may separate observations according to a particular biological process, whereas another subset may induce a substantially different partition of the same observations \citep{xie2019time}.
In proteomic studies, different protein modules reflect heterogeneous biological mechanisms and generate different patterns of sample similarity \citep{lee2013nonparametric,hao2021integrated}.
Similar forms of structural heterogeneity arise in electronic health records (EHR), where different sets of clinical indicators reflect distinct latent diseases \citep{ni2020bayesian}.
More broadly, whenever high-dimensional features arise from multiple functional systems, measurement platforms, or data-generating mechanisms, it is generally unrealistic to expect all features to support a common sample partition.

These examples motivate a more flexible framework for clustering structures, which is referred to as local clustering in \cite{lee2013nonparametric}. Instead of asking for a single partition that summarizes the entire feature space, local clustering seeks to determine which features support the same partition and jointly identify a non-overlapping feature partition while recovering the corresponding non-overlapping clustering structure within each feature group. 
Importantly, the feature partition in local clustering is unknown a priori and depends on the sample clustering structures it induces. Therefore, separate clustering analyses cannot be conducted in advance for different feature groups.

The need to identify localized clustering patterns has motivated an extensive literature on biclustering and related methods. Biclustering seeks subsets of observations and features that exhibit coherent local patterns within a data matrix \citep{cheng2000biclustering,lazzeroni2002plaid,chi2017convex,flynn2020profile,zhong2022biclustering}. 
These biclustering methods simultaneously cluster features and observations by identifying submatrices with coherent numerical patterns, rather than grouping features according to the sample partitions they induce.
Consequently, biclustering does not always directly address the problem of recovering multiple complete partitions of the same set of observations.
Bayesian local clustering methods provide a more direct formulation of this problem by allowing different subsets of features to induce different partitions of a common set of observations. An influential work is the nonparametric Bayesian local clustering model proposed by \cite{lee2013nonparametric}, which formalized the idea that features can be grouped according to a shared partition of the observations. Related hierarchical and nonparametric constructions have further expanded the flexibility of this approach \citep{rodriguez2008nested,dombowsky2025bayesian}. These developments offer a powerful alternative to global clustering and demonstrate the practical importance of explicitly modeling heterogeneous clustering structures.


Despite their flexibility and empirical success, existing local clustering methods face several limitations. First, most available approaches are model-based and rely on hierarchical Bayesian formulations, latent allocation features, or elaborate likelihood specifications \citep{rodriguez2008nested,lee2013nonparametric,dombowsky2025bayesian}. 
Statistical inference may therefore require computationally intensive posterior sampling procedures. 
Second, existing methods typically describe local clustering structures indirectly through latent parameters in a probabilistic model \citep{tan2014sparse,chi2017convex}. The partitions themselves are not always treated as explicit statistical objects in the formulation of the estimation problem. This makes it difficult to directly compare clustering structures across features or to characterize the statistical separation required to distinguish one feature group from another.
Third, a complete theoretical analysis must account for two interconnected sources of uncertainty. One needs to simultaneously estimate the grouping of features and the sample partition associated with each feature group. Errors in feature grouping may propagate into the estimation of sample clusters, while inaccurate sample partitions may, in turn, obscure the distinction between feature groups. Establishing recovery guarantees therefore requires a unified analysis of these two layers of latent structure, which is basically absent to the best of our knowledge.

In this paper, we develop a frequentist framework for local clustering based on an explicit representation of heterogeneous clustering structures. 
Our starting point is that the central object in local clustering is the partition of observations rather than a collection of latent model parameters. 
We represent each sample partition by its clustering matrix, whose entries record whether pairs of observations belong to the same cluster. 
This representation is invariant to permutations of cluster labels and therefore provides a natural object for comparing clustering structures across features.
Under this formulation, features belong to the same feature group when they induce the same clustering matrix. Local clustering can consequently be reformulated as a feature grouping problem in which features are grouped according to their shared sample partitions. Equivalently, the problem may be viewed as a clustering-of-clusterings problem: each feature contains information about a sample clustering structure, and the objective is to identify groups of features that generate identical or sufficiently similar partitions. This representation separates the structural target of interest from any particular likelihood specification and provides a direct basis for estimation and theoretical analysis.

To formalize this idea, we introduce a heterogeneous sub-Gaussian mixture model under which different feature groups induce different partitions of the observations. 
Based on feature-specific Gaussian kernel similarity matrices, we propose a local spectral clustering (LSC) framework that jointly estimates feature groups and their associated clustering structures through a clustering-matrix optimization criterion. 
As a result, within each feature group, the corresponding features share a common sample clustering structure, although their cluster centers, noise levels, and marginal distributions may vary. The formulation also permits non-informative features that are unrelated to any meaningful sample partition, which is important in high-dimensional applications where only a subset of the measured features may contain clustering information.

A key advantage of the proposed framework is that it enables a comprehensive theoretical analysis of local clustering. We first derive non-asymptotic estimation error bounds for an oracle estimator that knows the true feature grouping structure. 
We then show that the oracle estimator can be recovered as a local maximizer of the proposed objective function with high probability, leading to the asymptotic recovery of the true feature grouping structure. Finally, by combining spectral perturbation arguments with approximate $K$-means theory, we derive uniform upper bounds for the misclassification errors of the estimated sample partitions. 
These results simultaneously control the clustering errors across multiple feature groups and reveal how uncertainty in the estimated similarity matrices propagates through spectral embedding and subsequent cluster assignment. Together, the theoretical results provide recovery guarantees for both levels of the local clustering problem: the grouping of features and the clustering of observations within each feature group.
To the best of our knowledge, these results provide the first frequentist understanding of the local clustering under heterogeneous feature-specific clustering structures.

Our major contributions are fourfold:
\begin{itemize}
    \item First, we introduce a new statistical formulation of local clustering based on shared sample partitions. Unlike existing approaches that define local clusters through latent model parameters, our framework treats clustering matrices as the primary statistical objects and formulates local clustering as a feature grouping problem.
    \item Second, we propose a computationally efficient optimization framework that simultaneously identifies informative feature groups and estimates their associated clustering structures. The proposed method does not require likelihood maximization or Bayesian posterior computation and is scalable to high-dimensional settings.
    \item Third, we establish a unified theoretical framework for local clustering, which characterizes the recovery of group-specific clustering structures, the asymptotic recovery of feature groups, and the consistency of local spectral clustering under heterogeneous sub-Gaussian mixture models. The results provide explicit signal-to-noise conditions under which heterogeneous clustering structures can be successfully recovered.
    \item Fourth, we apply our local spectral clustering method to a real data and it reveals that a group of AML patients may benefit from one available treatment over the other, which seems to be an interesting finding.
\end{itemize}

The remainder of the paper is organized as follows. Section 2 introduces the model formulation, clustering-matrix representation, and local spectral clustering procedure. Section 3 presents the theoretical properties of the proposed method, including oracle recovery, feature-group recovery, and clustering consistency. Section 4 investigates the finite-sample performance through simulation studies. Section 5 illustrates the proposed framework using real data examples. Section 6 concludes with a discussion.

\section{Methodology}\label{sec2}
\subsection{Notation}
For any positive integer $n$, define $[n]=\{1,\ldots,n\}$.
For two sets $\mathcal A$ and $\mathcal B$, denote by $\mathcal A\setminus \mathcal B$ the set difference $\{x:x\in \mathcal A,x\notin \mathcal B\}$.
Given an index set $\mathcal S\subseteq[n]$, let $|\mathcal S|$ be the cardinality of $\mathcal S$.
For any vector $\mathbf a=(a_1,\dots,a_n)^\top$, let $\mathbf a_{\mathcal S}$ denote a vector of length $|\mathcal S|$ whose components are $\{a_i:i\in\mathcal S\}$. Meanwhile, its $\ell_2$ norm is defined by $\|\mathbf a\|_2=\sqrt{\sum_{j=1}^pa_j^2}$. 
For any matrix $\mathbf A\in\mathbb R^{n\times n}$, let $\mathrm{Tr}(\mathbf A)$ be its trace and $A_{ij}$ be its $(i,j)$-th entry. In addition, we define its $\ell_2$ norm, $\ell_1$ norm, sup-norm, and Frobenius norm as $\|\mathbf A\|_2=\sup_{\mathbf v\in\mathbb R^n,\|\mathbf v\|_2=1}\|\mathbf A\mathbf v\|_2$, $\|\mathbf A\|_1=\sum_{i=1}^n\sum_{j=1}^n|A_{ij}|$, $\|\mathbf A\|_{\max}=\max_{i,j\in[n]}|A_{ij}|$, and $\|\mathbf A\|_F=\sqrt{\sum_{i=1}^n\sum_{j=1}^nA_{ij}^2}$, respectively. For another matrix $\mathbf B\in\mathbb R^{n\times n}$, the inner product between $\mathbf A$ and $\mathbf B$ is denoted by $\langle\mathbf A,\mathbf B\rangle:=\mathrm{Tr}(\mathbf A^\top\mathbf B)=\sum_{i=1}^n\sum_{j=1}^nA_{ij}B_{ij}$. We represent the $n\times n$ matrix of all ones by $\mathbf E_n$ and denote $\mathbf I_s$ as the $s$-dimensional identity matrix. 
For two sequences of real numbers $\{a_n\}$ and $\{b_n\}$, we use $a_n\gg b_n$ (or $b_n\ll a_n$) if $b_n/a_n=o(1)$, and $a_n\asymp b_n$ if $a_n$ is of the same order as $b_n$.
For a centered random variable $X\in\mathbb R$, we say $X$ is a sub-Gaussian random variable with sub-Gaussian norm $\sigma_x>0$ if $\mathbb E(\exp\{\lambda X\})\le\exp\{\sigma_x^2\lambda^2/2\}$ for any $\lambda\in\mathbb R$.

\subsection{Model setting}

Let $\mathbf X=(\mathbf x_1,\dots,\mathbf x_n)^\top \in \mathbb{R}^{n\times p}$ denote the observed data matrix, where $\mathbf x_i=(x_{i1},\dots,x_{ip})^\top$.
We assume that the $p$ features can be partitioned into $G$ informative groups and one non-informative group. For each informative group $g\in[G]$, the features within the group induce a common clustering structure on the samples, with $M_g\ge 2$ clusters. In contrast, the non-informative group does not exhibit any clustering structure, corresponding to a single cluster.
Let $\mathcal A^*\subset[p]$ index the informative features, and let ${\mathcal A_1^*,\dots,\mathcal A_G^*}$ be a partition of $\mathcal A^*$, where each $\mathcal A_g^*$ corresponds to the feature indices in the $g$-th informative group. The complement $\mathcal A_{G+1}^*=[p]\setminus \mathcal A^*$ represents the non-informative feature group.

The observations are generated from $G$ sub-Gaussian mixture models (SGMMs) and one sub-Gaussian model (SGM). For each $g\in [G]$, the $g$-th SGMM consists of $M_g$ sub-Gaussian components, denoted by ${\mathcal D_{m}^{(g)}: m\in [M_g]}$.
For the $g$-th SGMM, let $\bm\pi^{(g)}=(\pi_{1}^{(g)},\dots,\pi_{M_g}^{(g)})^\top$ be the mixing probability vector, where $\sum_{m=1}^{M_g}\pi_{m}^{(g)}=1$ and $\pi_{m}^{(g)}>0$ for all $m\in [M_g]$. Let $\bm\mu_{m}^{(g)}=(\mu_{m,1}^{(g)},\dots,\mu_{m,s_g}^{(g)})^\top \in \mathbb{R}^{s_g}$ denote the mean vector of the $m$-th component, where $s_g = |\mathcal A_g^*|$.
For each observation $i\in [n]$, let $\psi_i^{(g)}\in [M_g]$ denote its latent cluster label under the $g$-th SGMM, which is drawn from a multinomial distribution with parameter $\bm\pi^{(g)}$. Accordingly, for each $g\in[G]$, define the group-specific sample partition $\mathcal G_{g,1}^*,\dots,\mathcal G_{g,M_g}^*$ of $[n]$, where $\mathcal G_{g,m}^* = \{i\in [n]: \psi_i^{(g)} = m\}$ and $n_m^{(g)} = |\mathcal G_{g,m}^*|$ is the size of the $m$-th cluster.
The corresponding true membership matrix is given by $\mathbf Z_g^* \in \{0,1\}^{n\times M_g}$, where $Z_{g,im}^* = 1$ if and only if $i\in \mathcal G_{g,m}^*$. Without loss of generality, assume that observations from the same cluster are indexed together. Then, under the $g$-th informative feature group, the associated clustering matrix $\mathbf P_g^* = \mathbf Z_g^* \mathbf Z_g^{*\top}$ is block diagonal, with $P_{g,ii'}^* = 1$ if $i$ and $i'$ belong to the same cluster, and $0$ otherwise.
In contrast, for the non-informative feature group (i.e., the $(G+1)$-th group), all observations are homogeneous. In this case, $\mathbf Z_{G+1}^*$ reduces to an $n$-dimensional vector of ones, and consequently $\mathbf P_{G+1}^* = \mathbf E_n$.

\begin{figure}[H]
\centering
\includegraphics[height =8.0cm]{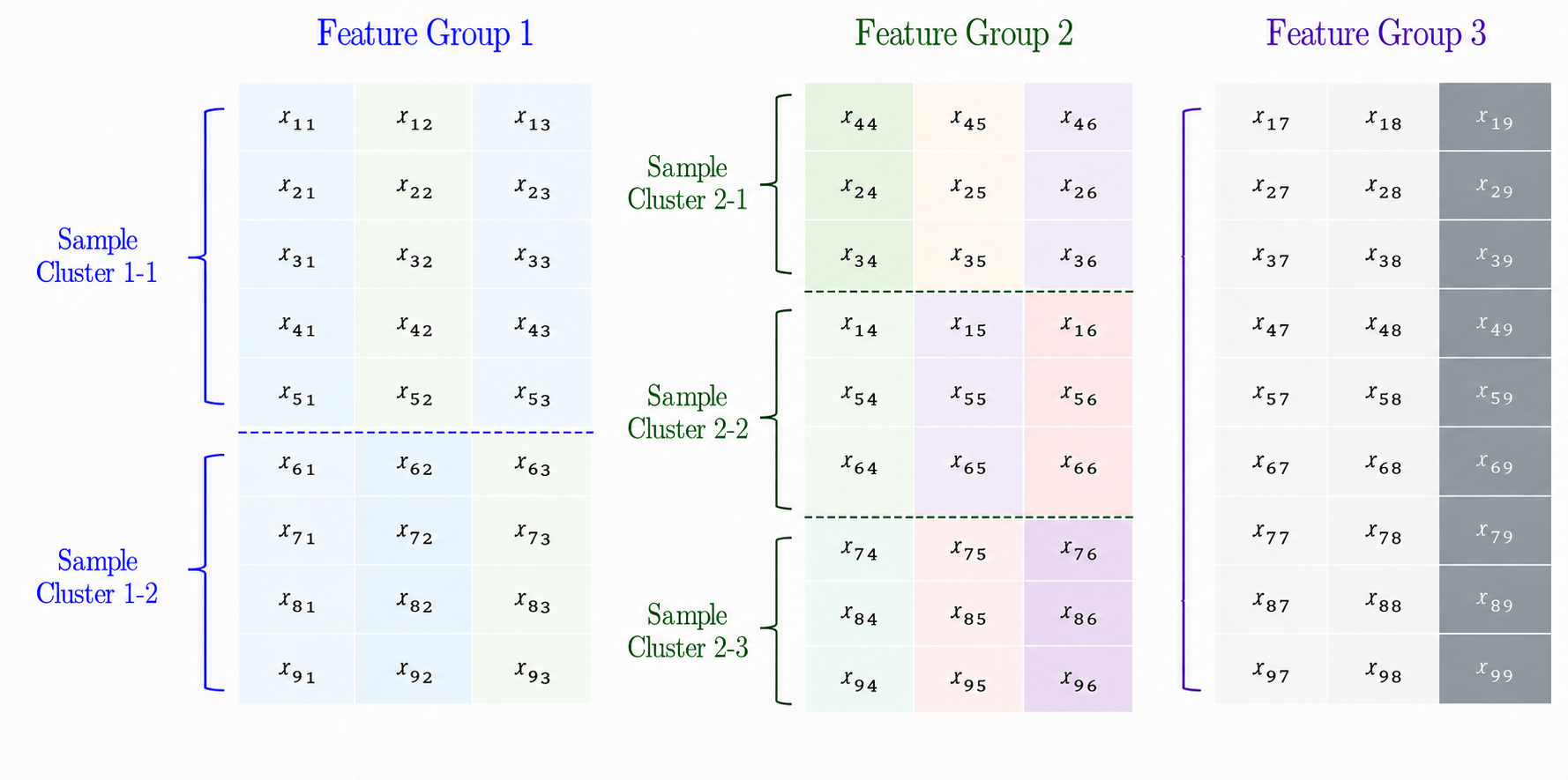}
\caption{{\color{red} Data generating structure with group heterogeneity.
}}
\label{clustermap}
\end{figure}

Figure \ref{clustermap} illustrates the group-heterogeneous structure considered in this paper. The features are partitioned into several latent feature groups, where features within the same group induce a common partition of the samples, while different groups correspond to distinct clustering structures. 
For example, Features 1–3 partition the samples into two clusters, whereas Features 4–6 partition the same samples into three clusters under a different latent mechanism. In contrast, Features 7–9 do not exhibit any meaningful clustering pattern and can be viewed as non-informative features. This example highlights a key challenge in modern high-dimensional data analysis: different subsets of features may capture different underlying mechanisms and therefore generate heterogeneous sample partitions. Consequently, the objective of local clustering is not to recover a single global clustering structure, but rather to identify groups of features that share common sample partitions and to recover the corresponding local clustering structures.

Based on the above SGMM formulation, for any $g\in[G]$, conditional on $Z_{g,im}^* = 1$, the $i$-th observation restricted to the $g$-th feature group, $\mathbf x_{i,\mathcal A_g^*}$, is generated from the $m$-th component distribution $\mathcal D_m^{(g)}$, i.e.,
\[
\mathbf x_{i,\mathcal A_g^*}:=\bm\mu_{m}^{(g)}+\bm\epsilon_{i,\mathcal A_g^*}.
\]
Here, $\bm\epsilon_i = (\epsilon_{i1},\dots,\epsilon_{ip})^\top$, where $\{\epsilon_{ij}\}_{j=1}^p$ are mean-zero sub-Gaussian random variables with sub-Gaussian norm $\sigma$.
For the non-informative feature group, the observations are generated from a single sub-Gaussian distribution. Specifically,
\[
\mathbf x_{i,\mathcal A_{G+1}^*}:=\bm\mu^{(G+1)}+\bm\epsilon_{i,\mathcal A_{G+1}^*},
\]
where $\bm\mu^{(G+1)}$ is the common mean shared by all $n$ observations.
Given a feature partition ${\mathcal A_1,\dots,\mathcal A_{G+1}}$, let $g_j$ denote the group label for feature $j\in[p]$, i.e., $j\in\mathcal A_{g_j}$. 
Let $\mathbf P^* = (\mathbf P_1^*,\dots,\mathbf P_G^*)$ collect the true clustering matrices associated with the $G$ informative feature groups. For each $g\in [G]$, the matrix $\mathbf P_g^*$ is unknown and needs to be estimated according to the group-specific sample clustering structure.
Note that, once the informative groups ${\mathcal A_1,\dots,\mathcal A_G}$ are identified, the non-informative group $\mathcal A_{G+1}$ is determined accordingly.

\subsection{Local spectral clustering}

To incorporate the sample clustering structure into feature clustering, we construct, for each feature $j\in[p]$, a similarity matrix that captures pairwise proximity among observations. In this paper, we adopt a Gaussian kernel to define the similarity matrices, which serve as the basis for the objective function, the computational algorithm, and the subsequent theoretical analysis. Specifically, for each feature $j\in[p]$, let $\mathbf K_j\in[0,1]^{n\times n}$ denote the corresponding similarity matrix, whose $(i,i')$-th entry is defined as $K_{j,ii'}=\exp\Big\{-\frac{(x_{i,j}-x_{i',j})^2}{2\theta_j^2}\Big\}$, where $\theta_j$ is a feature-specific scaling parameter for feature $j$ (The choice of parameters $\{\theta_j\}_{j=1}^p$ will be discussed in detail in Section 2.4).
Then, given the number of informative feature groups $G$, solving \eqref{obj} (defined later) yields $(G+1)$ clustering matrices $\{\widehat{\mathbf P}_1,\dots,\widehat{\mathbf P}_G,\mathbf E_n\}$ along with the corresponding feature partition $\{\widehat{\mathcal A}_1,\dots,\widehat{\mathcal A}_{G+1}\}$. Lastly, we apply the reduced-rank spectral clustering algorithm \citep{zhou2019analysis,amini2021concentration} to each $\widehat{\mathbf P}_g$, obtaining $G$ group-specific clustering structures. The overall procedure of local spectral clustering, used to construct $G$ distinct feature groups and their associated sample clustering structures, is summarized as follows:

\begin{enumerate}[\it Step 1.]
    \item Construct Gaussian kernel matrices $\{\mathbf K_j;j\in[p]\}$, where $K_{j,ii'}=\exp\left\{-\frac{(x_{i,j}-x_{i',j})^2}{2\theta_j^2}\right\}$.
    
    \item Solve \eqref{obj} to obtain $G$ center clustering matrix estimators $\{\widehat{\mathbf P}_g,g\in[G]\}$ and the associated feature group indices $\{\widehat{\mathcal A}_g,g\in[G]\}$. The non-informative feature group index $\widehat{\mathcal A}_{G+1}$ is then determined accordingly.
    
    \item For each $g\in[G]$, calculate the rank-$M_g$ truncated eigenvalue decomposition (EVD) of $\widehat{\mathbf P}_g$, 
    \[\widehat{\mathbf P}_{g}^{(M_g)}=\widehat{\mathbf U}_g^{(M_g)}\widehat{\bm\Lambda}_g^{(M_g)}\widehat{\mathbf U}_g^{(M_g)\top}.
    \]
    Specifically, let $\widehat{\mathbf P}_g=\widehat{\mathbf U}_g\widehat{\bm\Lambda}_g\widehat{\mathbf U}_g^{\top}$ be the full EVD of $\widehat{\mathbf P}_g$, where $\widehat{\bm\Lambda}_g=\text{diag}(\widehat\lambda_1,\dots,\widehat\lambda_n)$ with $|\widehat\lambda_1|\ge\cdots\ge|\widehat\lambda_n|$.
    Then $\widehat{\bm\Lambda}_g^{(M_g)}=\text{diag}(\widehat\lambda_1,\dots,\widehat\lambda_{M_g})$, and $\widehat{\mathbf U}_g^{(M_g)}\in\mathbb R^{n\times M_g}$ consists of the corresponding $M_g$ leading eigenvectors.
    
    \item For each $g\in[G]$, apply a constant-factor $K$-means algorithm to the rows of $\widehat{\mathbf U}_g^{(M_g)}\widehat{\bm\Lambda}_g^{(M_g)}$ to obtain the group-specific sample membership matrices $\{\widehat{\mathbf Z}_g:g\in[G]\}$.
\end{enumerate}

In Step 2, we consider the following objective function to simultaneously identify the informative feature groups and estimate their corresponding clustering matrices:
\begin{equation}\label{obj}
\begin{aligned}
& \left(\widehat{\mathbf P},\widehat{\mathbf g}\right):= \arg\max\left\{ \sum_{j=1}^p\left\langle\mathbf K_j-\tau\mathbf E_n,\mathbf P_{g_j}\right\rangle\right\}\\
\text{subject to}\quad&  0\le P_{g,ii'}\le 1 \quad \forall\ g\in[G],\  i,i'\in[n],\quad \text{and}\quad \mathbf P_{G+1}=\mathbf E_n,\\
\end{aligned}
\end{equation}
where $\mathbf P=(\mathbf P_1,\dots,\mathbf P_G)$, and $\tau\in(0,1)$ is a tuning parameter (The choice of tuning parameter $\tau$ will be discussed in detail in Section 2.4). 
Inspired by the classical Lloyd's algorithm \citep{macqueen1967some} for $K$-means clustering, we develop Algorithm 1 to solve \eqref{obj} in Step 2 of LSC. The procedure is described as follows.
\begin{center}\label{algo1}
    \textbf{Algorithm 1.}
\end{center}
\begin{enumerate}[(1)]
    \item Initialize a feature partition $\widehat{\mathcal A}^{(0)}=\{\widehat{\mathcal A}_1^{(0)},\dots,\widehat{\mathcal A}_{G+1}^{(0)}\}$ based on  clustering the $p\times n^2$ matrix $\mathbf K=(\mathrm{vec}(\mathbf K_1),\dots,\mathrm{vec}(\mathbf K_p))^\top$. Assign the group label $\widehat g_j^{(0)}=g$ if $j\in\widehat{\mathcal A}_g^{(0)}$ for $g\in[G+1]$.
    \item Given $\widehat{\mathcal A}^{(0)}$, for each $g\in[G+1]$, compute
    \[
    \widehat{\mathbf P}_g^{(0)}:=\ \arg\max_{\mathbf P_g}\quad \left\langle\left({|\widehat{\mathcal A}_g^{(0)}|^{-1}}\textstyle\sum_{j\in\widehat{\mathcal A}_g^{(0)}}\mathbf K_j\right)-\tau\mathbf E_n,\mathbf P_g\right\rangle.
    \]
    \item Given $\{\widehat{\mathbf P}_1^{(0)},\dots,\widehat{\mathbf P}_{G+1}^{(0)}\}$, compute
    \[
    \widetilde g:=\ \arg\min_{1\le g\le G+1}\quad \left\|\widehat{\mathbf P}_g^{(0)}-\mathbf E_n\right\|_F^2,
    \]
    and set $\widehat{\mathbf P}_{\widetilde g}^{(0)}=\mathbf E_n$. Without loss of generality, we can swap labels $\widetilde g$ and $G+1$, that is, let $\widehat g_j^{(0)}=\widetilde g$ if $j\in\widehat{\mathcal A}_{G+1}^{(0)}$ and $\widehat g_j^{(0)}=G+1$ if $j\in\widehat{\mathcal A}_{\widetilde g}^{(0)}$. This ensures that $\widehat{\mathbf P}_{G+1}^{(0)}=\mathbf E_n$.
    \item For iteration $w\ge 1$, repeat the following steps until the feature membership estimates do not change:
    \begin{enumerate}[(a)]
    \item Given $\{\widehat{\mathbf P}_1^{(w-1)},\dots,\widehat{\mathbf P}_G^{(w-1)},\widehat{\mathbf P}_{G+1}^{(w-1)}=\mathbf E_n\}$, update
    \[
    \widehat g_j^{(w)}:=\ \arg\max_{g\in[G+1]}\quad \left\langle\mathbf K_j-\tau\mathbf E_n,\widehat{\mathbf P}_g^{(w-1)}\right\rangle.
    \]
    \item Given $\{\widehat g_1^{(w)},\dots, \widehat g_p^{(w)}\}$, form the partition $\widehat{\mathcal A}^{(w)}=\{\widehat{\mathcal A}_1^{(w)},\dots,\widehat{\mathcal A}_{G+1}^{(w)}\}$. For each $g\in[G]$, compute
    \[
    \widehat{\mathbf P}_g^{(w)}:=\ \arg\max_{\mathbf P_g}\quad \left\langle\left({|\widehat{\mathcal A}_g^{(w)}|^{-1}}\textstyle\sum_{j\in\widehat{\mathcal A}_g^{(w)}}\mathbf K_j\right)-\tau\mathbf E_n,\mathbf P_g\right\rangle.
    \]
    \end{enumerate}
\end{enumerate}

\begin{remark}\label{obj_interpretation}
To gain insight into the objective function \eqref{obj}, consider the simplified case with $p=1$, under which \eqref{obj} reduces to
\begin{equation}\label{obj_example}
\widehat{\mathbf Q}_1:= \arg\max\left\langle\mathbf K_1-\tau\mathbf E_n,\mathbf Q_1\right\rangle.
\end{equation}
It can be readily verified that \eqref{obj_example} admits a closed-form solution given by
\[
\widehat{Q}_{1,ii'}=\left\{
\begin{aligned}
&1\quad\mathrm{if}\quad K_{1,ii'}-\tau>0, \\
&0\quad\mathrm{otherwise}. \\
\end{aligned}
\right.
\]
Under an appropriate choice of the scaling parameter $\theta_1$, the similarity values satisfy a clear separation:
for any $i,i'$ from the same cluster and any $i,i''$ from different clusters, $K_{1,ii'}$ is close to 1 while $K_{1,ii''}$ is close to 0, provided that the distance between any two cluster centers (i.e., the signal-to-noise ratio) is sufficiently large. 
In this case, there exists $0<\tau<1$ such that $K_{1,ii'}>\tau>K_{1,ii''}$, which ensures that $\widehat{\mathbf Q}_1$ exactly recovers the true clustering matrix. Extending this argument, given the true feature partition, one can guarantee that $\widehat{\mathbf P}_g=\mathbf P_{g}^*$ for all $g\in[G]$ as long as the average signal-to-noise ratio within each informative group is sufficiently large. Furthermore, if the matrices $\{\mathbf P_g^{*}\}_{g\in[G]}$ are well separated, the corresponding group-specific sample partitions can be consistently recovered with high probability, for example via a subsequent $K$-means procedure.
\end{remark}

\begin{remark}\label{rem_ini}
Algorithm 1 concludes convergence once the feature memberships at iterations $w$ and $(w+1)$ coincide. As an adaptation of Lloyd's algorithm, it is susceptible to convergence to local optima. 
To mitigate this issue, \cite{arthur2007k} proposed the $K$-means++ initialization, which selects good starting points and improves both convergence speed and clustering accuracy when followed by Lloyd’s iterations. 
Directly applying the $K$-means++ algorithm in Step (1), however, is computationally intensive due to the $n^2$ dimensionality of $\mathbf K$. To address this issue, inspired by \cite{chan2017efficient}, we combine $K$-means++ with random projection to obtain an efficient and reliable initialization. To further reduce sensitivity to initialization, we run the algorithm with 20 such initializations and select the solution that achieves the largest objective function value of \eqref{obj}.
\end{remark}

\begin{remark}\label{rem_approx}
In Step 4, a natural choice for the constant-factor $K$-means algorithm is the $(1+\omega)$-approximate $K$-means algorithm \citep{kumar2004simple}. Another common alternative is the $K$-means++ algorithm \citep{arthur2007k}, which admits a $(1+\log M_g)$-approximation guarantee. Owing to its theoretical guarantee and widespread use \citep{zhou2019analysis,liu2023clustering}, we adopt the $K$-means++ algorithm in our implementation.
\end{remark}

\subsection{Tuning parameter selection}
In this subsection, we provide a detailed discussion of the selection of the tuning parameters $\{\theta_j\}_{j=1}^p$, $\tau$, and $\{M_g\}_{g=1}^G$.
\begin{itemize}
\item \textbf{Choice of  $\theta_j$}.
Inspired by \cite{yan2021covariate}, we adopt the data-driven procedure of \cite{shi2009data} to select the scale parameter $\theta_j$. The key idea is to ensure that a sufficient proportion (say $\beta\times 100\%$) of pairwise distances falls within the effective range of the kernel function for most (say $\alpha\times 100\%$) data points. Under this principle, each $\theta_j$ is determined by
\[
\theta_j=\frac{\alpha\ \text{quantile of}\ \{q_{1j},\dots,q_{nj}\}}{\sqrt{\alpha\ \text{quantile of}\ \chi_1^2}},
\]
where $\chi_1^2$ denotes a Chi-squared distribution with degree of freedom 1, and $q_{ij}$ is defined as the $\beta$ quantile of $|x_{ij}-x_{lj}| (l\in[n])$ for feature $j$.

\item \textbf{Choice of  $\tau$}.
From the objective function \eqref{obj}, it is clear that $\tau$ acts as a threshold separating within-cluster and between-cluster observations. Motivated by \cite{srivastava2023robust}, the optimal $\tau$ is determined by
\[
\tau=\exp\left\{-\frac{t_{1-\alpha}}{2}\right\},
\]
where $t_{1-\alpha}$ is the $\alpha$ quantile of the $\chi_1^2$ distribution. This choice is obtained by setting the distance in the Gaussian kernel to the $\alpha$ quantile of $\{q_{1j},\dots,q_{nj}\}$ for each $j\in[p]$.

\item \textbf{Choice of  $M_g$}. 
In the proposed approach, we assume that the number of informative feature clusters $G$ is known in advance. In practice, such as in the exploratory clustering of biological omics data, the number of clustering structures is often determined based on relevant prior knowledge about the data.
After constructing the center matrix estimators $\widehat{\mathbf P}_g$, we need to select an appropriate $M_g$ to implement Step 3 in Algorithm 1 to obtain a nested sample partition within the $g$-th cluster. Specifically, we can use the eigen-gap heuristic \citep{von2007tutorial} based on the normalized Laplacian matrix $\mathbf L:=\mathbf I_n-\mathbf D_g^{-1/2}\widehat{\mathbf P}_g\mathbf D_g^{-1/2}$, where $\mathbf D_g=\text{diag}(\widehat{\mathbf P}_g\bm 1_n)$ is the $n$-dimensional diagonal matrix. It is easy to show $\mathbf L$ is positive semi-definite, and we can choose the optimal $\widehat M_g$ by
\[
\widehat M_g:=\ \arg\max_{r}\ \kappa_{r+1}(\mathbf L)-\kappa_{r}(\mathbf L),
\]
where $\kappa_r(\mathbf L)$ is the $r$-th smallest eigenvalue of $\mathbf L$.
\end{itemize}
In the subsequent simulations and real data analysis, we set $\alpha=0.9$ and $\beta=0.2$, which lead to stable and favorable numerical performance.

\section{Theory}\label{sec3}
In this section, we first establish the non-asymptotic upper bounds of the estimation error for the group-oracle estimators (defined later) and discuss different signal-to-noise conditions for the recovery of the sample cluster structures. Afterwards, we show the asymptotic recovery of feature group structure by verifying that the group-oracle estimator is exactly the local maximizer of \eqref{obj} with high probability. Finally, we turn to the group-specific sample clustering structures and further establish the uniform non-asymptotic upper bound of the misclassification error for the LSC estimator.

To define the recovery of the clustering structure for samples, we first introduce some definitions and notations from \cite{sun2021convex} and \cite{yagishita2024pursuit}.

\begin{defin}[Recovery of cluster structure]\label{def_structure}
    For any $g\in[G]$, let $\overline{\mathcal G}_g^*:=\{\overline{\mathcal G}_{g,1}^*,\dots,\overline{\mathcal G}_{g,L_g}^*\}$ with $L_g\le M_g$ be another partition of $[n]$. We call $\overline{\mathcal G}_g^*$ a coarsening of $\mathcal G_g^*$ if for any $\overline{\mathcal G}_{g,l}^*\in\overline{\mathcal G}_g^*$ there exists $\mathcal T_{l}\subset[M_g]$ such that $\overline{\mathcal G}_{g,l}^*=\cup_{m_g\in\mathcal T_{l}}\mathcal G_{g,m_g}^*$.
    \begin{enumerate}[(a)]
        \item When $\overline{\mathcal G}_g^*=\mathcal G_g^*$, we say that $\overline{\mathcal G}_g^*$ perfectly recovers $\mathcal G_g^*$.
        \item  Moreover, $\overline{\mathcal G}_g^*$ is called the trivial coarsening if $\overline{\mathcal G}_g^*=\{[n]\}$. Otherwise, it is called a strict non-trivial coarsening. 
     
    \end{enumerate}
\end{defin}
For each $j\in\mathcal A_g^*$ and $m,m'\in[M_g]$, the distance between any pair of clusters ${\mathcal G}_{g,m}^{*}$ and ${\mathcal G}_{g,m'}^{*}$ is defined by $d_{mm',j}^{(g)}=\big|\mu_{m,j}^{(g)}-\mu_{m',j}^{(g)}\big|$. For each $j\in\mathcal A_g^*$ and $l\in[L_g]$, we denote $d_{l,j}^{(g)}=\max_{m\ne m'\in\mathcal T_l}d_{mm',j}^{(g)}$ with $\mathcal T_l\subset [M_g]$.
Besides, for each $j\in\mathcal A_g^*$ and $l\ne l'\in[L_g]$, we denote $d_{ll',j}^{(g)}=\min_{m\in\mathcal T_l,m'\in\mathcal T_{l'}}d_{mm',j}^{(g)}$ with $\mathcal T_l,\mathcal T_{l'}\subset [M_g]$.
Accordingly, for each $j\in\mathcal A_g^*$, let $d_{\min,j}^{(g)}=\min_{l\ne l'\in[L_g]}d_{ll',j}^{(g)}$ and $d_{\max,j}^{(g)}=\max_{l\in[L_g]}d_{l,j}^{(g)}$. Then, the minimal signal-to-noise ratio (SNR) for $j\in\mathcal A_g^*$ is defined by $\mathrm{SNR}_{\min,j}^{(g)}:=d_{\min,j}^{(g)}/\sigma$.

\begin{defin}[Oracle group estimator]\label{def_oracle}
    Given the true group partition $\{\mathcal A_1^*,\dots,\mathcal A_{G+1}^*\}$, the oracle group estimator $\widehat{\mathbf P}^{\mathrm{or}}=(\widehat{\mathbf P}^{\mathrm{or}}_1,\dots,\widehat{\mathbf P}^{\mathrm{or}}_G)$ is defined as
    \[
    \widehat{\mathbf P}^{\mathrm{or}}:=\arg\max_{\mathbf P\in[0,1]^{n\times nG}}\sum_{g=1}^G\left\{ \sum_{j\in\mathcal A_g^*}\left\langle\mathbf K_j-\tau\mathbf E_n,\mathbf P_{g}\right\rangle\right\}.
    \]
    More specifically, let ${\mathbf K}_g^{\mathrm{or}}=|{\mathcal A}_g^*|^{-1}\sum_{j\in{\mathcal A}_g^{*}}{\mathbf K}_j$ for $g\in[G]$. Then, the oracle group estimator $\widehat{\mathbf P}^{\mathrm{or}}$ can be defined equivalently by
\begin{equation}\label{ora_obj}
\widehat{\mathbf P}_g^{\mathrm{or}}:=\arg\max_{\mathbf P_g\in[0,1]^{n\times n}} \left\langle\mathbf K_g^{\mathrm{or}}-\tau\mathbf E_n,\mathbf P_{g}\right\rangle\quad\text{for}\quad g\in[G].
\end{equation}
\end{defin}

\subsection{Error analysis of the oracle estimator}
In our theoretical analysis, we let $M_{\max}=\max_{g\in[G]}M_g$ and assume that both $G$ and $M_{\max}$ are finite, and $\epsilon_{ij}$ are independent across observations ($i$) and features ($j$).

\begin{condition}\label{con_snr1}
For the $g$-th ($g\in[G]$) group, we let $d_g=s_g^{-1}\sum_{j\in\mathcal A_g^*}d_{\min,j}^{(g)}$ and $\theta_j=\theta_{j'}=\vartheta_g:=\kappa_{2,g}d_g$ for any $j,j'\in\mathcal A_g^*$, where $\kappa_{2,g}>0$ is a constant. 
Given a constant $0<\kappa_{1,g}<1$, let 
\[
\begin{aligned}
&\xi_g^{\mathrm{in}}=\frac{1-\kappa_{1,g}}{s_g}\times\sum_{j\in\mathcal A_g^*}\exp\left\{-\frac{(d_{\max,j}^{(g)})^2+2\sigma^2}{2\vartheta_g^2}\right\}\qquad\text{and}\\
&\xi_{g}^{\mathrm{out}}=\frac{1+\kappa_{1,g}}{s_g}\times\sum_{j\in\mathcal A_g^*}\left(
\exp\left\{
-\frac{(d_{\min,j}^{(g)})^2}{8\vartheta_g^2}
\right\}+2\exp\left\{-\frac{(d_{\min,j}^{(g)})^2}{16\sigma^2}\right\}
\right).
\end{aligned}
\]
Assume that
\begin{equation}\label{snr_general}
\xi_g^{\mathrm{in}}>\xi_g^{\mathrm{out}}\quad \text{under}\quad 1<L_g\le M_g.
\end{equation}
Note that when $L_g=M_g$, $d_{\max,j}^{(g)}=0$ for any $j\in\mathcal A_g^*$.
\end{condition}
Condition \ref{con_snr1} works as a necessary condition to recover $\overline{\mathcal G}_g^{*}$, a non-trivial coarsening of $\mathcal G_g^{*}$, with high probability. Moreover, it is noteworthy that to ensure the validity of \eqref{snr_general}, it is not necessary for the signal of each feature in the $g$-th group to be sufficiently strong; rather, it suffices that the average signal across the $g$-th group is strong enough.

\begin{remark}\label{rem_snr_condition}
Condition \ref{con_snr1} holds if the SNR is sufficiently large. To see this, consider a special case in which $d_{\min,j}^{(g)}=d_{\min,j'}^{(g)}$ and $d_{\max,j}^{(g)}=d_{\max,j'}^{(g)}$ hold for any $j,j'\in\mathcal A_g^*$. Then, \eqref{snr_general} is reduced to
\begin{equation}\label{snr_reduced}
    \exp\left\{-\frac{(d_{\max,j}^{(g)}/\sigma)^2+2}{2(\vartheta_g/\sigma)^2}\right\}> \frac{1+\kappa_{1,g}}{1-\kappa_{1,g}}\cdot\left(\exp\left\{-\frac{(\mathrm{SNR}_{\min,j}^{(g)})^2}{8(\vartheta_g/\sigma)^2}\right\}+2\exp\left\{-\frac{(\mathrm{SNR}_{\min,j}^{(g)})^2}{16}\right\}\right).
\end{equation}
If we further assume that $\mathrm{SNR}_{\min,j}^{(g)}=c$ holds for a constant $c>0$, \eqref{snr_reduced} can be rewritten as
\begin{equation}\label{snr_reduced1}
    \exp\left\{-\frac{(d_{\max,j}^{(g)}/\sigma)^2+2}{2(\kappa_{2,g}c)^2}\right\}> \frac{1+\kappa_{1,g}}{1-\kappa_{1,g}}\cdot\left(\exp\left\{-\frac{1}{8\kappa_{2,g}^2}\right\}+2\exp\left\{-\frac{c^2}{16}\right\}\right).
\end{equation}
In this case, \eqref{snr_reduced1} can be satisfied if there exists a sufficiently small constant $\kappa_{2,g}$ and a sufficiently large constant $c$, such that $c\kappa_{2,g}$ is sufficiently large.

\end{remark}

\begin{theorem}\label{thm1} 
Assume that Condition \ref{con_snr1} holds for any $g\in[G]$. 
If there exists a constant $0<\gamma<1$ such that $\max_{g\in[G]}\xi_g^{\mathrm{out}}<\gamma<\min_{g\in[G]}\xi_g^{\mathrm{in}}$, after taking $\tau=\gamma$, we have that with probability at least $1-2GM_{\max}/n_{\min}$,
\[
\sup_{g\in[G]}\left\|\widehat{\mathbf P}_g^{\mathrm{or}}-\overline{\mathbf P}_g^{*}\right\|_1\le r_n:=Cn^2\cdot\max\left\{\exp\left[-\kappa s_{\min}\right],\frac{\log n_{\min}}{n_{\min}}\right\},
\]
where $s_{\min}=\min_{g\in[G]}s_g$, $n_{\min}=\min_{g\in[G],m_g\in[M_g]}n_{m_g}^{(g)}>2GM_{\max}$, and $n_{\min}\asymp n$.
The constant $\kappa>0$ depends on $\{\kappa_{1,g},\kappa_{2,g}\}_{g=1}^G$, and $C>0$ is also a constant. Here $\overline{\mathbf P}_g^{*}$ is a clustering matrix corresponding to $\overline{\mathcal G}_g^{*}$ (a strict non-trivial coarsening of $\mathcal G_g^{*}$).
\end{theorem}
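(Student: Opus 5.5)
The oracle problem \eqref{ora_obj} is linear in $\mathbf P_g$ over a box, so it has an explicit entrywise solution. Exactly as in Remark \ref{obj_interpretation}, $\widehat P^{\mathrm{or}}_{g,ii'}=\mathbbm 1\{K^{\mathrm{or}}_{g,ii'}>\tau\}$. Hence
\[
\|\widehat{\mathbf P}^{\mathrm{or}}_g-\overline{\mathbf P}^*_g\|_1=\#\{(i,i'):\overline P^*_{g,ii'}=1,\ K^{\mathrm{or}}_{g,ii'}\le\gamma\}+\#\{(i,i'):\overline P^*_{g,ii'}=0,\ K^{\mathrm{or}}_{g,ii'}>\gamma\}.
\]
The plan is to bound these two counts separately, for each $g$, and then take a union bound over $g\in[G]$. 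Everything reduces to controlling the averaged kernel $K^{\mathrm{or}}_{g,ii'}=s_g^{-1}\sum_{j\in\mathcal A_g^*}K_{j,ii'}$ around $s_g^{-1}\sum_j\mathbb E K_{j,ii'}$. The thresholds $\xi_g^{\mathrm{in}}$ and $\xi_g^{\mathrm{out}}$ play the role of $(1\mp\kappa_{1,g})$ times lower and upper bounds on these means.

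\emph{Step 1 (means).} Take a within-coarsened-cluster pair, with $i\in\mathcal G^*_{g,m}$, $i'\in\mathcal G^*_{g,m'}$ and $m,m'\in\mathcal T_l$. Write $x_{ij}-x_{i'j}=\Delta_j+\eta$, where $|\Delta_j|\le d^{(g)}_{\max,j}$ and $\eta=\epsilon_{ij}-\epsilon_{i'j}$. Apply Jensen to the convex map $t\mapsto e^{-t}$:
\[
\mathbb E K_{j,ii'}\ge\exp\bigl\{-\mathbb E(\Delta_j+\eta)^2/(2\vartheta_g^2)\bigr\}\ge\exp\bigl\{-((d^{(g)}_{\max,j})^2+2\sigma^2)/(2\vartheta_g^2)\bigr\},
\]
using $\mathrm{Var}(\eta)\le2\sigma^2$ for sub-Gaussian noise. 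Now take a between pair, with $|\Delta_j|\ge d^{(g)}_{\min,j}$. On the event $|\eta|\le|\Delta_j|/2$ the kernel is at most $\exp\{-(d^{(g)}_{\min,j})^2/(8\vartheta_g^2)\}$. The complementary event has probability at most $2\exp\{-(d^{(g)}_{\min,j})^2/(16\sigma^2)\}$ by the sub-Gaussian tail of $\eta$, which has variance proxy $2\sigma^2$. Since $K\le1$, this gives exactly the summands of $\xi^{\mathrm{out}}_g$.

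\emph{Step 2 (concentration over features).} For a fixed pair, the $K_{j,ii'}$, $j\in\mathcal A_g^*$, are independent, bounded in $[0,1]$, and independent across $j$. Hoeffding then gives
\[
P\Bigl(\bigl|K^{\mathrm{or}}_{g,ii'}-s_g^{-1}\textstyle\sum_j\mathbb E K_{j,ii'}\bigr|>t\Bigr)\le2e^{-2s_gt^2}.
\]
Choose $t$ proportional to the gap, which is of order $\kappa_{1,g}$ times the mean (the $(1\pm\kappa_{1,g})$ slack) and also covers $\gamma$ strictly between the thresholds. The misclassification probability of each pair is then at most $e^{-\kappa s_g}$ for some $\kappa$ depending on $\kappa_{1,g},\kappa_{2,g}$. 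It follows that the expected count is at most $n^2e^{-\kappa s_{\min}}$.

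\emph{Step 3 (high probability over samples).} This is the main obstacle. The $n^2$ pair indicators are dependent, since pairs sharing a sample share noise, so a plain Markov bound only gives constant probability. To handle this, condition on the cluster labels and organise the pairs by blocks $(m,m')$. For fixed $i$, the indicators over $i'$ in a block are conditionally independent given $\mathbf x_i$. One can therefore apply a Bernstein/Chernoff bound row by row: the error count in row $i$ within block $(m,m')$ is at most $n_{m'}\max\{e^{-\kappa s_g},C\log n_{\min}/n_{\min}\}$, except with probability at most about $n_{\min}^{-2}$. This is where the $\log n_{\min}/n_{\min}$ term enters. A union bound over the $n$ rows, the $M_g^2$ blocks and the $G$ groups then yields a failure probability of the stated order $2GM_{\max}/n_{\min}$, using $n_{\min}\asymp n$ and $n_{\min}>2GM_{\max}$.

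The delicate points are to make the conditional mean bound of Step 1 hold uniformly given $\mathbf x_i$, or to absorb the deviation of $\epsilon_{ij}$ into the constants, and to tune the Chernoff level so that the exponents match the claimed probability. Summing the row bounds then gives $\|\widehat{\mathbf P}^{\mathrm{or}}_g-\overline{\mathbf P}^*_g\|_1\le r_n$ uniformly in $g$.
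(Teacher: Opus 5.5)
Your reduction to the explicit solution $\widehat P^{\mathrm{or}}_{g,ii'}=\mathbbm 1\{K^{\mathrm{or}}_{g,ii'}>\tau\}$ and your Steps 1--2 are fine. The reduction is simpler than the paper's route. The paper builds the reference matrix $\mathbf R_g$ of \eqref{ref2} and uses Lemmas \ref{lemma1}--\ref{lemma2} to bound the error by $(2/\rho_g)\|\mathbf K^{\mathrm{or}}_g-\mathbf R_g\|_1$. You reach the same corruption counts (slightly smaller ones, with $\gamma$ in place of $\xi^{\mathrm{in}}_g,\xi^{\mathrm{out}}_g$) without the factor $1/\rho_g$. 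Measuring the Hoeffding deviation against $\gamma$ is also the right choice: both gaps exceed the constant $\gamma\kappa_{1,g}/(1+\kappa_{1,g})$, whereas $\kappa_{1,g}$ times the between-cluster mean can be arbitrarily small. The genuine gap is Step 3, which is where the real work lies.

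\textbf{Why Step 3 fails as written.} Conditional on $\mathbf x_i$, the indicators in row $i$ are independent. However, their common success probability $q_i$ is a random function of $\bm\epsilon_{i,\mathcal A_g^*}$ and is not uniformly small. If $|\epsilon_{ij}|$ is large for most $j\in\mathcal A_g^*$, then $K^{\mathrm{or}}_{g,ii'}\approx 0$ for every $i'$ in the same cluster, so $q_i\approx 1$. The conditional mean of $K^{\mathrm{or}}_{g,ii'}$ given $\mathbf x_i$ is $s_g^{-1}\sum_{j}\phi_j(\epsilon_{ij})$, where $\phi_j(e)=\mathbb E\exp\{-(\Delta_j+e-\epsilon_{i'j})^2/(2\vartheta_g^2)\}$. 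This is an average of $s_g$ independent bounded terms, so such outlier rows occur with probability controlled only at the level $e^{-cs_g}$, not $n_{\min}^{-2}$. Consequently:
\begin{itemize}
\item your per-row claim (failure probability about $n_{\min}^{-2}$) is false in general;
\item the union bound over $n$ rows gives $ne^{-cs_g}$, which is small only if $s_g\gtrsim\log n$, an assumption the theorem does not make;
\item neither remedy you mention closes this, since a bound uniform in $\mathbf x_i$ is false and the deviation cannot be absorbed into constants uniformly over $i\in[n]$.
\end{itemize}

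\textbf{The missing step.} Count bad rows separately. Let $B_i$ be the event that, for some block, the conditional mean misses its expectation by more than half the gap. These events depend only on $\bm\epsilon_i$, are independent over $i$, and have probability at most $2M_ge^{-cs_g}$ by Hoeffding over $j$. A Chernoff bound then gives at most $Cn\max\{e^{-cs_g},\log n/n\}$ bad rows with high probability, and each contributes at most $n$ errors. On good rows, your conditional Hoeffding-plus-Chernoff argument goes through, and the union bound over rows becomes legitimate.

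\textbf{How the paper avoids this.} It applies Bernstein's inequality for one- and two-sample U-statistics \cite{arcones1995bernstein} (and (E.14) of \cite{srivastava2023robust}) directly to the block fractions $U_{g,l}$, $U_{g,ll'}$ of corrupted pairs. It takes $t=\max\{p,c\log\overline n^{(g)}_{\min}/\overline n^{(g)}_{\min}\}$ and uses that the variance is $p(1-p)\le t$. This handles the dependence among pairs sharing a sample in one step and yields the $\log n_{\min}/n_{\min}$ term.
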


Under Condition \ref{con_snr1}, if \eqref{snr_general} holds under $L_g=M_g$ for all $g\in[G]$ and $\max_{g\in[G]}\xi_g^{\mathrm{out}}<\min_{g\in[G]}\xi_g^{\mathrm{in}}$, Theorem \ref{thm1} tells us $\mathcal G_1^{*},\dots,\mathcal G_G^{*}$ can be perfectly recovered with high probability if the true feature group structure is known. 
However, under $L_g=M_g$ for all $g\in[G]$, \eqref{snr_general} does not always hold in practice. For example, if the centers of certain clusters are relatively close to each other but distant from the centers of the other clusters, \eqref{snr_general} may not hold under $L_g=M_g$ for all $g\in[G]$. But \eqref{snr_general} could be satisfied when $1<L_g<M_g$ for some $g\in\mathcal K_1\subset [G]$, and for those $g\in[G]\setminus \mathcal K_1$, \eqref{snr_general} can hold under $L_g = M_g$.
Moreover, it may not be possible to achieve $\max_{g\in[G]}\xi_g^{\mathrm{out}}<\min_{g\in[G]}\xi_g^{\mathrm{in}}$ when \eqref{snr_general} holds under $L_g=M_g$ for all $g\in[G]$. However, it could be satisfied by allowing \eqref{snr_general} to hold under $L_g<M_g$ for some $g\in\mathcal K_2\subset [G]$. Simultaneously, for those $g\in[G]\setminus \mathcal K_2$, \eqref{snr_general} can hold under $L_g = M_g$.
In such scenarios, Theorem \ref{thm1} tells us that, for $g\in\mathcal K_1$ (or $\mathcal K_2$), a strict non-trivial coarsening of $\mathcal G_g^{*}$, i.e., $\overline{\mathcal G}_g^{*}$, can be recovered without giving rise to a chaotic clustering structure with high probability for sufficiently large $n$.

\subsection{Asymptotic recovery of group structure}

For theoretical convenience, in this subsection and the subsequent subsections, we assume that Condition \ref{con_snr1} holds under $L_g=M_g$ and $\max_{g\in[G]}\xi_g^{\mathrm{out}}<\gamma<\min_{g\in[G]}\xi_g^{\mathrm{in}}$. Therefore, we can clearly show that the true feature group structure can be asymptotically recovered under true sample clustering structures. This result can be similarly adjusted for asymptotic recovery under non-trivial coarsening structures, with corresponding changes to the conditions in Theorem \ref{thm1}.
Before showing that the oracle estimator $\widehat{\mathbf P}^{\mathrm{or}}$ is a strictly local maximizer of \eqref{obj} with high probability, we first provide Condition \ref{con_snr2} and then quantify the differences between $\mathbf P_g^*$ and $\mathbf P_{g'}^*$. 

\begin{condition}\label{con_snr2}
For the $g$-th ($g\in[G+1]$) group and the $j$-th ($j\in\mathcal A_g^*$) feature,
Let $\theta_j=\kappa_jd_{\min,j}^{(g)}$ for any $j\in\mathcal A_g^*$, where $\kappa_j>0$ is a constant.
Given a constant $0<\kappa_j'<1$, let 
\[
\begin{aligned}
&\zeta_j^{\mathrm{in}}=(1-\kappa_j')\exp\left\{-\frac{\sigma^2}{\theta_j^2}\right\}
\qquad\text{and}\\
&\zeta_{j}^{\mathrm{out}}=(1+\kappa_j')\left(
\exp\left\{
-\frac{(d_{\min,j}^{(g)})^2}{8\theta_j^2}
\right\}+2\exp\left\{-\frac{(d_{\min,j}^{(g)})^2}{16\sigma^2}\right\}
\right).
\end{aligned}
\]
Assume that
\[
\zeta_j^{\text{in}}>\zeta_j^{\text{out}}\quad {for}\quad j\in\mathcal A_g^* \quad and\quad g\in[G].
\]
\end{condition}
Similar to Remark \ref{rem_snr_condition}, it can be verified that $\zeta_j^{\text{in}}>\zeta_j^{\text{out}}$ holds for sufficiently small $\kappa_j'$ and sufficiently large $\text{SNR}_{\min,j}^{(g)}$, such that $\kappa_j'\text{SNR}_{\min,j}^{(g)}$ is large enough.

Now, we quantify the differences between $\mathbf P_g^*$ and $\mathbf P_{g'}^*$. In particular, given a pair of $(g,g')$ with $g\ne g'\in[G+1]$, by suitably permuting the matrix $\mathbf P_g^*-\mathbf P_{g'}^*$, we can always find a partition of the observations $\mathcal H^{(gg')}=\{\mathcal H_1^{(gg')},\dots,\mathcal H_{h_{gg'}}^{(gg')}\}$, under which for any $h,h'\in[h_{gg'}]$,
\[
D_{hh'}^{gg'}:=\left\{\sum_{i\in\mathcal H_h^{(gg')},i'\in\mathcal H_{h'}^{(gg')}}(P_{g,ii'}^*-P_{g',ii'}^*)\right\}\in\left\{\left|\mathcal H_{h}^{(gg')}\right|\cdot\left|\mathcal H_{h'}^{(gg')}\right|,-\left|\mathcal H_{h}^{(gg')}\right|\cdot\left|\mathcal H_{h'}^{(gg')}\right|,0\right\}.
\]

\begin{remark}\label{rem11}
    The partition $\mathcal H^{(gg')}=\{\mathcal H_1^{(gg')},\dots,\mathcal H_{h_{gg'}}^{(gg')}\}$ is constructed for theoretical analysis. As shown in Figure 8 (supplementary material), without rearranging matrix $\mathbf P_g^*-\mathbf P_{g'}^*$, the entries would appear as scattered points, as shown in Figure 8(3), where each point represents either 1 or -1. Such a dispersed pattern makes it difficult to apply large-sample theory for quantitative analysis. In contrast, after rearrangement, the pattern shown in Figure 8(6) exhibits a clear block structure, which facilitates the quantitative comparison of the differences between $\mathbf P_g^*$ and $\mathbf P_{g'}^*$. 
\end{remark}

Note that $h_{gg'}\le M_gM_{g'}$ is the number of groups in $\mathcal H^{(gg')}$ related to the differences between $\mathbf P_g^*$ and $\mathbf P_{g'}^*$. Thus, for each pair of $(g,g')$ with $g\ne g'\in[G+1]$, we only need to focus on two corresponding subsets of $\{(h,h'):1\le h<h'\le h_{gg'}\}$, defined by
\[
\begin{aligned}
&\mathcal D_1^{gg'}=\left\{(h,h'):D_{hh'}^{gg'}=\left|\mathcal H_{h}^{(gg')}\right|\cdot\left|\mathcal H_{h'}^{(gg')}\right|\right\},\\
&\mathcal D_{-1}^{gg'}=\left\{(h,h'):D_{hh'}^{gg'}=-\left|\mathcal H_{h}^{(gg')}\right|\cdot\left|\mathcal H_{h'}^{(gg')}\right|\right\}.
\end{aligned}
\]
Let $|\mathcal D_{1}|:=\max_{g'\ne g\in[G+1]}|\mathcal D_{1}^{gg'}|$, $|\mathcal D_{-1}|:=\max_{g'\ne g\in[G+1]}|\mathcal D_{-1}^{gg'}|$ and $n_h^{(gg')}=\left|\mathcal H_{h}^{(gg')}\right|$. 
For some $g\ne g'\in[G+1]$, let $\mathcal D_{\mathrm{I}}^{(gg')}$ be a subset of $\mathcal D_1^{gg'}\cup\mathcal D_{-1}^{gg'}$ such that, for any $(h,h')\in\mathcal D_{\mathrm{I}}^{(gg')}$, $\min\left\{n_h^{(gg')},n_{h'}^{(gg')}\right\}\asymp n$, and $\mathcal D_{\mathrm{II}}^{(gg')}$ be another subset of $\mathcal D_1^{gg'}\cup\mathcal D_{-1}^{gg'}$ such that, for any $(h,h')\in\mathcal D_{\mathrm{II}}^{(gg')}$, $\min\left\{n_h^{(gg')},n_{h'}^{(gg')}\right\}\ll n$. 
Let 
\[
n_{\min}^{\mathrm{diff}}:=\min_{(h,h')\in\mathcal D_{\mathrm{I}}^{(gg')}}\min\left\{n_h^{(gg')},n_{h'}^{(gg')}\right\}.
\]

For any $g\in[G+1]$ and $(h,h')\in\mathcal D_1^{gg'}\cup\mathcal D_{-1}^{gg'}$, let $\mathbf P_{g,hh'}^*$ be the submatrix of $\mathbf P_g^*$ with rows and columns corresponding to $\mathcal H_{h}^{(gg')}$ and $\mathcal H_{h'}^{(gg')}$, respectively.
Define $\zeta_{\min}^{\text{in}}:=\min_{j\in[p]}\zeta_j^{\mathrm{in}}$ and $\zeta_{\max}^{\text{out}}:=\max_{j\in[p]}\zeta_j^{\mathrm{out}}$.
The separation between the two clustering matrices $\mathbf P_g^*$ and $\mathbf P_{g'}^*$ is defined by 
\[
\begin{aligned}
\Lambda_{gg'}:=&\ \min\{\zeta_{\min}^{\mathrm{in}}-\tau,\tau-\zeta_{\max}^{\mathrm{out}}\}\sum_{(h,h')\in\mathcal D_{\mathrm{I}}^{(gg')}}\left\|\mathbf P_{g,hh'}^*
-\mathbf P_{g',hh'}^*\right\|_1\\
&\qquad\qquad\quad -\max\{\tau,1-\tau\}\sum_{(h,h')\in\mathcal D_{\mathrm{II}}^{(gg')}}\left\|\mathbf P_{g,hh'}^*-\mathbf P_{g',hh'}^*\right\|_1.
\end{aligned}
\]

\begin{theorem}\label{thm2}  Suppose the conditions in Theorem \ref{thm1} hold.
Assume that Condition \ref{con_snr2} holds for any $g\in[G+1]$.
If there exists a $0<\gamma<1$ such that $\zeta_{\min}^{\mathrm{in}}>\gamma>\zeta_{\max}^{\mathrm{out}}$. Assume that $\min_{g\ne g'\in[G+1]}\Lambda_{gg'}> 2r_n$ ($r_n$ is defined in Theorem \ref{thm1}), then after taking $\tau=\gamma$, we have
    \[
    \mathbb P\left[\bigcap_{g=1}^G\left\{\widehat{\mathbf P}_g^{\mathrm{or}}=\widehat{\mathbf P}_g\right\}\right]\ge 1-2pG(|\mathcal D_1|+|\mathcal D_{-1}|)\cdot\exp\{-C'n_{\min}^{\mathrm{diff}}\},
    \]
    where $C'>0$ is a constant.
\end{theorem}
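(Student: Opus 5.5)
The plan is to show that, on a high-probability event, the oracle configuration $(\widehat{\mathbf P}^{\mathrm{or}},\mathbf g^*)$ is a fixed point of the alternating structure of \eqref{obj}. Concretely, we show that no single feature can increase the objective by switching its group label while the matrices are held fixed. Given the true labels, the maximizing $\mathbf P$ is exactly $\widehat{\mathbf P}^{\mathrm{or}}$ by Definition \ref{def_oracle}. Hence local optimality in the sense of Algorithm 1 (and of \eqref{obj} over a neighbourhood of feature partitions) reduces to the following: for every $j\in\mathcal A_g^*$ and every $g'\ne g$,
\[
\bigl\langle \mathbf K_j-\tau\mathbf E_n,\widehat{\mathbf P}^{\mathrm{or}}_g-\widehat{\mathbf P}^{\mathrm{or}}_{g'}\bigr\rangle>0,
\]
with $\widehat{\mathbf P}^{\mathrm{or}}_{G+1}=\mathbf E_n$.

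The first step works on the event of Theorem \ref{thm1} with $L_g=M_g$, where $\|\widehat{\mathbf P}^{\mathrm{or}}_g-\mathbf P_g^*\|_1\le r_n$ for all $g$. Since the entries of $\mathbf K_j-\tau\mathbf E_n$ lie in $[-1,1]$, replacing the oracle matrices by $\mathbf P_g^*,\mathbf P_{g'}^*$ changes the inner product by at most $2r_n$. It therefore suffices to show that
\[
T_j^{gg'}:=\langle \mathbf K_j-\tau\mathbf E_n,\mathbf P_g^*-\mathbf P_{g'}^*\rangle\ge\Lambda_{gg'}.
\]
Together with the assumption $\Lambda_{gg'}>2r_n$, this gives strict positivity.

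The second step decomposes $T_j^{gg'}$ along the block partition $\mathcal H^{(gg')}$. On a block $(h,h')\in\mathcal D_1^{gg'}$, every pair $(i,i')$ lies in the same $g$-cluster but in different $g'$-clusters, and we need the entries of $\mathbf K_j-\tau$ to be mostly positive. On blocks in $\mathcal D_{-1}^{gg'}$, pairs are between-cluster for $g$, and we need the entries to be mostly negative.

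For blocks in $\mathcal D_{\mathrm{II}}$ we simply bound each entry by $\max\{\tau,1-\tau\}$ in absolute value. This produces exactly the subtracted term in $\Lambda_{gg'}$.

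For blocks in $\mathcal D_{\mathrm I}$ we need concentration of the block average of $K_{j,ii'}$. Under Condition \ref{con_snr2}, for within-cluster pairs, $\mathbb E K_{j,ii'}\ge\exp\{-\sigma^2/\theta_j^2\}$. This follows from a sub-Gaussian moment generating function bound on $\epsilon_{ij}-\epsilon_{i'j}$ combined with Jensen's inequality, as in the proof of Theorem \ref{thm1}. For between-cluster pairs, the expectation is at most the bracket in $\zeta_j^{\mathrm{out}}$; to see this, split on whether $|\epsilon_{ij}-\epsilon_{i'j}|\le d_{\min,j}/2$, using the sub-Gaussian tail for the complementary event.

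The block average is a U-statistic-type sum over $\mathcal H_h\times\mathcal H_{h'}$ with bounded kernel. Hoeffding's inequality for U-statistics, or a bounded-differences (McDiarmid) argument in the independent rows, then gives a deviation of at most $\kappa_j'$ times the mean, with probability at least $1-2\exp\{-C'\min(n_h,n_{h'})\}\ge 1-2\exp\{-C'n_{\min}^{\mathrm{diff}}\}$. On this event, the block contributes at least $\min\{\zeta^{\mathrm{in}}_{\min}-\tau,\tau-\zeta^{\mathrm{out}}_{\max}\}\,\|\mathbf P^*_{g,hh'}-\mathbf P^*_{g',hh'}\|_1$.

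Summing over the blocks yields $T_j^{gg'}\ge\Lambda_{gg'}$. A union bound over $j\in[p]$, over the at most $G$ competing labels, and over the blocks in $\mathcal D_1\cup\mathcal D_{-1}$ gives the stated probability bound. The failure probability of Theorem \ref{thm1} is either absorbed or handled by intersecting the two events.

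The main obstacle is the concentration step. The terms $K_{j,ii'}$ share observations, so they are not independent, and the block may be diagonal ($h=h'$), where one must pass to a genuine U-statistic. I would first try McDiarmid's inequality on the function of the $n_h+n_{h'}$ independent noise variables: each changes the block sum by at most $\max(n_h,n_{h'})$, which after normalization gives an exponent of order $\min(n_h,n_{h'})$.

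A secondary subtlety is the non-informative group $g'=G+1$. Here $\mathbf P^*_{G+1}=\mathbf E_n$ and the features $j\in\mathcal A^*_{G+1}$ have no $d_{\min,j}$. For these features, the relevant blocks are those where $\mathbf P_g^*=0$, and I expect they need the in-cluster bound $\zeta^{\mathrm{in}}$ with all pairs treated as within-cluster.
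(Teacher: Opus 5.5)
Your proposal is essentially the paper's own argument. The paper likewise bounds the single-feature swap gain below by $\langle \mathbf K_j-\tau\mathbf E_n,\mathbf P_g^*-\mathbf P_{g'}^*\rangle-2r_n$ on the $\ell_1$-neighbourhood $\bm\Theta$ of radius $r_n$, and splits $\mathbf P_g^*-\mathbf P_{g'}^*$ into blocks along $\mathcal H^{(gg')}$. It then controls the block averages $\widehat\Delta_{j,hh'}^{(gg')}$ using the Jensen lower bound and the $A_j/A_j^c$ upper bound, and takes a union bound over $pG(|\mathcal D_1|+|\mathcal D_{-1}|)$ events.

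Two small differences from the paper:
\begin{itemize}
\item The paper concentrates the block averages with the two-sample U-statistic Bernstein inequality, where you propose McDiarmid. McDiarmid works equally well here.
\item Your worry about diagonal blocks is unnecessary. $\mathcal D_{\pm1}$ only contain pairs with $h<h'$, and the diagonal blocks of $\mathbf P_g^*-\mathbf P_{g'}^*$ vanish when $\mathcal H^{(gg')}$ is the common refinement of the two partitions.
\end{itemize}

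One caution on the probability bound. The Theorem \ref{thm1} failure probability $2GM_{\max}/n_{\min}$ is polynomial in $n$, so it cannot be absorbed into the exponential term. Intersecting the two events, which is the correct option, yields the bound with that extra term, as in Theorem \ref{thm3}. The paper's proof of this theorem silently omits the term as well, so the stated bound is slightly stronger than what either argument actually delivers.
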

\begin{remark}\label{rem1}
    Note that $G(|\mathcal D_1|+|\mathcal D_{-1}|)< GM_{\max}^4<\infty$, if $\log p\ll n$, the result in Theorem \ref{thm2} holds with probability approaching 1. 
    Note that, $n_{\min}^{\mathrm{diff}}\asymp n$ leads to $\Lambda_{gg'}\asymp n^2-o(n^2)$. Besides, $r_n\ll n^2$ as $s_{\min}\rightarrow\infty$ and $n\rightarrow\infty$. Therefore, $\min_{g\ne g'\in[G+1]}\Lambda_{gg'}> 2r_n$ can be satisfied with high probability when $n$ and $s_{\min}$ are sufficiently large.
\end{remark}
Combining the results in Theorems \ref{thm1} and \ref{thm2}, the estimator $\widehat{\mathbf P}$ from \eqref{obj} is asymptotically equivalent to the oracle estimator $\widehat{\mathbf P}^{\mathrm{or}}$ from \eqref{ora_obj}. In addition, the oracle estimator $\widehat{\mathbf P}^{\mathrm{or}}$ is calculated based on the true feature group partition. Hence $\widehat{\mathbf P}^{\mathrm{or}}$ can recover the true feature group partition with high probability.

\subsection{Misclassification error}

Before showing the results about clustering consistency, we firstly introduce some definitions. Let $\mathbb H^{n\times M}$ be the set of hard (cluster) labels: $\{0,1\}$-valued $n\times M$ matrices where each row has exactly a single 1. For any two membership matrices $\mathbf Z,\mathbf Z'\in\mathbb H^{n\times M}$ with $\mathbf Z=(\mathbf z_1,\dots,\mathbf z_n)^\top$ and $\mathbf Z'=(\mathbf z_1',\dots,\mathbf z_n')^\top$, the average misclassification rate between them is denoted by $\overline{\mathrm{Mis}}(\mathbf Z,\mathbf Z')$, defined by
\[
\overline{\mathrm{Mis}}(\mathbf Z,\mathbf Z'):=\min_{\mathbf Q\in\mathcal Q}\ \frac{1}{n}\sum_{i=1}^n\mathbbm 1\{\mathbf z_i\ne \mathbf Q\mathbf z_i'\},
\]
where $\mathcal Q$ collects all $M\times M$ permutation matrices $\mathbf Q$.

\begin{theorem}\label{thm3}  Let $\widehat{\mathbf Z}_g$ be the estimated membership matrix obtained by applying spectral clustering on $\widehat{\mathbf P}_g^{(M_g)}$ based on a $(1+\omega)$-approximate K-means algorithm.
Suppose the conditions in Theorems \ref{thm1} and \ref{thm2} hold.
Then for sufficiently large $n$ and $s_{\min}$, with probability at least $1-2GM_{\max}/n_{\min}-2pG(|\mathcal D_1|+|\mathcal D_{-1}|)\cdot\exp\{-C'n_{\min}^{\mathrm{diff}}\}$, we have
\[
\sup_{g\in[G]}\overline{\mathrm{Mis}}\left(\mathbf P_g^*;\mathcal P_{\omega}(\widehat{\mathbf P}_g^{(M_g)})\right)\le 16M_{\max}(1+\omega)^2C''\cdot\max\left\{\exp[-\kappa s_{\min}],\frac{\log n_{\min}}{n_{\min}}\right\},
\]
where $C''>0$ is a constant, $\mathcal P_{\omega}(\widehat{\mathbf X})=\{\widetilde{\mathbf X}\in\mathbb M_{n,n'}^M:\|\widehat{\mathbf X}-\widetilde{\mathbf X}\|_F\le\omega\|\mathbf X-\widehat{\mathbf X}\|_F,\ \forall\mathbf X\in\mathbb M_{n,n'}^M\}$, and $\mathbb M_{n,n'}^M$ is a class of $M$-means matrices (defined in Appendix). Here, if $\mathbf X_1,\mathbf X_2\in\mathbb M_{n,n'}^M$ with the corresponding membership matrices $\mathbf Z_1,\mathbf Z_2\in\mathbb H^{n\times M}$, respectively, we define $\overline{\mathrm{Mis}}(\mathbf X_1,\mathbf X_2):=\overline{\mathrm{Mis}}(\mathbf Z_1,\mathbf Z_2)$.
\end{theorem}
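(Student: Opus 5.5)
On the event where Theorems \ref{thm1} and \ref{thm2} both hold, the plan is to turn the $\ell_1$ bound on $\widehat{\mathbf P}_g-\mathbf P_g^*$ into a Frobenius bound. That bound then feeds a Davis--Kahan-free argument for $(1+\omega)$-approximate $K$-means applied to a rank-$M_g$ approximation, in the style of \cite{zhou2019analysis}.

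\textbf{Step 1 (reduction to the oracle).} Intersect the two high-probability events: the one from Theorem \ref{thm1} (with $L_g=M_g$, so $\overline{\mathbf P}_g^*=\mathbf P_g^*$) and the one from Theorem \ref{thm2}. A union bound gives probability at least $1-2GM_{\max}/n_{\min}-2pG(|\mathcal D_1|+|\mathcal D_{-1}|)\exp\{-C'n_{\min}^{\mathrm{diff}}\}$. On this event $\widehat{\mathbf P}_g=\widehat{\mathbf P}_g^{\mathrm{or}}$ and $\|\widehat{\mathbf P}_g-\mathbf P_g^*\|_1\le r_n$ for all $g$. Both matrices have entries in $[0,1]$, so $\|\widehat{\mathbf P}_g-\mathbf P_g^*\|_F^2\le\|\widehat{\mathbf P}_g-\mathbf P_g^*\|_1\le r_n$.

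\textbf{Step 2 (low-rank truncation).} $\mathbf P_g^*=\mathbf Z_g^*\mathbf Z_g^{*\top}$ has rank $M_g$, and $\widehat{\mathbf P}_g^{(M_g)}$ is the best rank-$M_g$ approximation of $\widehat{\mathbf P}_g$ (Eckart--Young, ordering by $|\lambda|$). Hence
\[
\|\widehat{\mathbf P}_g^{(M_g)}-\mathbf P_g^*\|_F\le\|\widehat{\mathbf P}_g^{(M_g)}-\widehat{\mathbf P}_g\|_F+\|\widehat{\mathbf P}_g-\mathbf P_g^*\|_F\le 2\|\widehat{\mathbf P}_g-\mathbf P_g^*\|_F,
\]
which gives $\|\widehat{\mathbf P}_g^{(M_g)}-\mathbf P_g^*\|_F^2\le 4r_n$.

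\textbf{Step 3 (approximate $K$-means and misclassification).} The rows of $\mathbf P_g^*$ form an $M_g$-means matrix: row $i$ equals the indicator of its cluster. Rows from distinct clusters $m\neq m'$ satisfy $\|\mathbf P^*_{g,i\cdot}-\mathbf P^*_{g,i'\cdot}\|_2^2=n_m^{(g)}+n_{m'}^{(g)}\ge 2n_{\min}$. Apply $K$-means to rows of $\widehat{\mathbf P}_g^{(M_g)}$; this is equivalent to applying it to $\widehat{\mathbf U}_g^{(M_g)}\widehat{\bm\Lambda}_g^{(M_g)}$, since right-multiplication by $\widehat{\mathbf U}_g^{(M_g)\top}$ is an isometry on row space. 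Let $\widetilde{\mathbf X}\in\mathcal P_\omega(\widehat{\mathbf P}_g^{(M_g)})$. Then
\[
\|\widetilde{\mathbf X}-\mathbf P_g^*\|_F\le(2+\omega)\|\widehat{\mathbf P}_g^{(M_g)}-\mathbf P_g^*\|_F\le 2(1+\omega)\|\widehat{\mathbf P}_g^{(M_g)}-\mathbf P_g^*\|_F.
\]
Next use the standard lemma (Lemma 5.3 of Lei and Rinaldo type). Let $S$ be the set of rows with $\|\widetilde{\mathbf X}_{i\cdot}-\mathbf P^*_{g,i\cdot}\|_2\ge\delta/2$, where $\delta^2=2n_{\min}$. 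Then $|S|\le 4\|\widetilde{\mathbf X}-\mathbf P_g^*\|_F^2/\delta^2$. If $|S|<n_{\min}$, every true cluster keeps an unperturbed representative, and all rows outside $S$ are correctly labeled up to a permutation. Combining these,
\[
n\,\overline{\mathrm{Mis}}\le |S|\le\frac{4\cdot 4(1+\omega)^2\cdot 4r_n}{2n_{\min}}.
\]
With $r_n=Cn^2\max\{\cdot\}$ and $n\asymp n_{\min}$, dividing by $n$ gives a bound of order $(1+\omega)^2\max\{e^{-\kappa s_{\min}},\log n_{\min}/n_{\min}\}$. A per-cluster counting argument, summing over $M_g\le M_{\max}$ clusters, produces the factor $16M_{\max}$, and the constants are absorbed into $C''$. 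The supremum over $g$ is free because the event is uniform in $g$.

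\textbf{Main obstacle.} The delicate point is the condition $|S|<n_{\min}$, which requires $r_n\ll n\,n_{\min}$. This holds for sufficiently large $n$ and $s_{\min}$ because $\max\{e^{-\kappa s_{\min}},\log n/n\}\to0$. One must also check that permuting labels per cluster is consistent, so that no two true clusters map to the same estimated center. I would handle this exactly as in \cite{zhou2019analysis}, using the separation $\delta$ and the triangle inequality among the representatives.
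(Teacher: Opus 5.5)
Your proposal is correct and follows the same route as the paper. You intersect the events of Theorems \ref{thm1}--\ref{thm2}, convert the $\ell_1$ bound to a Frobenius bound, control the rank-$M_g$ truncation by a factor $2$, and run the Zhou--Amini / Lei--Rinaldo approximate $K$-means argument with center separation $\sqrt{2n_{\min}}$. You reprove that argument inline rather than citing Lemma \ref{lemma4}.

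There is one real difference. You apply Eckart--Young directly in Frobenius norm, whereas the paper uses Weyl in spectral norm and then the rank-$2M_g$ conversion back to Frobenius; that conversion is where its $\sqrt{2M_g}$, and hence the $16M_{\max}$ factor, comes from. Your bound is therefore slightly sharper and implies the stated one after absorbing constants into $C''$, so your closing remark attributing $16M_{\max}$ to a per-cluster counting step is unnecessary.
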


\begin{remark}\label{rem2}
    Note that $n_{\min}\asymp n$ and $n_{\min}^{\mathrm{diff}}\asymp n$, thus as $n\rightarrow\infty$ and $s_{\min}\rightarrow\infty$, the misclassification errors for $G$ $M_g$-truncated clustering matrices tend to 0 with probability approaching 1.
\end{remark}
Combining the results in Theorems \ref{thm1} -- \ref{thm3}, the true feature group partition and the corresponding true (or coarsening) clustering structures can be perfectly recovered with high probability.

\section{Simulation studies}

In this section, we conduct numerical simulations to evaluate the finite-sample performance of the proposed LSC method. For comparison, we also consider five alternative approaches. (a) The multistep Local Spectral Clustering method (mLSC), which first applies the COSCI approach \citep{banerjee2017feature} to screen features with significant clustering signals, and then estimates a clustering matrix for each selected feature individually. The retained features are subsequently partitioned into a pre-specified number of $G$ groups, each associated with a center clustering matrix. Finally, applying the aforementioned spectral clustering procedure to each clustering matrix then yields the group-specific sample clustering structures.
(b) The Oracle method, which maximizes \eqref{ora_obj} to obtain the oracle group estimator.
(c) The Nonparametric Bayesian Local Clustering method (NBLC) proposed by \cite{lee2013nonparametric}, implemented via MCMC with 10000 iterations, of which the first 2000 iterations are discarded as burn-in.
(d) The Sparse $K$-means method (SKM) proposed by \cite{witten2010framework}, implemented using R package {\it sparcl}.
(e) The spectral clustering with feature selection method (SCFS) proposed by \cite{liu2023clustering}. 
Among these, method (a) serves as a multistep counterpart to our LSC method, while method (b) represents an infeasible oracle method. Together, they serve as two benchmark methods for assessing the advantages of the proposed one-step design and for validating our theoretical results. 
Since methods (d) and (e) are designed for sample clustering under a single homogeneous feature group, and assume that the number of clusters $M$ is known a priori, we denote their implementations with a specified number of clusters ($m$) by $\text{SKM(M=m)}$ and $\text{SCFS(M=m)}$, respectively.

\subsection{Simulation setup}

We consider $G=3$ informative feature groups, allowing for heterogeneous SNRs within each group, along with one non-informative feature group. For the first informative group, let $\mathbf X^{(1)}$ be an $n\times s$ informative data matrix consisting of two equally sized clusters $\mathcal G_{1,1}^*$ and $\mathcal G_{1,2}^*$. The cluster centers are given by $\bm\mu_{m}^{(1)}=(\mu_{m,1}^{(1)},\dots,\mu_{m,s}^{(1)})$, where $\mu_{1,j}^{(1)}=(2V_j^{(1)}-1)\times U_{1j}^{(1)}$, and $\mu_{2,j}^{(1)}=-(2V_j^{(1)}-1)\times U_{2j}^{(1)}$. Here, $\{U_{mj}^{(1)}\}_{m=1,j=1}^{2,s}$ are independent uniform variables with $\mathcal U(0.75,1.25)$, and $\{V_j^{(1)}\}_{j=1}^{s}$ are independent Bernoulli variables with mean 0.5. 
For each $i\in\mathcal G_{1,m}^*$, we generate $\mathbf X_{i\cdot}^{(1)}$, the transposed $i$th row of $\mathbf X^{(1)}$, from the multivariate normal distribution $\mathcal N(\bm\mu_m^{(1)},\mathbf I_{s})$, for $m\in[2]$. 
For the second informative group, let $\mathbf X^{(2)}$ be an $n\times s$ informative data matrix consisting of three clusters $\mathcal G_{2,1}^*$, $\mathcal G_{2,2}^*$, and $\mathcal G_{2,3}^*$ with cluster sizes 3:3:4. 
For the $m$-th cluster, its center is denoted by $\bm\mu_{m}^{(2)}=(\mu_{m,1}^{(2)},\dots,\mu_{m,s}^{(2)})$ with $\mu_{1,j}^{(2)}=(2V_j^{(2)}-1)\times U_{1j}^{(2)}$, $\mu_{2,j}^{(2)}=U_{2j}^{(2)}$, and $\mu_{3,j}^{(2)}=-(2V_j^{(2)}-1)\times U_{3j}^{(2)}$, where $\{U_{1j}^{(2)}\}_{j\in[7s/8]}$, $\{U_{3j}^{(2)}\}_{j\in[3s/4]}$, and $\{U_{3j}^{(2)}\}_{j\in[s]\setminus[7s/8]}$ are independent uniform variables with $\mathcal U(2,2.5)$, $\{U_{1j}^{(2)}\}_{j\in[s]\setminus[7s/8]}$, and $\{U_{3j}^{(2)}\}_{j\in[7s/8]\setminus[3s/4]}$ are independent uniform variables with $\mathcal U(1,1.5)$, and $\{U_{2j}^{(2)}\}_{j\in[s]}$ are independent uniform variables with $\mathcal U(-0.25,0.25)$, and $\{V_j^{(2)}\}_{j=1}^{s}$ are independent Bernoulli variables with mean 0.5.  For $i\in\mathcal G_{2,m}^*$, we generate $\mathbf X_{i\cdot}^{(2)}$ from the multivariate normal distribution $\mathcal N(\bm\mu_m^{(2)},\mathbf I_{s})$, for $m\in[3]$.
Similarly, let $\mathbf X^{(3)}$ be an $n\times s$ informative data matrix with four equally sized clusters $\mathcal G_{3,1}^*$, $\mathcal G_{3,2}^*$, $\mathcal G_{3,3}^*$, and $\mathcal G_{3,4}^*$. For the $m$-th cluster, its center is denoted by $\bm\mu_{m}^{(3)}=(\mu_{m,1}^{(3)},\dots,\mu_{m,s}^{(3)})$ with $\mu_{1,j}^{(3)}=(2V_j^{(3)}-1)\times U_{1j}^{(3)}$, $\mu_{2,j}^{(3)}=(2V_j^{(3)}-1)\times U_{2j}^{(3)}$, $\mu_{3,j}^{(3)}=-(2V_j^{(3)}-1)\times U_{3j}^{(3)}$, and $\mu_{4,j}^{(3)}=-(2V_j^{(3)}-1)\times U_{4j}^{(3)}$, where $\{U_{1j}^{(3)}\}_{j\in[7s/8]}$, $\{U_{4j}^{(3)}\}_{j\in[3s/4]}$, and $\{U_{4j}^{(3)}\}_{j\in[s]\setminus[7s/8]}$ are independent uniform variables with $\mathcal U(4,4.5)$, $\{U_{1j}^{(3)}\}_{j\in[s]\setminus[7s/8]}$, and $\{U_{4j}^{(3)}\}_{j\in[7s/8]\setminus[3s/4]}$ are independent uniform variables with $\mathcal U(2,2.5)$, and $\{U_{2j}^{(3)}\}_{j\in[s]}$ and $\{U_{3j}^{(3)}\}_{j\in[s]}$ are independent uniform variables with $\mathcal U(1,1.5)$, and $\{V_j^{(3)}\}_{j=1}^{s}$ are independent Bernoulli variables with mean 0.5. For $i\in\mathcal G_{3,m}^*$, we generate $\mathbf X_{i\cdot}^{(3)}$ from the multivariate normal distribution $\mathcal N(\bm\mu_m^{(3)},\mathbf I_{s})$ for $m\in[4]$.
Finally, let $\mathbf X^{(4)}$ be an $n\times (p-3s)$ non-informative feature group, whose entries are independently drawn from standard normal distribution $\mathcal N(0, 1)$. The full data matrix is then constructed as $\mathbf X=(\mathbf X^{(1)},\dots,\mathbf X^{(4)})$. We set $p=500$, $s\in\{16,24,32\}$, and $n\in\{100,200,400\}$.
Figure \ref{heatmap_G3} depicts the heatmaps for the four feature groups. Panels (a) -- (c) correspond to the three informative groups with distinct clustering structures, while panel (d) displays a representative subgroup of 24 features from the non-informative group.

\begin{figure}[H]
\centering
\subfloat[$\mathcal G_1^*$]{
 \includegraphics[height=4cm, width =4cm]{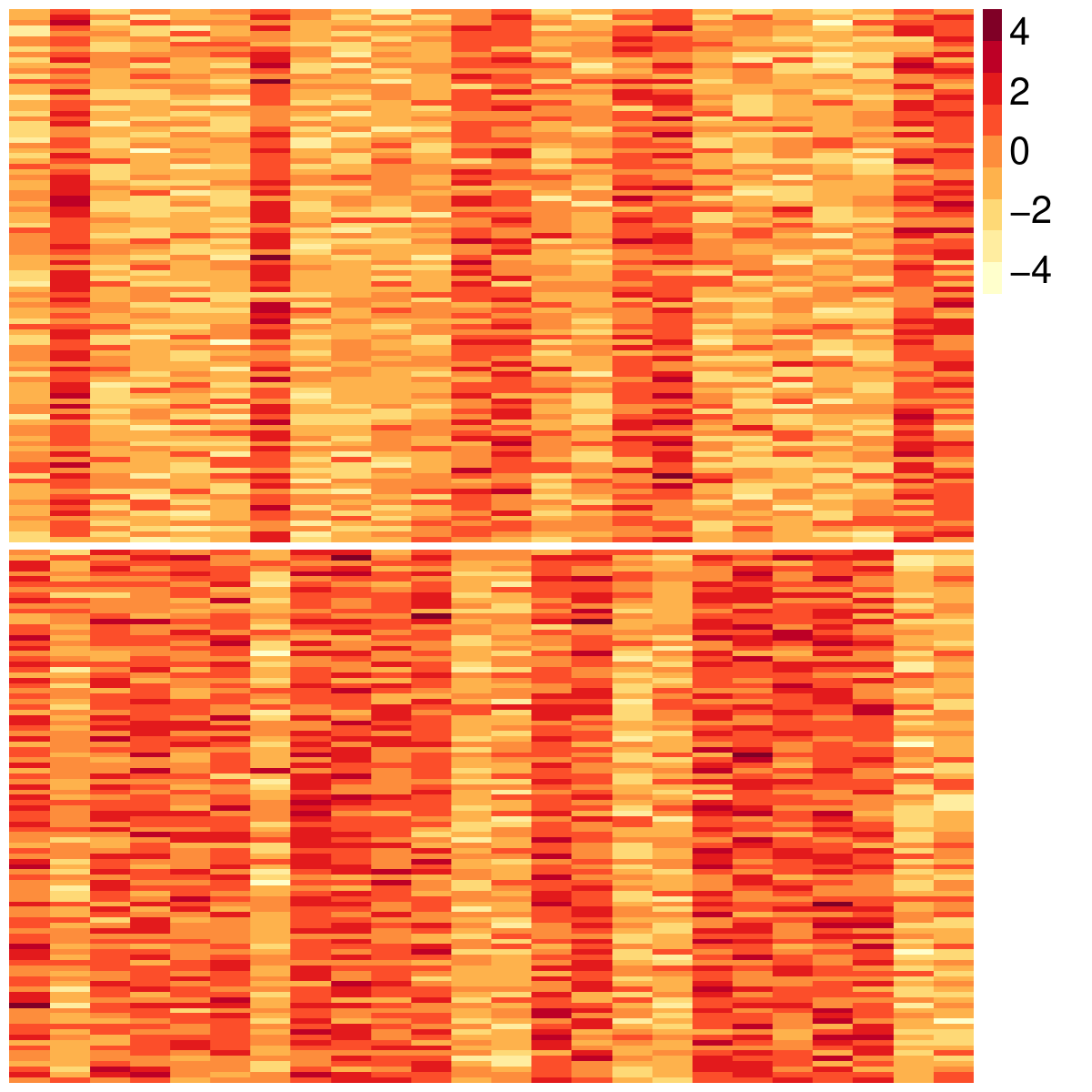}}
  \subfloat[$\mathcal G_2^*$]{
 \includegraphics[height=4cm, width =4cm]{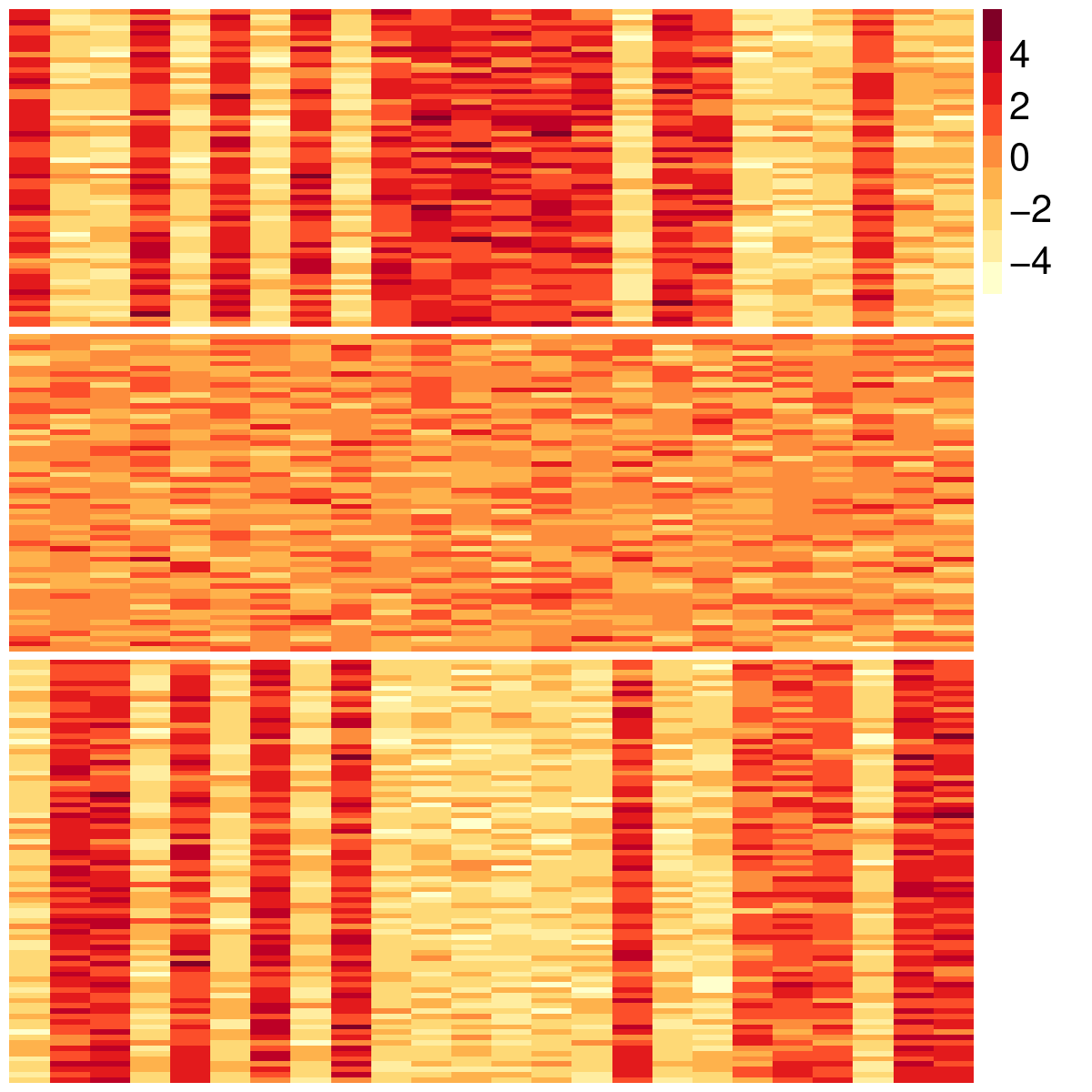}}
 \subfloat[$\mathcal G_3^*$]{
 \includegraphics[height=4cm, width =4cm]{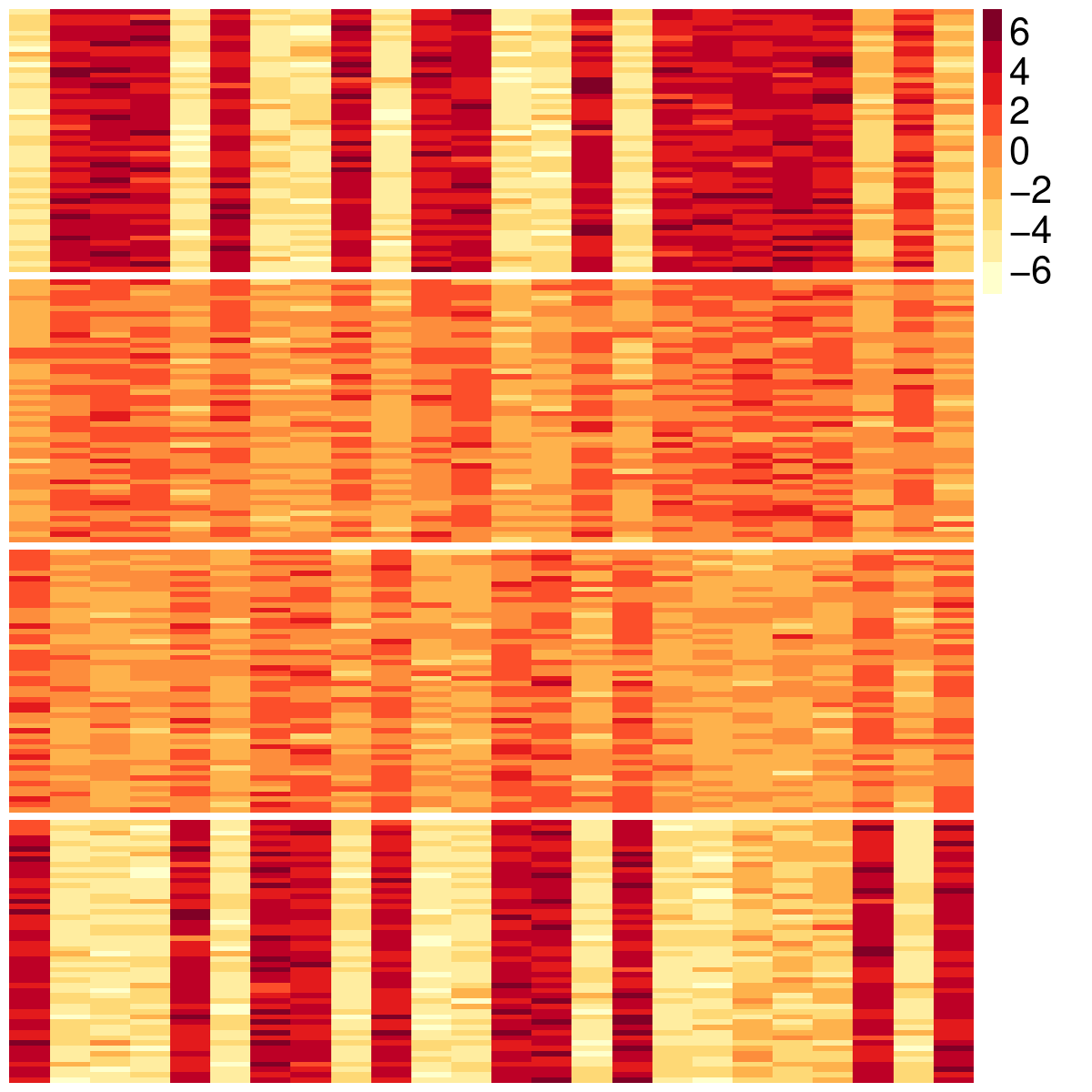}}
  \subfloat[$\mathcal G_4^*$]{
 \includegraphics[height=4cm, width =4cm]{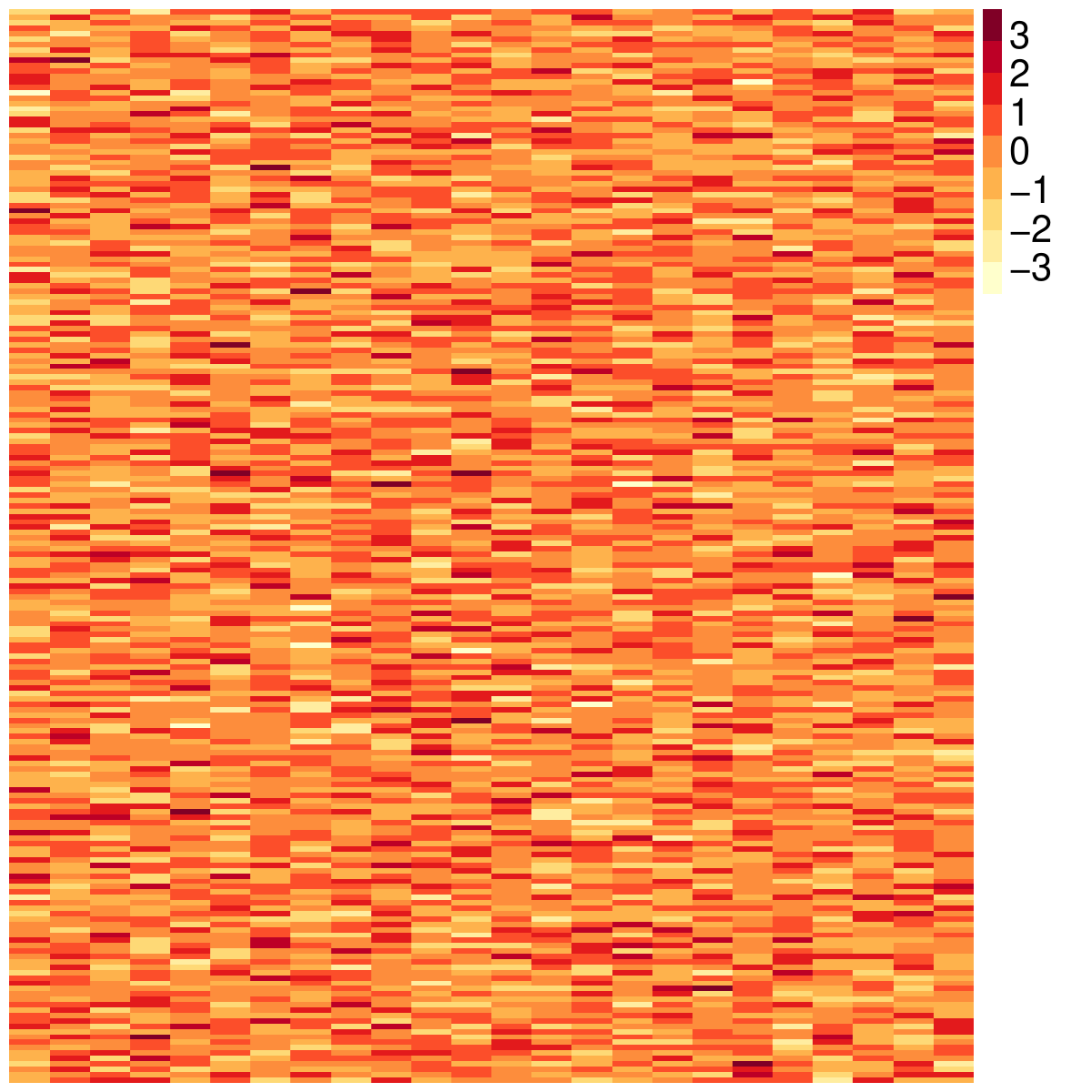}}\\
\centering
\caption{Heatmaps of the 4 feature groups with different clustering structures.}\label{heatmap_G3}
\end{figure}

For each setting, we generate 100 Monte Carlo replicates, except for the NBLC method, which is repeated 50 times due to its substantial computational burden as a Bayesian approach.
We evaluate the performance of each method from two perspectives: feature group partitioning and sample clustering. 
The accuracy of feature group selection is measured using two metrics: (a) fARI, the adjusted Rand index \citep{rand1971objective} between the true feature grouping and the estimated group partition, computed over all informative features; and (b) TNR (True Negative Rate), defined as the proportion of all non-informative features correctly assigned to the non-informative group.
The quality of sample clustering is assessed by three measures: (c) $\widehat M_g$, the average number of clusters identified within the $g$-th feature group; (d) sARI, the average adjusted Rand index between the true and estimated sample clustering structures across all informative feature groups; and (e) MAE, the average mean absolute error of the clustering matrices for informative features, defined as $\mathrm{MAE}=|\mathcal A^*|^{-1}\sum_{j\in\mathcal A^*}\|\widehat{\mathbf P}_{\widehat{g}_j}-\mathbf P_{g_j^*}^*\|_1$.

\subsection{Simulation results}
We provide a detailed illustration based on the simulation results. The performance of feature group partitioning is summarized in Figures \ref{g5} and \ref{g7}, while the sample clustering accuracy is reported in Table \ref{ex1_estimation} and Figures \ref{g6} and \ref{g8}. We begin by comparing the proposed LSC method with the Oracle benchmark to further validate our theoretical results. As shown in Figure \ref{g5}, the median fARI of the Oracle method remains at 1 due to its idealized design. 
For the LSC method, when $s=16$, the median fARI does not exhibit a clear improvement with increasing sample size. In contrast, for $s\in\{24,32\}$, the median fARI tend to 1 as the sample size increases, indicating that the true feature group structure can be consistently recovered when each group contains a sufficiently large number of features.
This finding is in line with Theorem 2. 
Turning to sample clustering performance, Figure \ref{g6} shows that the LSC method exhibits a similar pattern in terms of sARI. When the sample size is fixed (e.g. $n=100$), the median sARI of the LSC method increases substantially as $s$ rises from 16 to 24, but stabilizes when $s$ further increases to 32.
This pattern suggests that when the number of features is small (e.g., $s = 16$), clustering accuracy is primarily limited by insufficient feature information. However, when the number of features is larger (e.g., $s = 24$), the improvement in clustering accuracy becomes constrained by the smaller sample sizes, reflecting the trade-off characterized in Theorems 1 and 3. Consistent evidence is provided by Table \ref{ex1_estimation} and Figure \ref{g8}, which show that as both $s$ and $n$ increase, the performance of the LSC method converges to that of the Oracle method.

\begin{table}[H]
\centering\renewcommand\arraystretch{1}{\scriptsize
\renewcommand{\arraystretch}{0.6}
\caption{Estimated number of sample clusters within each informative feature group.
}\label{ex1_estimation}
\setlength{\tabcolsep}{3mm}{
\begin{tabular}{ccccccccccc}
			
	\hline
	\cline{1-11}
	&&\multicolumn{3}{c}{$s=16$}&\multicolumn{3}{c}{$s=24$}&\multicolumn{3}{c}{$s=32$}\\
	\cmidrule(lr){3-5}
	\cmidrule(lr){6-8}
	\cmidrule(lr){9-11}
	   &Method&$\widehat M_1$&$\widehat M_2$&$\widehat M_3$ &$\widehat M_1$&$\widehat M_2$&$\widehat M_3$ 
                 &$\widehat M_1$&$\widehat M_2$&$\widehat M_3$  \\
\hline
			
$n=100$   &LSC	     &2.07    &2.13    &3.91     &1.95    &2.90    &4.02    &1.94    &2.95   &4.00    \\
          &          &(0.39)  &(0.56)  &(0.45)   &(0.11)  &(0.45)  &(0.20)  &(0.05)  &(0.20) &(0.00)  \\
          &Oracle	   &2.00    &2.96    &4.00     &2.00    &3.00    &4.00    &2.00    &3.00   &4.00   \\
          &          &(0.00)  &(0.20)  &(0.00)   &(0.00)  &(0.00)  &(0.00)  &(0.00)  &(0.00) &(0.00)  \\
          &mLSC	   &1.53    &1.63    &1.80     &1.54    &1.65    &1.96    &1.56    &1.67   &2.05    \\
          &          &(0.15)  &(0.13)  &(0.41)   &(0.17)  &(0.15)  &(0.51)  &(0.17)  &(0.14) &(0.51)  \\
          
$n=200$   &LSC	     &1.99    &2.44    &4.02     &1.96    &2.96    &4.00    &1.98    &3.03   &4.02    \\
          &          &(0.34)  &(0.86)  &(0.49)   &(0.05)  &(0.20)  &(0.00)  &(0.03)  &(0.30) &(0.20)  \\
          &Oracle	   &2.00    &2.99    &4.00     &2.00    &3.00    &4.00    &2.00    &3.00   &4.00    \\
          &          &(0.00)  &(0.10)  &(0.00)   &(0.00)  &(0.00)  &(0.00)  &(0.00)  &(0.00) &(0.00)  \\
          &mLSC	   &1.47    &1.61    &1.70     &1.55    &1.58    &1.84    &1.56    &1.61   &1.85    \\
          &          &(0.19)  &(0.20)  &(0.43)   &(0.21)  &(0.12)  &(0.47)  &(0.23)  &(0.18) &(0.50)  \\
          
$n=400$   &LSC	     &2.06    &2.64    &4.08     &1.99    &2.97    &4.00    &1.99    &3.00   &4.00    \\
          &          &(0.44)  &(0.94)  &(0.49)   &(0.04)  &(0.17)  &(0.00)  &(0.01)  &(0.00) &(0.00)  \\
          &Oracle	   &2.00    &3.00    &4.00     &2.00    &3.00    &4.00    &2.00    &3.00   &4.00    \\
          &          &(0.00)  &(0.00)  &(0.00)   &(0.00)  &(0.00)  &(0.00)  &(0.00)  &(0.00) &(0.00)  \\
          &mLSC	   &1.53    &1.63    &1.74     &1.61    &1.60    &1.70    &1.57    &1.59   &1.66    \\
          &          &(0.29)  &(0.31)  &(0.43)   &(0.31)  &(0.23)  &(0.35)  &(0.30)  &(0.25) &(0.36)  \\
			\hline
			\cline{1-11}	
		\end{tabular}
	}}
\end{table}

We compare the LSC method with alternative approaches in terms of feature group selection and sample clustering performance. Overall, the NBLC method is the only competitor that achieves performance comparable to our proposed LSC method, while the remaining methods perform substantially worse. Specifically, the SKM and SCFS methods always exhibit poor performance under our settings, as both are designed for scenarios with a single homogeneous feature group ($G=1$). 
Although the multi-step designed mLSC method can be extended to cases with $G > 1$, it fails to effectively aggregate within-group feature signals, like LSC and NBLC do. Instead, it relies on individual feature-level information for both feature screening and sample clustering, resulting in poor performance in practice.
Comparing the LSC and NBLC methods, we find that the NBLC method performs slightly better when the sample size is small (e.g., $n=100$). However, it performs significantly worse than the LSC method when the number of features is not large enough (e.g., $s=16$ or 24) and the sample size is fixed at $n=200$ or 400. Moreover, as shown in Figure \ref{g8}, the NBLC method consistently underperforms the LSC method in terms of TNR. In addition, the Bayesian nature of NBLC incurs significantly higher computational cost compared to the proposed LSC method. Taken together, these results demonstrates that the LSC method outperforms the NBLC method and other competing methods either in accuracy or computational efficiency.

\begin{figure}[H]
\centering
\includegraphics[width =14cm]{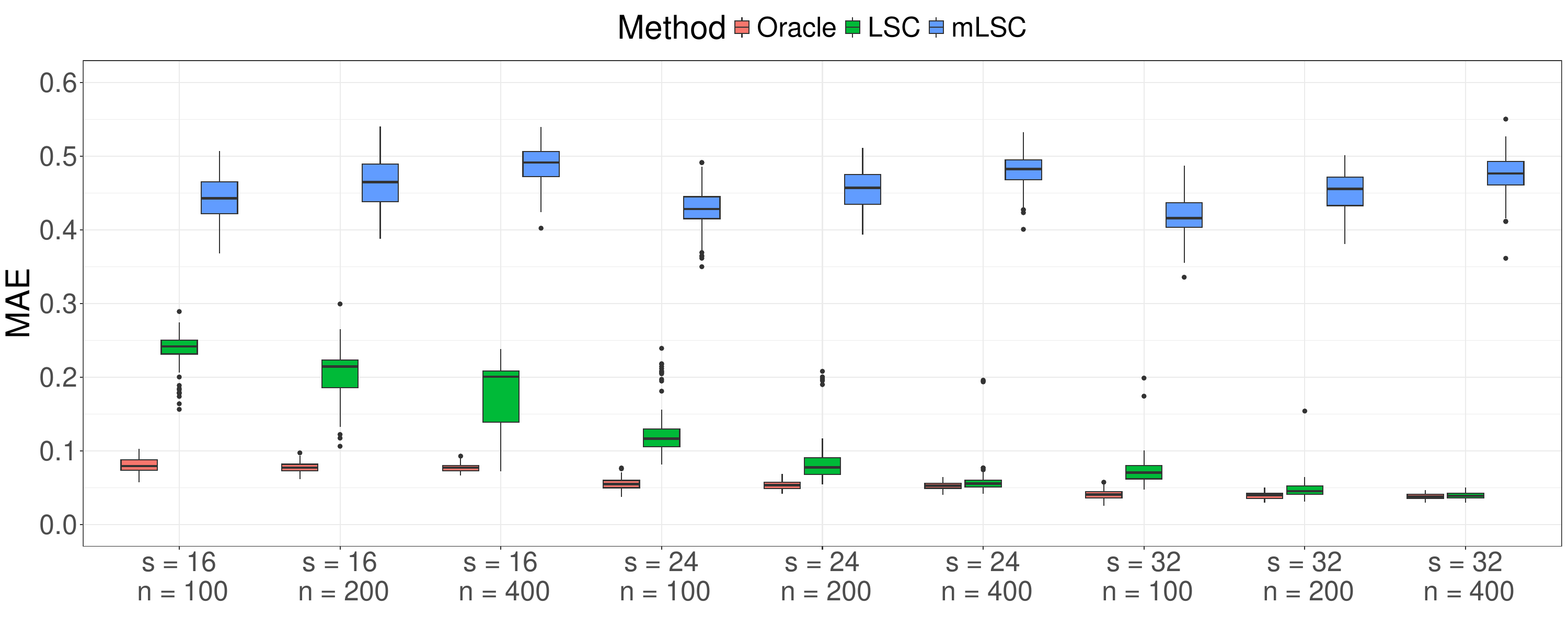}
\caption{Boxplots of MAE across methods under different settings.
}
\label{g8}
\end{figure}

\begin{figure}[H]
\centering
\includegraphics[width =15cm]{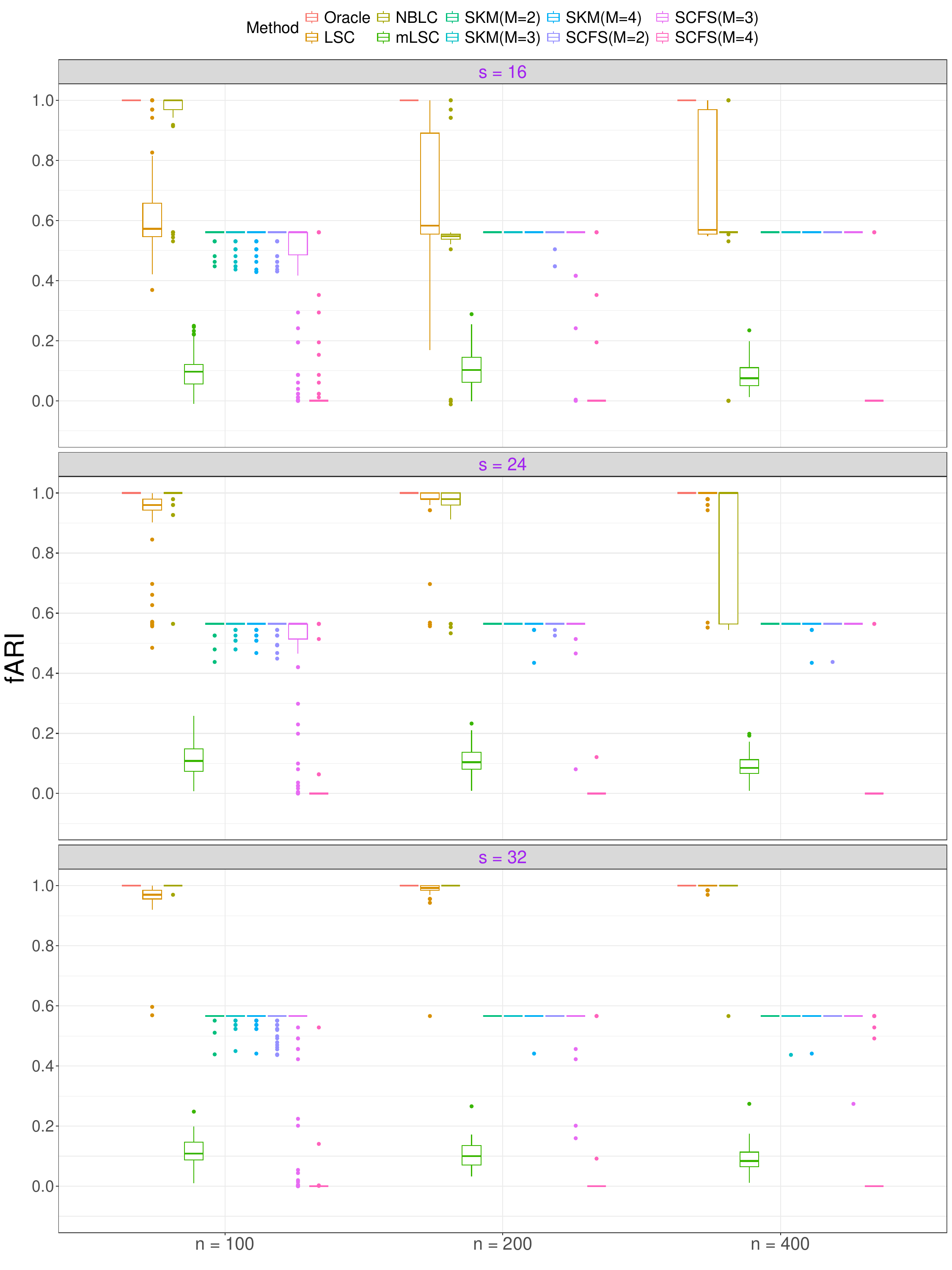}
\caption{Boxplots of fARI across methods under different settings.}
\label{g5}
\end{figure}

\begin{figure}[H]
\centering
\includegraphics[width =15cm]{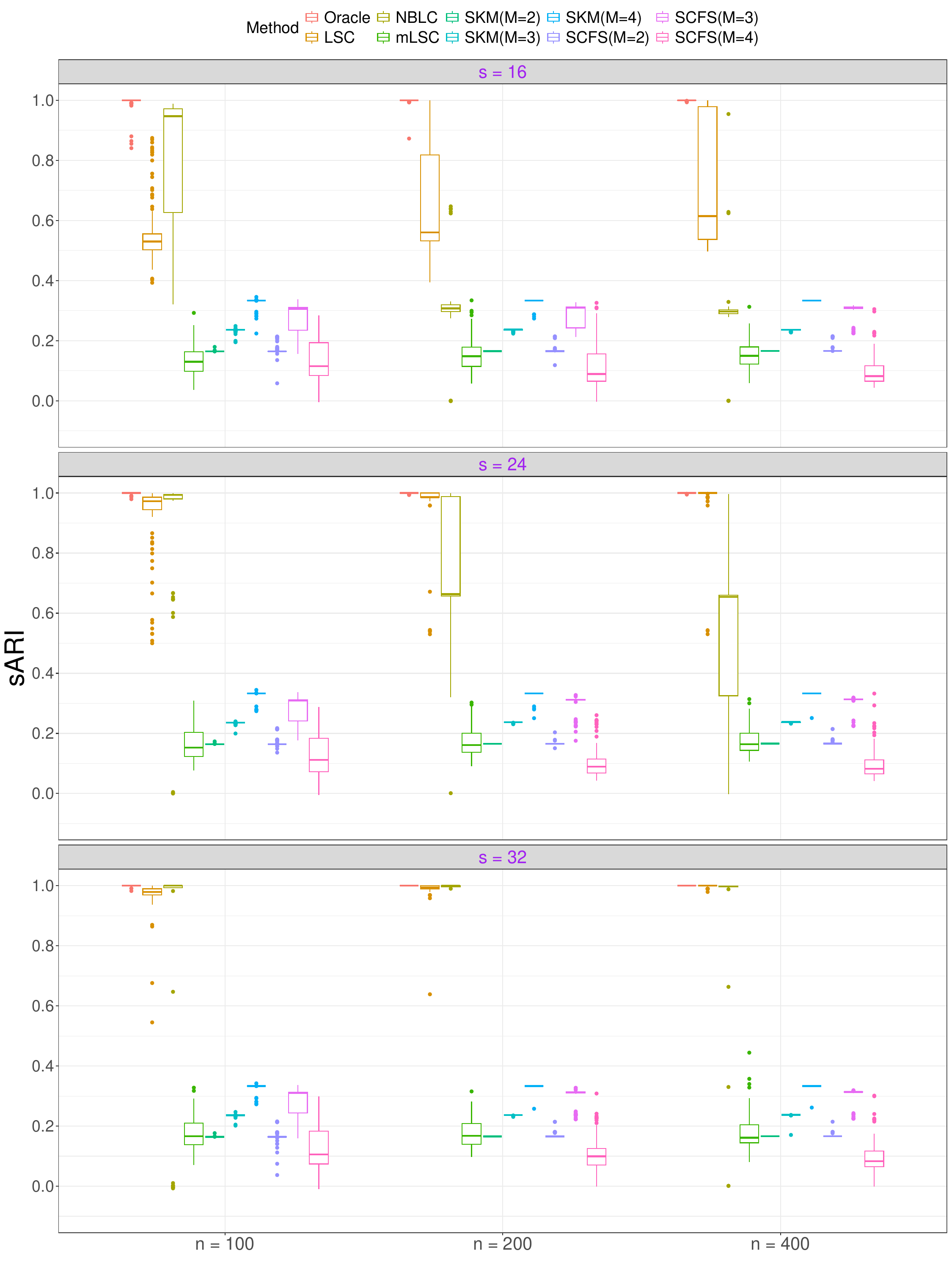}
\caption{Boxplots of sARI across methods under different settings.
}
\label{g6}
\end{figure}

\begin{figure}[H]
\centering
\includegraphics[width =15cm]{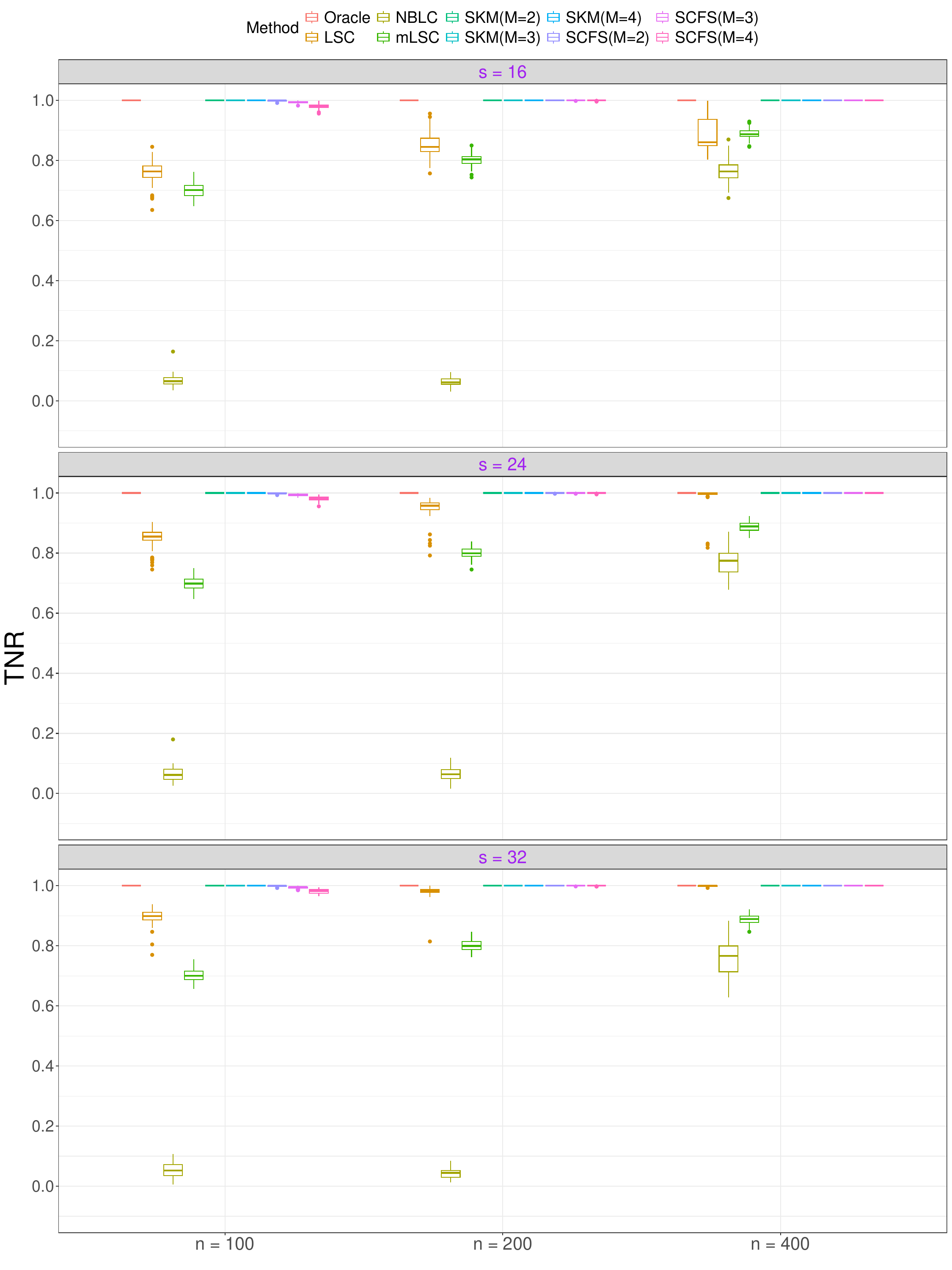}
\caption{Boxplots of TNR across methods under different settings..
}
\label{g7}
\end{figure}

\section{Real Data Analysis}
Acute Myeloid Leukemia (AML) remains a highly fatal malignancy, especially among elderly patients \citep{yilmaz2019late}. Since the 1970s, anthracycline combined with cytosine arabinoside (AraC), commonly referred to as conventional chemotherapy (CC), has been the standard of care in AML induction therapy \citep{yates1973cytosine}. More recently, increasing evidence has demonstrated that some patients with newly diagnosed AML benefit from the combination of venetoclax (VEN) and hypomethylating agents (HMA), referred to as VH therapy. However, due to the pronounced heterogeneity of AML, the benefits of VH therapy are not universal across patients \citep{de2024proteomics}. Therefore, accurately identifying patients who are more likely to benefit from one regimen over the other is crucial for maximizing outcomes with existing therapies.

In this section, we analyze Reverse-phase Protein Array (RPPA)-based proteomics for AML. This dataset, generated by \cite{de2024proteomics} and publicly available on GitHub (https://github.com/escmagalhaes/23-LEU-1445), contains protein expression measurements for 411 proteins across 146 patients.
In prior work, \cite{de2024proteomics} applied exploratory analyses to this dataset and stratified patients into two clusters based on Protein Set 3 (29 proteins). One cluster (61 patients) exhibited no significant benefit from either VH or CC therapy, whereas the other cluster of patients showed a markedly better response to CC compared to VH.
Based on this result, we have reason to suspect that Protein Set 3 does not contain sufficient discriminative information to help determine whether VH or CC is more effective in the first cluster.
Therefore, we apply our proposed method to re-partition the full set of 411 proteins. Our goal is to identify new protein group partition under which patients can be more finely stratified, thereby revealing significant differences in treatment response that were previously indistinguishable.

\begin{figure}[H]
\centering
\subfloat[Heatmap of the Protein Group 1.]{
 \includegraphics[height=8cm, width =8cm]{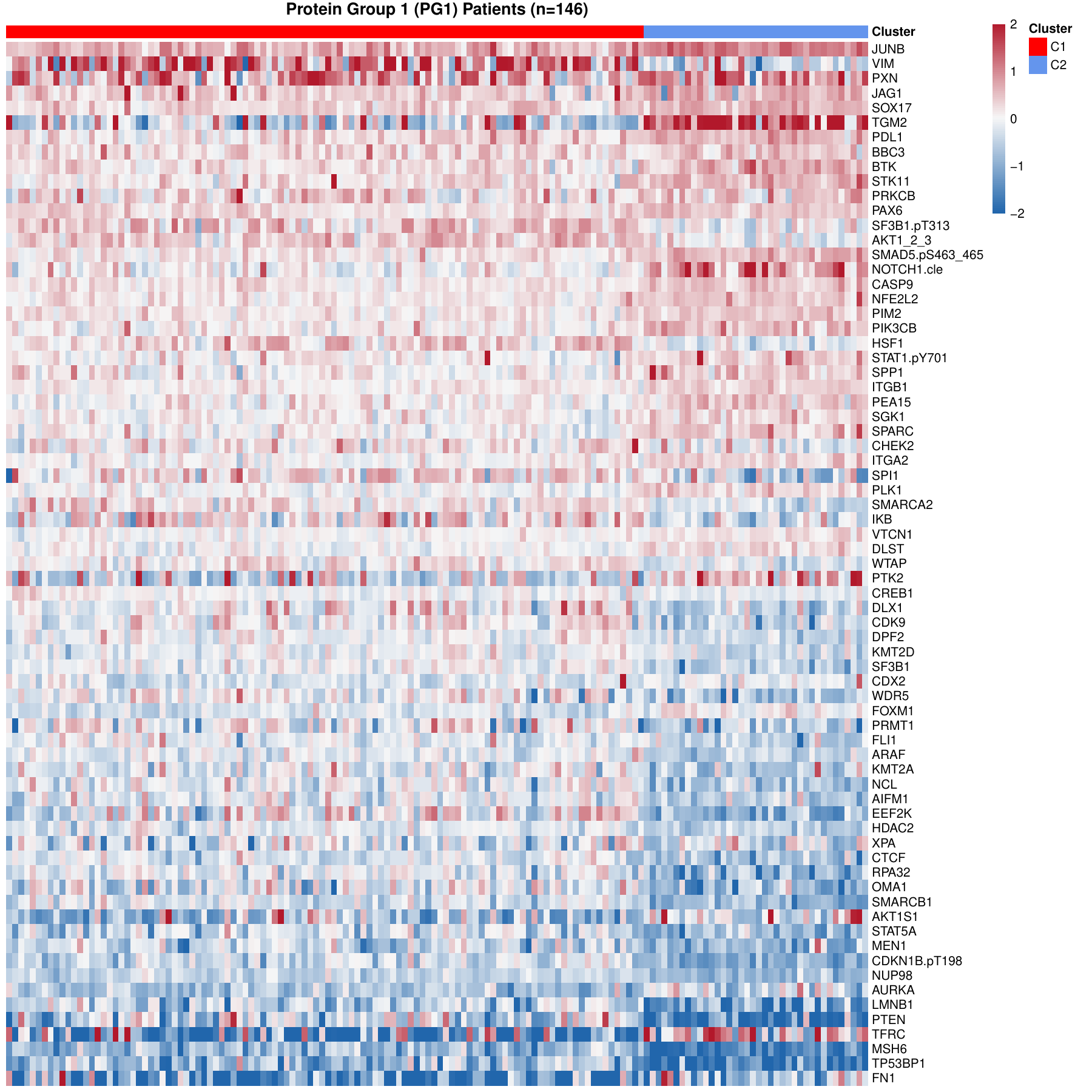}}
  \subfloat[Heatmap of the Protein Group 2.]{
 \includegraphics[height=8cm, width =8cm]{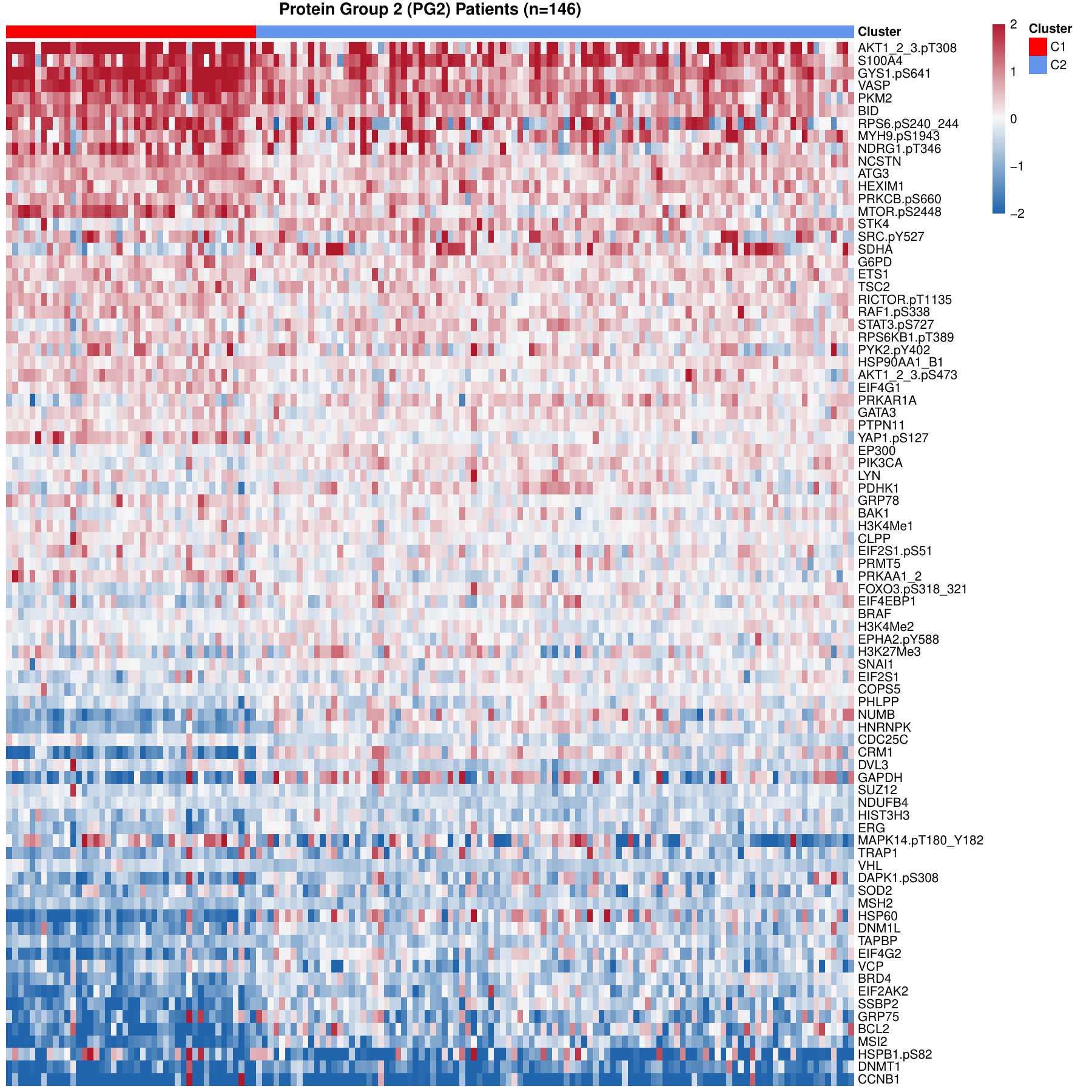}}\\
 \subfloat[Kaplan–Meier plots under Protein Group 1.  ]{
 \includegraphics[height=8cm, width =8cm]{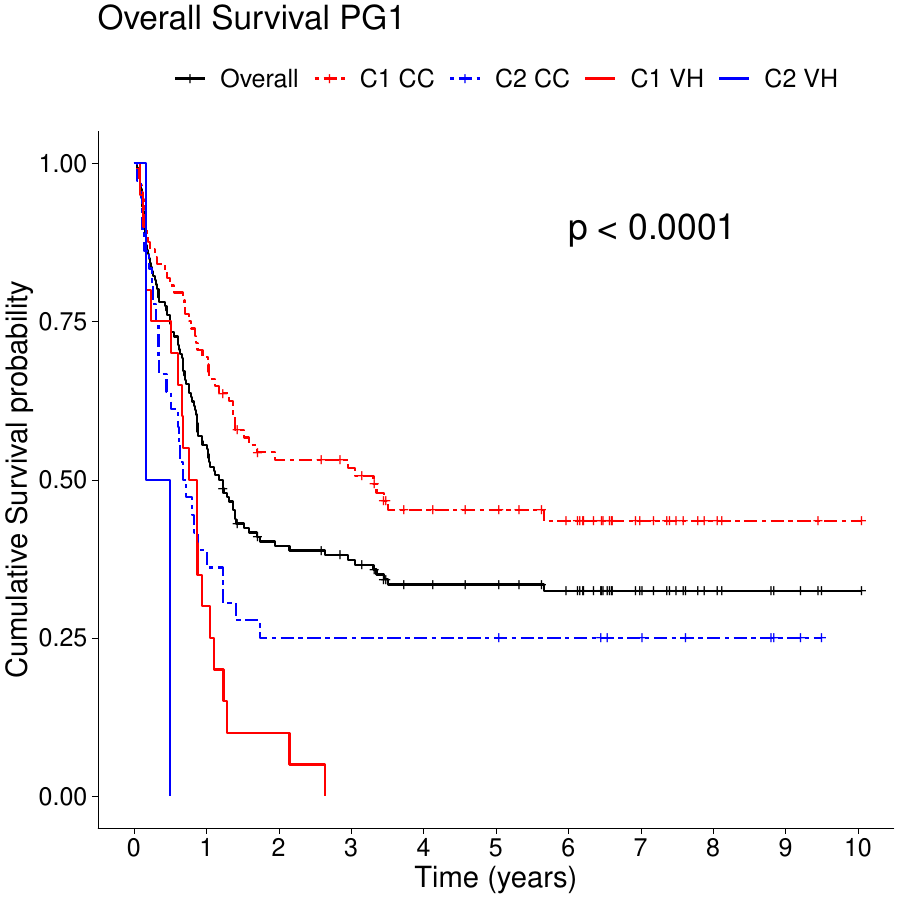}}
  \subfloat[Kaplan–Meier plots under Protein Group 2.]{
 \includegraphics[height=8cm, width =8cm]{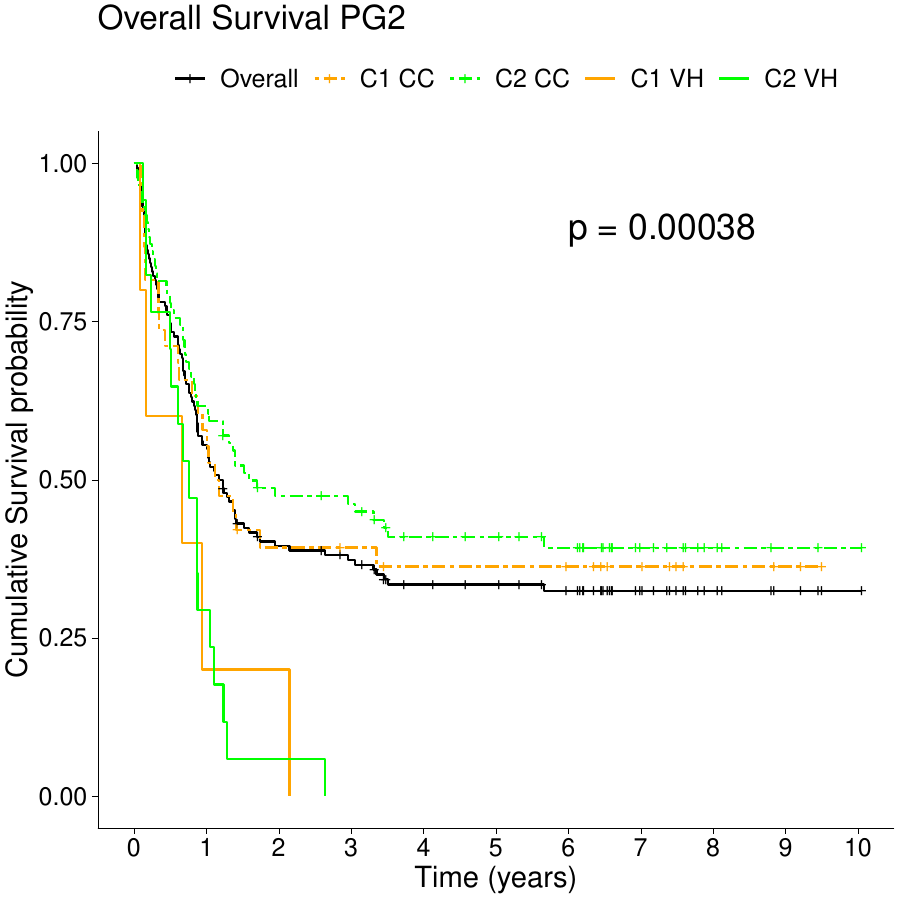}}\\
\centering
\caption{Heatmpas and KM plots for two identified protein groups.}\label{heatmap_kmcurve}
\end{figure}

Building on previous analysis, patients who exhibit differential responses to different treatment regimens may be identified through two distinct protein groups; therefore, we set $G=2$.
The application of our method partitions the 411 proteins into three groups: Protein Group 1 (PG1) with 71 proteins, Protein Group 2 (PG2) with 83 proteins, and Protein Group 3 (the non-informative group) with 257 proteins. 
Based on PG1 and PG2, respectively, the 146 patients can each be classified into two clusters. Figure \ref{heatmap_G3}(a) presents the heatmap of the identified proteins in PG1. Based on PG1, patients can be divided into two clusters: Cluster C1 (108 patients) and Cluster C2 (38 patients). 
Figure \ref{heatmap_G3}(c) shows the Kaplan–Meier plots for OS (Overall Survival), stratified by cluster and treatment under PG1.
As shown in Figure \ref{heatmap_G3}(c), patients in Cluster C1 (red) exhibit a favorable prognosis when treated with CC (dashed line), with an MS (median survival) of 39.7 months, but a very poor outcome when treated with VH (solid line), having an MS of 9.8 months. 
In contrast, patients in Cluster C2 (orange color) have similarly poor OS under both treatment regimens. This pattern closely resembles the survival curves reported in Figure 1(F) of \cite{de2024proteomics}. 
Notably, PG1 contains 9 proteins ({\it EEF2K, LMNB1, PDL1, PEA15, PIM2, PTEN, SMARCA2, SPI1, TP53BP1}) that exactly match those proteins in Protein Set 3 identified by \cite{de2024proteomics}. Furthermore, 82 of the 108 patients in Cluster C1 overlap with the C1 cluster (85 patients) shown in Figure 1(F) of \cite{de2024proteomics}.
Taken together, these results provide strong external validation for the biological and clinical relevance of the PG1 identified by our method.

Figure \ref{heatmap_G3}(b) shows the heatmap of the identified proteins in PG2. Based on PG2, the 146 patients can be divided into two clusters: Cluster C1 (43 patients) and Cluster C2 (103 patients). Figure \ref{heatmap_G3}(d) displays Kaplan–Meier curves for OS, stratified by cluster and treatment modality under PG2. 
As shown in Figure \ref{heatmap_G3}(d), patients in Cluster C1 (orange) had better survival prognosis when treated with CC (dashed line), with an MS of 13.8 months, compared to VH (solid line), which yields an MS of 8.0 months.
Similarly, patients in Cluster C2 (green) also showed a good prognosis when treated with CC (dashed line), with an MS of 19.1 months, and a poor outcome when treated with VH (solid line), having an MS of 9.1 months. 
Notably, in Figure 1(F) of \cite{de2024proteomics}, no significant difference between the two treatment approaches was observed within the corresponding cluster. In contrast, according to the results in Figure \ref{heatmap_G3}(d), our PG2-based clustering reveals clear treatment heterogeneity between patients receiving CC and those receiving VH within both clusters. 
In addition, PG2 only contains 4 proteins ({\it EIF2AK2, EIF4G1, H3K27Me3, HSPB1.pS82}), which are also included in Protein Set 3 of \cite{de2024proteomics}. Among the 61 patients classified as C2 in \cite{de2024proteomics}, 22 are assigned to C1 and 39 to C2 under our method. 
Taken together, these results indicate that our proposed method can identify a new set of proteins, under which a previously indistinguishable cluster in terms of treatment effects can be further divided into two new clusters with clearly differentiated treatment outcomes. This provides additional evidence for the advantage of our approach in capturing complex heterogeneity in AML proteomic profiles.

\section{Conclusion}\label{sec-conc}
In this paper, we develop a frequentist framework for local clustering in high-dimensional data with heterogeneous feature-specific clustering structures. In contrast to conventional clustering methods, which assume that all informative features support a common partition of the observations, our framework allows different subsets of features to induce distinct sample partitions. The proposed formulation, therefore, provides a flexible approach to analyzing data in which multiple clustering structures coexist across the feature space.

Several directions merit further investigation. First, the current framework assumes that each informative feature belongs to a single feature group and therefore supports one underlying sample partition. In some applications, however, a feature may be related to multiple biological pathways, clinical mechanisms, or latent data-generating processes \citep{ni2020bayesian}. Extending the proposed formulation to overlapping feature groups would allow a single feature to contribute to multiple clustering structures and would provide a more flexible representation of complex high-dimensional data.

Second, the present formulation focuses on feature groups that induce common sample partitions. A natural extension is to allow the partitions within a feature group to be similar but not necessarily identical. Such a relaxation would be useful when related features support approximately shared clustering structures but exhibit feature-specific deviations due to measurement error, weak secondary signals, or biological heterogeneity. Developing suitable distances or regularization schemes for clustering matrices may provide a principled way to quantify and estimate such approximate structural similarity.

Finally, further work is needed on the data-driven selection of the number of feature groups. In practice, this quantity may be unknown and may differ substantially across various datasets. Penalized optimization criteria, stability-based methods, cross-validation procedures, or information criteria could be incorporated into the proposed framework. Establishing the corresponding model selection consistency for the number of feature groups is particularly important for providing a fully adaptive local clustering procedure.

Overall, the proposed framework provides a new perspective on local clustering by explicitly treating heterogeneous sample partitions as estimable statistical structures. By jointly recovering feature groups and their associated sample partitions, it offers both a practical methodology and a theoretical foundation for identifying multiple clustering structures in modern high-dimensional data.

\bibliography{hma.bib}

{
\newpage
\section*{Appendix}

\renewcommand{\theequation}{A.\arabic{equation}}
\renewcommand\thelemma{A.\arabic{lemma}}
\renewcommand\thedefin{A.\arabic{defin}}
\setcounter{equation}{0}
\setcounter{lemma}{0}
\setcounter{defin}{0}

\section{Proofs}

\begin{defin}[M-means matrices] \label{def3}
    Let $\mathbb H^{n\times M}$ be the set of hard (cluster) labels: $\{0,1\}$-valued $n\times M$ matrices where each row has exactly a single 1. Then, we can define one class of M-means matrices as follows
    \[
    \begin{aligned}
    \mathbb M_{n,n'}^M:=&\ \left\{\mathbf X\in\mathbb R^{n\times n'}:\mathbf X\ \text{has at most M distinct rows}\right\}\\
    =&\ \left\{\mathbf Z\mathbf R: \mathbf Z\in\mathbb H^{n\times M}, \mathbf R\in\mathbb R^{M\times n'}\right\}.
    \end{aligned}
    \]
    Note that, there is a one-to-one correspondence between $\mathbf X\in\mathbb M_{n,n'}^M$ and $(\mathbf Z,\mathbf R)\in\mathbb H^{n\times M}\times\mathbb R^{M\times n'}$, up to label permutations. Thus, if $\mathbf X_1,\mathbf X_2\in\mathbb M_{n,n'}^M$ with the corresponding membership matrices $\mathbf Z_1,\mathbf Z_2\in\mathbb H^{n\times M}$, respectively, we define $\overline{\mathrm{Mis}}(\mathbf X_1,\mathbf X_2):=\overline{\mathrm{Mis}}(\mathbf Z_1,\mathbf Z_2)$.
\end{defin}

\begin{defin}[Center separation] \label{def4}
    For any $\mathbf X\in\mathbb M_{n,n'}^M$, we denote its distinct rows, referred to centers, as $\{q_m(\mathbf X),m\in[M]\}$ and define
    \[
    \delta_m(\mathbf X)=\min_{l:l\ne m}\ \|q_l(\mathbf X),q_m(\mathbf X)\|_2.
    \]
    In addition, let $n_m(\mathbf X)$ be the number of observations in cluster $m$ according to $\mathbf X$.
\end{defin}

\begin{lemma}\label{lemma1}
    If there exists a constant $0<\gamma<1$ such that $\min_{g\in[G]}\xi_g^{\mathrm{in}}>\gamma>\max_{g\in[G]}\xi_g^{\mathrm{out}}$, then taking $\tau=\gamma$ and the estimation error of $\widehat{\mathbf P}_{g}^{\mathrm{or}}$ can be bounded by
    \[
    \left\|\widehat{\mathbf P}_{g}^{\mathrm{or}}-\overline{\mathbf P}_g^{*}\right\|_1\le \frac{\left\langle{\mathbf R}_g-\tau\mathbf E_n,\overline{\mathbf P}_g^{*}-\widehat{\mathbf P}_{g}^{\mathrm{or}}\right\rangle}{\min\left\{\nu_g,1-\nu_g\right\}(\xi_g^{\mathrm{in}}-\xi_g^{\mathrm{out}})}, 
    \]
    where $\nu_g\in(0,1)$, and $\mathbf R_g$ is defined in \eqref{ref2} for $g\in[G]$. 
\end{lemma}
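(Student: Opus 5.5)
The proof is entrywise. The key input is the structure of the matrix $\mathbf R_g$ from \eqref{ref2}. I take $\mathbf R_g$ to be the deterministic (population-level, or high-probability-truncated) counterpart of $\mathbf K_g^{\mathrm{or}}$, and I assume it is built so that
$$R_{g,ii'}\ge\xi_g^{\mathrm{in}} \text{ whenever } \overline P^*_{g,ii'}=1, \qquad R_{g,ii'}\le\xi_g^{\mathrm{out}} \text{ whenever } \overline P^*_{g,ii'}=0.$$
This is exactly what the two quantities in Condition \ref{con_snr1} are designed to produce. The lower bound uses a Jensen-type estimate $\mathbb E\exp\{-(x_{ij}-x_{i'j})^2/2\vartheta_g^2\}\ge\exp\{-((d^{(g)}_{\max,j})^2+2\sigma^2)/2\vartheta_g^2\}$ for pairs inside a coarsened block. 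The upper bound comes from splitting on the event $|\epsilon_{ij}-\epsilon_{i'j}|\le d^{(g)}_{\min,j}/2$ and applying a sub-Gaussian tail bound to its complement. The factors $1\mp\kappa_{1,g}$ absorb the averaging and concentration slack. I would first verify these two bounds from the definition in \eqref{ref2}; everything after that is deterministic.

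Next I define $\nu_g$ through $\tau=\gamma$. Since $\xi_g^{\mathrm{out}}<\gamma<\xi_g^{\mathrm{in}}$, there is a unique $\nu_g\in(0,1)$ with
$$\tau=\nu_g\,\xi_g^{\mathrm{in}}+(1-\nu_g)\,\xi_g^{\mathrm{out}}.$$
Then
$$\xi_g^{\mathrm{in}}-\tau=(1-\nu_g)(\xi_g^{\mathrm{in}}-\xi_g^{\mathrm{out}}),\qquad \tau-\xi_g^{\mathrm{out}}=\nu_g(\xi_g^{\mathrm{in}}-\xi_g^{\mathrm{out}}).$$

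Now expand $\langle\mathbf R_g-\tau\mathbf E_n,\overline{\mathbf P}_g^*-\widehat{\mathbf P}_g^{\mathrm{or}}\rangle$ as a sum over pairs $(i,i')$, using that every entry of $\widehat{\mathbf P}^{\mathrm{or}}_g$ lies in $[0,1]$.

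On pairs with $\overline P^*_{g,ii'}=1$, the summand is $(R_{g,ii'}-\tau)(1-\widehat P^{\mathrm{or}}_{g,ii'})$. Here $1-\widehat P^{\mathrm{or}}_{g,ii'}=|\overline P^*_{g,ii'}-\widehat P^{\mathrm{or}}_{g,ii'}|\ge0$, so the summand is at least $(1-\nu_g)(\xi_g^{\mathrm{in}}-\xi_g^{\mathrm{out}})\,|\overline P^*_{g,ii'}-\widehat P^{\mathrm{or}}_{g,ii'}|$.

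On pairs with $\overline P^*_{g,ii'}=0$, the summand is $(\tau-R_{g,ii'})\,\widehat P^{\mathrm{or}}_{g,ii'}$. This is at least $\nu_g(\xi_g^{\mathrm{in}}-\xi_g^{\mathrm{out}})\,|\overline P^*_{g,ii'}-\widehat P^{\mathrm{or}}_{g,ii'}|$.

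Bounding both coefficients below by $\min\{\nu_g,1-\nu_g\}(\xi_g^{\mathrm{in}}-\xi_g^{\mathrm{out}})$ and summing gives
$$\langle\mathbf R_g-\tau\mathbf E_n,\overline{\mathbf P}_g^*-\widehat{\mathbf P}_g^{\mathrm{or}}\rangle\ge \min\{\nu_g,1-\nu_g\}(\xi_g^{\mathrm{in}}-\xi_g^{\mathrm{out}})\,\|\widehat{\mathbf P}_g^{\mathrm{or}}-\overline{\mathbf P}_g^*\|_1.$$
Dividing by the positive constant yields the claim.

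The main obstacle is the first step: confirming that the $\mathbf R_g$ of \eqref{ref2} really satisfies the entrywise sandwich by $\xi_g^{\mathrm{in}}$ and $\xi_g^{\mathrm{out}}$. If $\mathbf R_g$ is instead a centered noise term, such as $\mathbf K_g^{\mathrm{or}}$ minus its expectation, I would change the decomposition. I would write $\mathbf K^{\mathrm{or}}_g=\mathbf R_g+\mathbf S_g$ and use optimality of $\widehat{\mathbf P}^{\mathrm{or}}_g$ in \eqref{ora_obj}, which gives $\langle\mathbf K^{\mathrm{or}}_g-\tau\mathbf E_n,\widehat{\mathbf P}^{\mathrm{or}}_g-\overline{\mathbf P}_g^*\rangle\ge0$. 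I would then apply the same entrywise argument to the sandwiched piece and move the remaining piece to the right-hand side. The algebra with $\nu_g$ is the same in either case.
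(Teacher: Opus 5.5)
Your argument is correct and is essentially the one the paper relies on; the paper omits its proof and defers to Lemma 2 of \cite{srivastava2023robust}. As there, you define $\nu_g$ by $\tau=\nu_g\xi_g^{\mathrm{in}}+(1-\nu_g)\xi_g^{\mathrm{out}}$, split $\langle\mathbf R_g-\tau\mathbf E_n,\overline{\mathbf P}_g^*-\widehat{\mathbf P}_g^{\mathrm{or}}\rangle$ over entries with $\overline P^*_{g,ii'}=1$ and $\overline P^*_{g,ii'}=0$, and use only $\widehat{\mathbf P}_g^{\mathrm{or}}\in[0,1]^{n\times n}$.

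One correction to your framing: $\mathbf R_g$ in \eqref{ref2} is not a population-level matrix. It is the random truncation $\max\{K^{\mathrm{or}}_{g,ii'},\xi_g^{\mathrm{in}}\}$ within coarsened blocks and $\min\{K^{\mathrm{or}}_{g,ii'},\xi_g^{\mathrm{out}}\}$ between them, so your entrywise sandwich holds deterministically by construction. The Jensen and sub-Gaussian estimates you cite play no role in this lemma; they enter only in the proof of Theorem~\ref{thm1}, to bound $\|\mathbf K_g^{\mathrm{or}}-\mathbf R_g\|_1$.
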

\noindent{\it Proof.}\quad The proof is similar to the proof of Lemma 2 in \cite{srivastava2023robust} and is therefore omitted here. $\hfill\square$

\begin{lemma}\label{lemma2}
    Recall the definition that ${\mathbf K}_{g}^{\mathrm{or}}=|{\mathcal A}_g^*|^{-1}\sum_{j\in{\mathcal A}_g^{*}}{\mathbf K}_j$ for each $g\in[G]$, then we have
    \[
    \left\langle{\mathbf R}_g-\tau\mathbf E_n,\overline{\mathbf P}_g^{*}-\widehat{\mathbf P}_{g}^{\mathrm{or}}\right\rangle\le 2\left\|{\mathbf K}_{g}^{\mathrm{or}}-{\mathbf R}_g\right\|_1. 
    \]
\end{lemma}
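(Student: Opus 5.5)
We use two facts: $\widehat{\mathbf P}_g^{\mathrm{or}}$ maximizes a linear functional involving $\mathbf K_g^{\mathrm{or}}$, and $\overline{\mathbf P}_g^{*}$ is feasible. The matrix $\mathbf R_g$ defined in \eqref{ref2} plays the role of the population-level (deterministic) proxy for $\mathbf K_g^{\mathrm{or}}$. The only property of $\mathbf R_g$ we need here is that it is a fixed $n\times n$ matrix. We also use that both $\overline{\mathbf P}_g^{*}$ and $\widehat{\mathbf P}_g^{\mathrm{or}}$ lie in $[0,1]^{n\times n}$.

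We add and subtract $\mathbf K_g^{\mathrm{or}}$ inside the inner product:
\[
\left\langle\mathbf R_g-\tau\mathbf E_n,\overline{\mathbf P}_g^{*}-\widehat{\mathbf P}_g^{\mathrm{or}}\right\rangle
=\left\langle\mathbf K_g^{\mathrm{or}}-\tau\mathbf E_n,\overline{\mathbf P}_g^{*}-\widehat{\mathbf P}_g^{\mathrm{or}}\right\rangle
+\left\langle\mathbf R_g-\mathbf K_g^{\mathrm{or}},\overline{\mathbf P}_g^{*}-\widehat{\mathbf P}_g^{\mathrm{or}}\right\rangle .
\]
The first term is nonpositive. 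By Definition \ref{def_oracle}, $\widehat{\mathbf P}_g^{\mathrm{or}}$ maximizes $\langle\mathbf K_g^{\mathrm{or}}-\tau\mathbf E_n,\mathbf P_g\rangle$ over $[0,1]^{n\times n}$, and $\overline{\mathbf P}_g^{*}$ is a $\{0,1\}$-valued clustering matrix, so it is feasible. Hence $\langle\mathbf K_g^{\mathrm{or}}-\tau\mathbf E_n,\overline{\mathbf P}_g^{*}\rangle\le\langle\mathbf K_g^{\mathrm{or}}-\tau\mathbf E_n,\widehat{\mathbf P}_g^{\mathrm{or}}\rangle$.

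For the second term we use the entrywise Hölder inequality, $|\langle\mathbf A,\mathbf B\rangle|\le\|\mathbf A\|_1\|\mathbf B\|_{\max}$, with the paper's entrywise $\ell_1$ norm. Both matrices have entries in $[0,1]$, so $\|\overline{\mathbf P}_g^{*}-\widehat{\mathbf P}_g^{\mathrm{or}}\|_{\max}\le 1$. This bounds the second term by $\|\mathbf K_g^{\mathrm{or}}-\mathbf R_g\|_1$, which already gives the claim with constant $1$.

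The stated constant $2$ is then immediate. Alternatively, it arises from the triangle-inequality route: bound $|\langle\mathbf R_g-\mathbf K_g^{\mathrm{or}},\overline{\mathbf P}_g^{*}\rangle|$ and $|\langle\mathbf R_g-\mathbf K_g^{\mathrm{or}},\widehat{\mathbf P}_g^{\mathrm{or}}\rangle|$ separately, each by $\|\mathbf K_g^{\mathrm{or}}-\mathbf R_g\|_1$.

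The only point needing care is to confirm that $\overline{\mathbf P}_g^{*}$ lies in the feasible set of \eqref{ora_obj}, which holds because clustering matrices are $0/1$-valued. There is no genuine obstacle. The probabilistic work of controlling $\|\mathbf K_g^{\mathrm{or}}-\mathbf R_g\|_1$ is deferred to the subsequent concentration step toward Theorem \ref{thm1}.
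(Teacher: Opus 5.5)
Your proof is correct and is essentially the argument the paper defers to (Lemma 3 of \cite{srivastava2023robust}). You add and subtract $\mathbf K_g^{\mathrm{or}}$, drop $\langle \mathbf K_g^{\mathrm{or}}-\tau\mathbf E_n,\overline{\mathbf P}_g^{*}-\widehat{\mathbf P}_g^{\mathrm{or}}\rangle\le 0$ by optimality of $\widehat{\mathbf P}_g^{\mathrm{or}}$ against the feasible $\overline{\mathbf P}_g^{*}$, and bound the remainder entrywise, which as you note even gives constant $1$.

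One small correction: $\mathbf R_g$ in \eqref{ref2} is random, since it is built from $\mathbf K_g^{\mathrm{or}}$, so it is not a deterministic proxy. This changes nothing, because your argument holds pathwise for every realization.
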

\noindent{\it Proof.}\quad The proof is similar to that of Lemma 3 in \cite{srivastava2023robust} and is therefore omitted here.$\hfill\square$

\begin{lemma}[\cite{zhou2019analysis}, Corollary 1]\label{lemma4}
    Let $\mathbf X^*\in\mathbb M_{n,n'}^M$ be a M-means matrix, and write $n_m=n_m(\mathbf X^*)$ and $\delta_m=\delta_m(\mathbf X^*)$. Assume that $\widehat{\mathbf X}\in\mathbb R^{n\times n'}$ satisfies that $\|\widehat{\mathbf X}-\mathbf X^*\|_F\le\zeta$ and
    \begin{enumerate}[(a)]
        \item $\mathbf X^*$ has exactly $M$ nonempty clusters;
        \item $c_m^{-2}(1+\omega)^2\zeta^2/(\delta_m^2n_m)<1$ for $m\in[M]$, and constants $c_m>0$ such that $c_m+c_{m'}\le 1$ for any $m\ne m'$.
    \end{enumerate}
    Then, any $\widetilde{\mathbf X}\in\mathcal P_{\omega}(\widehat{\mathbf X})$ has exactly $M$ clusters and
    \[
    \mathrm{Mis}_{m}(\mathbf X^*;\widetilde{\mathbf X})\le\frac{c_m^{-2}(1+\omega)^2\zeta^2}{n_m\delta_m^2},\quad\forall m\in[M].
    \]   
\end{lemma}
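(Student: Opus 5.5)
The plan is a deterministic counting argument on the rows of an approximate $M$-means solution.

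\textbf{Step 1: perturbation of the approximate solution.} Fix $\widetilde{\mathbf X}\in\mathcal P_{\omega}(\widehat{\mathbf X})$. Taking $\mathbf X=\mathbf X^*\in\mathbb M_{n,n'}^M$ in the defining inequality of $\mathcal P_\omega$ gives $\|\widehat{\mathbf X}-\widetilde{\mathbf X}\|_F\le\omega\|\mathbf X^*-\widehat{\mathbf X}\|_F\le\omega\zeta$. The triangle inequality then yields
\[
\|\widetilde{\mathbf X}-\mathbf X^*\|_F\le(1+\omega)\zeta .
\]
Let $\widetilde{\mathbf x}_i$ denote the $i$-th row of $\widetilde{\mathbf X}$, and let $q_m=q_m(\mathbf X^*)$ be the center of the true cluster $\mathcal C_m$, so that $|\mathcal C_m|=n_m$.

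\textbf{Step 2: good and bad points.} Call $i\in\mathcal C_m$ \emph{bad} if $\|\widetilde{\mathbf x}_i-q_m\|_2\ge c_m\delta_m$, and \emph{good} otherwise. Let $B_m\subseteq\mathcal C_m$ be the set of bad points. A Markov/Chebyshev-type count gives
\[
|B_m|\,c_m^2\delta_m^2\le\sum_{i\in\mathcal C_m}\|\widetilde{\mathbf x}_i-q_m\|_2^2\le(1+\omega)^2\zeta^2 ,
\]
so $|B_m|\le c_m^{-2}(1+\omega)^2\zeta^2/\delta_m^2$. By assumption (b) this is strictly less than $n_m$, hence every true cluster $\mathcal C_m$ contains at least one good point.

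\textbf{Step 3: good points of different clusters are separated.} Take good points $i\in\mathcal C_m$ and $i'\in\mathcal C_{m'}$ with $m\ne m'$. By Definition \ref{def4}, $\|q_m-q_{m'}\|_2\ge\max\{\delta_m,\delta_{m'}\}$. Since $c_m+c_{m'}\le 1$,
\[
\max\{\delta_m,\delta_{m'}\}\ge c_m\delta_m+c_{m'}\delta_{m'} .
\]
Combining with the triangle inequality,
\[
\|\widetilde{\mathbf x}_i-\widetilde{\mathbf x}_{i'}\|_2\ge\|q_m-q_{m'}\|_2-\|\widetilde{\mathbf x}_i-q_m\|_2-\|\widetilde{\mathbf x}_{i'}-q_{m'}\|_2>0 .
\]
The final inequality is strict because the goodness inequalities are strict. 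Hence good points from different true clusters receive different rows of $\widetilde{\mathbf X}$.

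\textbf{Step 4: the label permutation and the bound.} The $M$ nonempty good sets occupy pairwise distinct rows of $\widetilde{\mathbf X}$, while $\widetilde{\mathbf X}$ has at most $M$ distinct rows. Therefore $\widetilde{\mathbf X}$ has exactly $M$ distinct rows. Moreover, all good points of a fixed $\mathcal C_m$ must share a single row: otherwise that cluster would use two rows and some other cluster's good points would be left with none. This defines a bijection between true and estimated labels, i.e.\ a permutation $\mathbf Q$. Under $\mathbf Q$, only bad points can be misclassified, so the number of misclassified points in cluster $m$ is at most $|B_m|$. Normalizing by $n_m$ gives $\mathrm{Mis}_m(\mathbf X^*;\widetilde{\mathbf X})\le c_m^{-2}(1+\omega)^2\zeta^2/(n_m\delta_m^2)$.

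The main obstacle is the pigeonhole step that turns separation of good points into a consistent bijection of labels. It needs strictness in Step 3 and the fact that each $\mathcal C_m$ has a good point; the latter is exactly where condition (b) is used.
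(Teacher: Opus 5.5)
Your proof is correct. The paper gives no proof of this lemma; it imports it as Corollary~1 of \cite{zhou2019analysis}. Your argument is essentially the standard one behind that result:
\begin{itemize}
\item the triangle inequality with $\mathbf X=\mathbf X^*$ in the definition of $\mathcal P_\omega$ gives $\|\widetilde{\mathbf X}-\mathbf X^*\|_F\le(1+\omega)\zeta$;
\item a Markov-type count gives $|B_m|\le c_m^{-2}(1+\omega)^2\zeta^2/\delta_m^2<n_m$;
\item $\|q_m-q_{m'}\|_2\ge\max\{\delta_m,\delta_{m'}\}\ge c_m\delta_m+c_{m'}\delta_{m'}$ strictly separates good points of different clusters;
\item pigeonhole on the at most $M$ distinct rows of $\widetilde{\mathbf X}$ forces a label bijection.
\end{itemize}

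Two small points are worth stating explicitly. First, $\delta_m>0$, which you need in order to divide in Step~2, follows from (a), since the $M$ centers are distinct. Second, the paper never defines $\mathrm{Mis}_m$. Your argument produces a single permutation $\mathbf Q$ that works for every $m$ simultaneously. The bound therefore holds under either reading: a common permutation, or a per-cluster minimum over permutations.
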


\noindent\textbf{Proof of Theorem 1.} 
Motivated by \cite{srivastava2023robust}, to show the results in Theorem \ref{thm1}, we should construct the corresponding reference matrices ${\mathbf R}_g$ such that the solution to \eqref{ref_center_obj} is some coarsening clustering matrix $\overline{\mathbf P}_g^{*}$, which corresponds to a coarsening of $\mathcal G_g^{*}$.
\begin{equation}\label{ref_center_obj}
\begin{aligned}
&\arg\max\ \left\langle{\mathbf R}_g-\tau\mathbf E_n,\mathbf P_g\right\rangle\\
&\text{subject to}\quad 0\le P_{g,ii'}\le 1\quad \text{for any}\  i,i'\in[n].
\end{aligned}
\end{equation}
Specifically, for $g\in[G]$, let ${\mathbf R}_g$ be a random matrix whose $(i,i')$-th element is defined as follows,
\begin{equation}\label{ref2}
 R_{g,ii'}:=\left\{
\begin{aligned}
&\max\left\{{K}_{g,ii'}^{\mathrm{or}},\ \xi_g^{\mathrm{in}} \right\} &\mathrm{if}&\quad i,i'\in\overline{\mathcal G}_{g,l}^{*} \ (l\in[L_g]), \\
&\min\left\{{K}_{g,ii'}^{\mathrm{or}},\ \xi_g^{\mathrm{out}} \right\} &\mathrm{if}&\quad i\in\overline{\mathcal G}_{g,l}^{*},i'\in\overline{\mathcal G}_{g,l'}^{*}\ (l\ne l'\in[L_g]). \\
\end{aligned}
\right.
\end{equation}
Here $\xi_g^{\mathrm{in}}$ and $\xi_g^{\mathrm{out}}$ are two thresholds defined for the within-cluster elements (distributed within diagonal blocks) and between-cluster elements (distributed within off-diagonal blocks), respectively. Note that, when $\xi_g^{\mathrm{in}}>\xi_g^{\mathrm{out}}$, there exists a $\gamma$ such that $\xi_g^{\mathrm{out}}<\gamma<\xi_g^{\mathrm{in}}$. Thus, taking $\tau=\gamma$ ensures the solution to \eqref{ref_center_obj} is some coarsening clustering matrix $\overline{\mathbf P}_g^{*}$.

According to Lemmas \ref{lemma1} and \ref{lemma2}, if we want to bound $\|\widehat{\mathbf P}_{g}^{\mathrm{or}}-\overline{\mathbf P}_g^{*}\|_1$, it suffices to bound $\|{\mathbf K}_{g}^{\mathrm{or}}-{\mathbf R}_g\|_1$.
Recall the definition of ${\mathbf R}^{(g)}$, we have 
\begin{equation}\label{overal_bound}
\begin{aligned}
    \left\|{\mathbf K}_{g}^{\mathrm{or}}-{\mathbf R}_g\right\|_1\le&\ \sum_{l\in[L_g]}\sum_{i,i'\in\overline{\mathcal G}_{g,l}^{*}}\mathbbm 1_{\{{K}_{g,ii'}^{\mathrm{or}}<\xi_g^{\text{in}}\}}\xi_g^{\text{in}}+\sum_{l\ne l'\in[L_g]}\sum_{i\in\overline{\mathcal G}_{g,l}^{*},i'\in\overline{\mathcal G}_{g,l'}^{*}}\mathbbm 1_{\{{K}_{g,ii'}^{\mathrm{or}}>\xi_{g}^{\text{out}}\}}(1-\xi_{g}^{\text{out}}) \\
    \le&\ \sum_{l\in[L_g]}I_g^{(l)}+\sum_{l\ne l'\in[L_g]}I_g^{(l,l')},
    \end{aligned}
\end{equation}
where $I_g^{(l)}$ and $I_g^{(l,l')}$ are the number of corruptions for the $l$-th diagonal block and $(l,l')$-th off-diagonal block, respectively. 
Recall that ${\mathbf K}_{g}^{\mathrm{or}}=s_g^{-1}\sum_{j\in{\mathcal A}_g^*}\mathbf K_j$ and $0<K_{j,ii'}\le 1$, then by Hoeffding's inequality,
\begin{equation}\label{hoefding}
\mathbb P\left[\left|{K}_{g,ii'}^{\mathrm{or}}-\mathbb E({K}_{g,ii'}^{\mathrm{or}})\right|\le t\right]\ge 1-2\exp\left\{-2s_gt^2\right\}.
\end{equation}
We let $\theta_j=\theta_{j'}=\vartheta_g$ for any $j,j'\in\mathcal A_g^*$.
When $i\in\mathcal G_{g,m}^{*},i'\in\mathcal G_{g,m'}^{*},m,m'\in\mathcal T_l,l\in[L_g]$, we have
\begin{equation}\label{inner_bound1}
\begin{aligned}
\mathbb E({K}_{g,ii'}^{\mathrm{or}})=&\ \mathbb E\left[s_g^{-1}\sum_{j\in\mathcal A_g^*}\exp\left\{-\frac{(\mu_{m,j}^{(g)}+\epsilon_{ij}-\mu_{m',j}^{(g)}-\epsilon_{i'j})^2}{2\vartheta_g^2}\right\}\right]\\
\ge&\ s_g^{-1}\sum_{j\in\mathcal A_g^*}\exp\left\{-\frac{\mathbb E(\mu_{m,j}^{(g)}-\mu_{m',j}^{(g)}+\epsilon_{ij}-\epsilon_{i'j})^2}{2\vartheta_g^2}\right\}\\
\ge&\ s_g^{-1}\sum_{j\in\mathcal A_g^*}\exp\left\{-\frac{(d_{\max,j}^{(g)})^2+2\sigma^2}{2\vartheta_g^2}\right\},\\
\end{aligned}
\end{equation}
where the first inequality follows from Jensen's inequality and the last inequality follows from the fact that $\epsilon_{ij}$ and $\epsilon_{i'j}$ are independent sub-Gaussian random variables with the same sub-Gaussian norm $\sigma$. Combining \eqref{hoefding} and \eqref{inner_bound1}, we can have
\[
\mathbb P\left[{K}_{g,ii'}^{\mathrm{or}}\ge s_g^{-1}\sum_{j\in\mathcal A_g^*}\exp\left\{-\frac{(d_{\max,j}^{(g)})^2+2\sigma^2}{2\vartheta_g^2}\right\}-t\right]\ge 1-2\exp\left\{-2s_gt^2\right\}.
\]
Moreover, let $t=(\kappa_{1,g}/s_g)\sum_{j\in\mathcal A_g^*}\exp\left\{-\frac{(d_{\max,j}^{(g)})^2+2\sigma^2}{2\vartheta_g^2}\right\}$ with $0<\kappa_{1,g}<1$ and then
\begin{equation}\label{inner_bound2}
\begin{aligned}
&\ \mathbb P\left[{K}_{\mathrm{or},ii'}^{(g)}\ge\frac{(1-\kappa_{1,g})}{s_g}\sum_{j\in\mathcal A_g^*}\exp\left\{-\frac{(d_{\max,j}^{(g)})^2+2\sigma^2}{2\vartheta_g^2}\right\}\right]\\
\ge&\ 1-2\exp\left\{-2[\kappa_{1,g}^2/s_g]\left[\sum_{j\in\mathcal A_g^*}\exp\left\{-\frac{(d_{\max,j}^{(g)})^2+2\sigma^2}{2\vartheta_g^2}\right\}\right]^2\right\}.
\end{aligned}
\end{equation}
Under \eqref{inner_bound2}, we let $\xi_g^{\mathrm{in}}=[(1-\kappa_{1,g})/s_g]\cdot\sum_{j\in\mathcal A_g^*}\exp\left\{-\frac{(d_{\max,j}^{(g)})^2+2\sigma^2}{2\vartheta_g^2}\right\}$ and then
\begin{equation}\label{inner_bound3}
\mathbb P\left[{K}_{g,ii'}^{\mathrm{or}}<\xi_g^{\mathrm{in}}\right]\le 2\exp\left\{-2(\kappa_{1,g}^2/s_g)\left[\sum_{j\in\mathcal A_g^*}\exp\left\{-\frac{(d_{\max,j}^{(g)})^2+2\sigma^2}{2\vartheta_g^2}\right\}\right]^2\right\}.
\end{equation}

Besides, when \(i\in{\mathcal G}_{g,m}^{*}\) and 
\(i'\in{\mathcal G}_{g,m'}^{*}\), where 
\(m\in\mathcal T_l\), \(m'\in\mathcal T_{l'}\), and \(l\ne l'\), we define
\[
\Delta_{mm',j}^{(g)}
=
\mu_{m,j}^{(g)}-\mu_{m',j}^{(g)},
\qquad
\eta_{ii',j}
=
\epsilon_{ij}-\epsilon_{i'j}.
\]
For each \(j\in\mathcal A_g^*\), define the events
\[
A_j=\left\{|\eta_{ii',j}|\le \frac{d_{\min,j}^{(g)}}{2}\right\},\quad \text{and}\quad
A_j^c=\left\{|\eta_{ii',j}|> \frac{d_{\min,j}^{(g)}}{2}\right\}.
\]

By the law of total expectation,
\begin{equation}
\label{eqa1}
\begin{aligned}
&\mathbb E\left[
\exp\left\{
-\frac{(\Delta_{mm',j}^{(g)}+\eta_{ii',j})^2}{2\vartheta_g^2}
\right\}
\right] \\
=&\
\mathbb E\left[
\exp\left\{
-\frac{(\Delta_{mm',j}^{(g)}+\eta_{ii',j})^2}{2\vartheta_g^2}
\right\}\Bigg\vert
A_j
\right]\mathbb P(A_j)
+
\mathbb E\left[
\exp\left\{
-\frac{(\Delta_{mm',j}^{(g)}+\eta_{ii',j})^2}{2\vartheta_g^2}
\right\}\Bigg\vert
A_j^c
\right]\mathbb P(A_j^c)\\
\le&\ \mathbb E\left[
\exp\left\{
-\frac{(\Delta_{mm',j}^{(g)}+\eta_{ii',j})^2}{2\vartheta_g^2}
\right\}\Bigg\vert
A_j
\right]+\mathbb P(A_j^c).
\end{aligned}
\end{equation}
On the event \(A_j\), by the triangle inequality,
\[
\left|\Delta_{mm',j}^{(g)}+\eta_{ii',j}\right|
\ge
\left|\Delta_{mm',j}^{(g)}\right|-|\eta_{ii',j}|
\ge
d_{\min,j}^{(g)}-\frac{d_{\min,j}^{(g)}}{2}
=
\frac{d_{\min,j}^{(g)}}{2},
\]
where the last inequality follows from $|\Delta_{mm',j}^{(g)}|\ge d_{\min,j}^{(g)}$.
Therefore,
\[
(\Delta_{mm',j}^{(g)}+\eta_{ii',j})^2\ge \frac{(d_{\min,j}^{(g)})^2}{4}.
\]
Hence,
\begin{equation}
\label{eqa2}
\mathbb E\left[
\exp\left\{
-\frac{(\Delta_{mm',j}^{(g)}+\eta_{ii',j})^2}{2\vartheta_g^2}
\right\}\Bigg\vert
A_j
\right]
\le
\exp\left\{
-\frac{(d_{\min,j}^{(g)})^2}{8\vartheta_g^2}
\right\}.
\end{equation}

By the definition of the sub-Gaussian norm, $\eta_{ii',j}$ is also a sub-Gaussian variable with sub-Gaussian norm $\sqrt{2}\sigma$. Then we have
\begin{equation}
\label{eqa3}
\mathbb P\left(|\eta_{ii',j}|> \frac{d_{\min,j}^{(g)}}{2}\right)\le 2\exp\left\{-\frac{(d_{\min,j}^{(g)})^2}{16\sigma^2}\right\}.
\end{equation}
Combining \eqref{eqa1} -- \eqref{eqa3},
\begin{equation}
\label{eqa4}
\mathbb E\left[
\exp\left\{
-\frac{(\Delta_{mm',j}^{(g)}+\eta_{ii',j})^2}{2\vartheta_g^2}
\right\}
\right] 
\le \exp\left\{
-\frac{(d_{\min,j}^{(g)})^2}{8\vartheta_g^2}
\right\}+2\exp\left\{-\frac{(d_{\min,j}^{(g)})^2}{16\sigma^2}\right\}
\end{equation}

Recall that
\[
K_{g,ii'}^{\mathrm{or}}
=
\frac{1}{s_g}
\sum_{j\in\mathcal A_g^*}
\exp\left\{
-\frac{(\Delta_{mm',j}^{(g)}+\eta_{ii'j})^2}{2\vartheta_g^2}
\right\},
\]
by \eqref{eqa4}, we have
\begin{equation}
\label{eqa5}
\mathbb E(K_{g,ii'}^{\mathrm{or}})
\le
\frac{1}{s_g}
\sum_{j\in\mathcal A_g^*}
\left[
\exp\left\{
-\frac{(d_{\min,j}^{(g)})^2}{8\vartheta_g^2}
\right\}+2\exp\left\{-\frac{(d_{\min,j}^{(g)})^2}{16\sigma^2}\right\}
\right].
\end{equation}

Combining \eqref{hoefding} and \eqref{eqa5},
\[
\mathbb P\left[{K}_{g,ii'}^{\mathrm{or}}\le s_g^{-1}\sum_{j\in\mathcal A_g^*}\left(
\exp\left\{
-\frac{(d_{\min,j}^{(g)})^2}{8\vartheta_g^2}
\right\}+2\exp\left\{-\frac{(d_{\min,j}^{(g)})^2}{16\sigma^2}\right\}
\right)+t\right]\ge 1-2\exp\left\{-2s_gt^2\right\}.
\]
Similarly, let 
\[
t=(\kappa_{1,g}/s_g)\sum_{j\in\mathcal A_g^*}\left(
\exp\left\{
-\frac{(d_{\min,j}^{(g)})^2}{8\vartheta_g^2}
\right\}+2\exp\left\{-\frac{(d_{\min,j}^{(g)})^2}{16\sigma^2}\right\}
\right),
\]
and then
\begin{equation}\label{inner_bound5}
\begin{aligned}
&\ \mathbb P\left[{K}_{g,ii'}^{\mathrm{or}}\le\frac{1+\kappa_{1,g}}{s_g}\sum_{j\in\mathcal A_g^*}\left(
\exp\left\{
-\frac{(d_{\min,j}^{(g)})^2}{8\vartheta_g^2}
\right\}+2\exp\left\{-\frac{(d_{\min,j}^{(g)})^2}{16\sigma^2}\right\}
\right)\right]\\
\ge&\ 1-2\exp\left\{-2(\kappa_{1,g}^2/s_g)\left[\sum_{j\in\mathcal A_g^*}\left(
\exp\left\{
-\frac{(d_{\min,j}^{(g)})^2}{8\vartheta_g^2}
\right\}+2\exp\left\{-\frac{(d_{\min,j}^{(g)})^2}{16\sigma^2}\right\}
\right)\right]^2\right\}.
\end{aligned}
\end{equation}
Under \eqref{inner_bound5}, we let $\xi_{g}^{\mathrm{out}}=[(1+\kappa_{1,g})/s_g]\cdot\sum_{j\in\mathcal A_g^*}\left(
\exp\left\{
-\frac{(d_{\min,j}^{(g)})^2}{8\vartheta_g^2}
\right\}+2\exp\left\{-\frac{(d_{\min,j}^{(g)})^2}{16\sigma^2}\right\}
\right)$ and then
\begin{equation}\label{inner_bound6}
\mathbb P\left[{K}_{g,ii'}^{\mathrm{or}}>\xi_{g}^{\mathrm{out}}\right]\le 2\exp\left\{-2(\kappa_{1,g}^2/s_g)\left[\sum_{j\in\mathcal A_g^*}\left(
\exp\left\{
-\frac{(d_{\min,j}^{(g)})^2}{8\vartheta_g^2}
\right\}+2\exp\left\{-\frac{(d_{\min,j}^{(g)})^2}{16\sigma^2}\right\}
\right)\right]^2\right\}.
\end{equation}

Next, we let $p_{g,l}=\mathbb P[{K}_{g,ii'}^{\mathrm{or}}<\xi_g^{\mathrm{in}}\big|i,i'\in\overline{\mathcal G}_{g,l}^{*}]$. Note that
\[
U_{g,l}:=\frac{\sum_{\{(i,i'):i<i'\in\overline{\mathcal G}_{g,l}^{*}\}}\mathbbm 1_{\{{K}_{g,ii'}^{\mathrm{or}}<\xi_g^{\text{in}}\}}}{\overline n_{l}^{(g)}(\overline n_{l}^{(g)}-1)/2}
\]
is an unbiased estimator for $p_{g,l}$ with $\overline n_l^{(g)}:=|\overline{\mathcal G}_{g,l}^{*}|$. By Bernstein's inequality for one-sample U-statistic \citep{arcones1995bernstein},
\begin{equation}\label{onesample_U1}
    \mathbb P\left[U_{g,l}-p_{g,l}>t_1\right]\le\exp\left\{-\frac{(\overline n_{l}^{(g)}/2)t_1^2}{c_1\nu_{g,l}+c_2t_1}\right\},
\end{equation}
where $\nu_{g,l}$ is the variance of the indicator random variable $\mathbbm 1_{\{{K}_{g,ii'}^{\mathrm{or}}<\xi_g^{\text{in}}\}}$ with $i,i'\in\overline{\mathcal G}_{g,l}^{*}$ and $c_1,c_2>0$ are constants. Taking $t_1=\max\{p_{g,l},c_3\log \overline n_{\min}^{(g)}/\overline n_{\min}^{(g)}\}$ where $c_3=2(c_1+c_2)$ and $\overline n_{\min}^{(g)}:=\min_{l\in[L_g]}\overline n_l^{(g)}$. Note that $\nu_{g,l}=p_{g,l}(1-p_{g,l})\le p_{g,l}\le t_1$, we can simplify \eqref{onesample_U1} to
\begin{equation}\label{onesample_U2}
    \mathbb P\left[U_{g,l}-p_{g,l}>t_1\right]\le\exp\left\{-\frac{\overline n_{l}^{(g)}t_1}{c_3}\right\}.
\end{equation}
Thus, with probability at least $1-1/\overline n_{\min}^{(g)}$,
\begin{equation}\label{onesample_U3}
\begin{aligned}
    I_g^{(l)}\le&\ 2\times\frac{\overline n_l^{(g)}(\overline n_l^{(g)}-1)}{2}\times\left[p_{g,l}+\max\left\{p_{g,l},c_3\log \overline n_{\min}^{(g)}/\overline n_{\min}^{(g)}\right\}\right]\\
    \le&\ 2\max\left\{p_{g,l},c_3\log \overline n_{\min}^{(g)}/\overline n_{\min}^{(g)}\right\}(\overline n_l^{(g)})^2.
\end{aligned}
\end{equation}
Similarly, let $p_{g,ll'}=\mathbb P[{K}_{g,ii'}^{\mathrm{or}}>\xi_g^{\mathrm{out}}\big|i\in\overline{\mathcal G}_{g,l}^{*},i'\in\overline{\mathcal G}_{g,l'}^{*}]$ and then 
\[
U_{g,ll'}:=\frac{\sum_{\{(i,i'):i\in\overline{\mathcal G}_{g,l}^{*},i'\in\overline{\mathcal G}_{g,l'}^{*}\}}\mathbbm 1_{\{{K}_{g,ii'}^{\mathrm{or}}>\xi_g^{\text{out}}\}}}{\overline n_{l}^{(g)}\overline n_{l'}^{(g)}}
\]
is also a U-statistic for $p_{g,ll'}$. By the simplified Bernstein's inequality for two-sample U-statistic ((E.14) in \cite{srivastava2023robust}), we have
\begin{equation}\label{twosample_U1}
        \mathbb P\left[U_{g,ll'}-p_{g,ll'}>t_2\right]\le\exp\left\{-\frac{\min\{\overline n_l^{(g)},\overline n_{l'}^{(g)}\}t_2^2}{c_4\nu_{g,ll'}+c_5t_2}\right\},
\end{equation}
where $\nu_{g,ll'}$ is the variance of the indicator random variable $\mathbbm 1_{\{{K}_{g,ii'}^{\mathrm{or}}>\xi_g^{\text{out}}\}}$ with $i\in\overline{\mathcal G}_{g,l}^{*},i'\in\overline{\mathcal G}_{g,l'}^{*}$, and $c_4,c_5>0$ are constants. Putting $t_2=\max\{p_{g,ll'},c_6\log \overline n_{\min}^{(g)}/\overline n_{\min}^{(g)}\}$ where $c_6=2(c_4+c_5)$ and noting that $\nu_{g,ll'}=p_{g,ll'}(1-p_{g,ll'})\le p_{g,ll'}\le t_2$, we have that with probability at least $1-1/(\overline n_{\min}^{(g)})^2$,
\begin{equation}\label{twosample_U2}
\begin{aligned}
    I_g^{(l,l')}\le&\ \overline n_l^{(g)}\overline n_{l'}^{(g)}\times\left[p_{g,ll'}+\max\left\{p_{g,ll'},c_6\log \overline n_{\min}^{(g)}/\overline n_{\min}^{(g)}\right\}\right]\\
    \le&\ 2\max\left\{p_{g,ll'},c_6\log \overline n_{\min}^{(g)}/\overline n_{\min}^{(g)}\right\}\overline n_l^{(g)}\overline n_{l'}^{(g)}.
\end{aligned}
\end{equation}
Combining Lemmas \ref{lemma1}, \ref{lemma2}, and the union bound, as well as \eqref{overal_bound}, \eqref{onesample_U3}, and \eqref{twosample_U2}, let $\rho_{g}=\min\left\{\nu_g,1-\nu_g\right\}(\xi_g^{\mathrm{in}}-\xi_g^{\mathrm{out}})$ and we have that with probability at least $1-2M_{\max}/n_{\min}$,
\begin{equation}\label{final_bound1}
\left\|\widehat{\mathbf P}_{g}^{\mathrm{or}}-\overline{\mathbf P}_g^{*}\right\|_1\le \frac{2}{\rho_g}\left\|{\mathbf K}_{g}^{\mathrm{or}}-{\mathbf R}_g\right\|_1\le\frac{4n^2}{\rho_g}\cdot\max\left\{\max_{l\ne l'\in[L_g]}p_{g,ll'},\max_{l\in[L_g]}p_{g,l},\frac{C_1\log n_{\min}}{n_{\min}}\right\},
\end{equation}
where $C_1=\max\{c_3,c_6\}$ and $n_{\min}=\min_{g\in[G],m\in[M_g]}n_m^{(g)}$. By \eqref{inner_bound3} and \eqref{inner_bound6},
\begin{equation}\label{final_bound2}
\begin{aligned}
\max_{l\ne l'\in[L_g]}p_{g,ll'}\le&\ 2\exp\left\{-2(\kappa_{1,g}^2s_g)\left[\frac{1}{s_g}\sum_{j\in\mathcal A_g^*}\left(
\exp\left\{
-\frac{(d_{\min,j}^{(g)})^2}{8\vartheta_g^2}
\right\}+2\exp\left\{-\frac{(d_{\min,j}^{(g)})^2}{16\sigma^2}\right\}
\right)\right]^2\right\},\\
\max_{l\in[L_g]}p_{g,l}\le&\ 2\exp\left\{-2(\kappa_{1,g}^2s_g)\left[\frac{1}{s_g}\sum_{j\in\mathcal A_g^*}\exp\left\{-\frac{(d_{\max,j}^{(g)})^2+2\sigma^2}{2\vartheta_g^2}\right\}\right]^2\right\}.\\
\end{aligned}
\end{equation}
Next, we show that $\rho_{g}>0$ when Condition \ref{con_snr1} holds for some $g\in[G]$. To see this, it suffices to show that $\xi_g^{\mathrm{in}}-\xi_g^{\mathrm{out}}>0$. Since $\xi_g^{\mathrm{in}}>0$ and $\xi_g^{\mathrm{out}}>0$, Condition \ref{con_snr1} implies 
\[
\begin{aligned}
\frac{\xi_g^{\mathrm{in}}}{\xi_g^{\mathrm{out}}}=&\ \left([(1-\kappa_{1,g})/s_g]\cdot\sum_{j\in\mathcal A_g^*}\exp\left\{-\frac{(d_{\max,j}^{(g)})^2+2\sigma^2}{2\vartheta_g^2}\right\}\right)\\
&\ \left([(1+\kappa_{1,g})/s_g]\cdot\sum_{j\in\mathcal A_g^*}\left[
\exp\left\{
-\frac{(d_{\min,j}^{(g)})^2}{8\vartheta_g^2}
\right\}+2\exp\left\{-\frac{(d_{\min,j}^{(g)})^2}{16\sigma^2}\right\}
\right]\right)^{-1}\\
>&\ 1.\\
\end{aligned}
\]
In \eqref{final_bound1}, note that there exists a constant $C_s>0$ such that
\[
    \left[\frac{1}{s_g}\sum_{j\in\mathcal A_g^*}\left(
\exp\left\{
-\frac{(d_{\min,j}^{(g)})^2}{8\vartheta_g^2}
\right\}+2\exp\left\{-\frac{(d_{\min,j}^{(g)})^2}{16\sigma^2}\right\}
\right)\right]^2\le C_s,
\]
\[
\left[\frac{1}{s_g}\sum_{j\in\mathcal A_g^*}\exp\left\{-\frac{(d_{\max,j}^{(g)})^2+2\sigma^2}{2\vartheta_g^2}\right\}\right]^2\le C_s
\]

Based on this result, we combine \eqref{final_bound1}, \eqref{final_bound2} and Condition \ref{con_snr1}, after taking $C=\max\{2,C_1\}\times 4/\rho_g$, then with probability at least $1-2M_{\max}/n_{\min}$
\begin{equation}\label{thm1_bound1}
\left\|\widehat{\mathbf P}_{g}^{\mathrm{or}}-\overline{\mathbf P}_g^{*}\right\|_1\le Cn^2\cdot\max\left\{\exp\left\{-\kappa s_g\right\},\frac{\log n_{\min}}{n_{\min}}\right\},
\end{equation}
where $\kappa>0$ is a constant depending on $\{\kappa_{1,g},\kappa_{2,g}\}_{g=1}^G$. We finish the proof of Theorem \ref{thm1} by applying the union bound over $G$ feature groups. $\hfill\square$

\newpage
\noindent\textbf{Proof of Theorem 2.}
To show the results in Theorem 2, we consider the neighborhood of $\mathbf P^*$ as follows,
\[
\bm\Theta=\left\{\mathbf P\in[0,1]^{n\times nG}:\sup_{g\in[G]}\left\|{\mathbf P}_g-\mathbf P_g^*\right\|_1\le r_n\right\}.
\]
Recall the definition of \eqref{obj}, let
\begin{equation}\label{objj}
\mathcal L(\mathbf P,\mathbf g)=\sum_{j=1}^p\left\langle\mathbf K_j-\tau\mathbf E_n,\mathbf P_{g_j}\right\rangle.
\end{equation}
Under $\bm\Theta$, given $g\in[G+1]$, if there exists some $j\in\mathcal A_g^*$ such that we can get a larger $\mathcal L(\mathbf P,\mathbf g)$ in \eqref{objj} by assigning it from its true group $\mathcal A_g^*$ to any other group $\mathcal A_{g'}^*$ with $g\ne g'$, then the oracle estimator $\widehat{\mathbf P}^{\mathrm{or}}$ is no longer the strictly local maximizer of $\mathcal L(\mathbf P,\mathbf g)$. In other words, if we can show that no such $j\in\mathcal A_g^*$ exists for any $g\in[G+1]$ with high probability, then $\widehat{\mathbf P}^{\mathrm{or}}$ is the strictly local maximizer of $\mathcal L(\mathbf P,\mathbf g)$ for any $\mathbf P\in\bm\Theta$ with high probability. As a result, we can perfectly recover the true feature group partition with high probability.

Given $g,g'\in[G+1]$, for any $j\in\mathcal A_g^*$ and $g'\ne g$, we have
\begin{equation}\label{min_dis}
\begin{aligned}
&\ \left\langle{\mathbf K}_j-\tau\mathbf E_n,\mathbf P_g\right\rangle-\left\langle{\mathbf K}_j-\tau\mathbf E_n,\mathbf P_{g'}\right\rangle\\
=&\ \left\langle{\mathbf K}_j-\tau\mathbf E_n,\mathbf P_g^*-\mathbf P_{g'}^*\right\rangle-\left[\left\langle{\mathbf K}_j-\tau\mathbf E_n,\mathbf P_{g'}-\mathbf P_{g'}^*\right\rangle-\left\langle{\mathbf K}_j-\tau\mathbf E_n,\mathbf P_{g}-\mathbf P_{g}^*\right\rangle\right]\\
\ge&\ \left\langle{\mathbf K}_j-\tau\mathbf E_n,\mathbf P_g^*-\mathbf P_{g'}^*\right\rangle-2r_n,
\end{aligned}
\end{equation}
where the last inequality follows from $\|{\mathbf K}_j-\tau\mathbf E_n\|_{\max}\le 1$ and $\mathbf P\in\bm\Theta$. To obtain the lower bound of $\left\langle{\mathbf K}_j-\tau\mathbf E_n,\mathbf P_g^*-\mathbf P_{g'}^*\right\rangle$, we should discuss ${\mathbf K}_j-\tau\mathbf E_n$ and $\mathbf P_g^*-\mathbf P_{g'}^*$, respectively.

By the definitions of $\mathbf P_g^*$ and $\mathbf P_{g'}^*$, the elements of $\mathbf P_g^*-\mathbf P_{g'}^*$ exhibit three different distribution patterns listed below.
\begin{itemize}
    \item If $x_{i,j}$ and $x_{i',j}$ belong to the same sample cluster or two different clusters in both the $g$-th feature group and the $g'$-th feature group, then $P_{g,ii'}^*-P_{g',ii'}^*=0$;
    \item If $x_{i,j}$ and $x_{i',j}$ belong to the same sample cluster in the $g$-th feature group, while they belong to two different clusters in the $g'$-th feature group, then $P_{g,ii'}^*-P_{g',ii'}^*=1-0=1$;
    \item If $x_{i,j}$ and $x_{i',j}$ belong to two different sample clusters in the $g$-th feature group, while they belong to the same cluster in the $g'$-th feature group, then $P_{g,ii'}^*-P_{g',ii'}^*=0-1=-1$.
\end{itemize}

Firstly, given a pair of $(g,g')$ with $g\ne g'\in[G+1]$ and a pair of $(h,h')\in\mathcal D_1^{gg'}\cap\mathcal D_{\mathrm{I}}^{gg'}$, we consider $\Delta_{j,hh'}^{(gg')}=\mathbb E(K_{j,ii'}|i\in\mathcal H_h^{(gg')},i'\in\mathcal H_{h'}^{(gg')})$ with $j\in\mathcal A_g^*$. Then,
\[
\widehat{\Delta}_{j,hh'}^{(gg')}=\frac{\sum_{i\in\mathcal H_h^{(gg')},i'\in\mathcal H_{h'}^{(gg')}}K_{j,ii'}}{n_h^{(gg')}\cdot n_{h'}^{(gg')}}
\]
is an unbiased estimator for $\Delta_{j,hh'}^{(gg')}$. By Bernstein's inequality for two-sample U-statistics ((E.14) in \cite{srivastava2023robust}), we have
\begin{equation}\label{berstein_onesample}
\mathbb P\left[\widehat{\Delta}_{j,hh'}^{(gg')}-\Delta_{j,hh'}^{(gg')}< -t\right]\le \exp\left\{-\frac{\min\left\{n_h^{(gg')},n_{h'}^{(gg')}\right\}t^2}{c_1'v_{j,hh'}^{(gg')}+c_2't}\right\},
\end{equation}
where $v_{j,hh'}^{(gg')}=\mathrm{Var}(K_{j,ii'}|i\in\mathcal H_h^{(gg')},i'\in\mathcal H_{h'}^{(gg')})$ and $c_1',c_2'>0$ are constants. 
Since there exists some $m\in[M_g]$ such that $\mathcal H_{h}^{(gg')},\mathcal H_{h'}^{(gg')}\subset{\mathcal G}_{g,m}^{*}$, we have
\begin{equation}\label{lower_expectation}
\begin{aligned}
\Delta_{j,hh'}^{(gg')}=&\ \mathbb E\left[\exp\left\{-\frac{(\mu_{m,j}^{(g)}+\epsilon_{i,j}-\mu_{m,j}^{(g)}-\epsilon_{i',j})^2}{2\theta_j^2}\right\}\right]\\
\ge&\ \exp\left\{-\frac{\mathbb E(\epsilon_{i,j}-\epsilon_{i',j})^2}{2\theta_j^2}\right\}\\
\ge&\ \exp\left\{-\frac{\sigma^2}{\theta_j^2}\right\},\\
=&\ \exp\left\{-\frac{1}{\kappa_j^2(\text{SNR}_{\min,j}^{(g)})^2}\right\},\\
\end{aligned}
\end{equation}
where the first inequality follows from Jensen's inequality, and the last inequality follows from the fact that $\epsilon_{ij}$ and $\epsilon_{i'j}$ are independent sub-Gaussian random variables with the same sub-Gaussian norm $\sigma$. 
Moreover, due to $0<K_{j,ii'}\le 1$, we have
\begin{equation}\label{upper_variance}
\begin{aligned}
v_{j,hh'}^{(gg')}=&\ \mathbb E(K_{j,ii'}^2|i\in\mathcal H_h^{(gg')},i'\in\mathcal H_{h'}^{(gg')})-(\Delta_{j,hh'}^{(gg')})^2\le 1-\exp\left\{-\frac{2}{\kappa_j^2(\text{SNR}_{\min,j}^{(g)})^2}\right\}.\\
\end{aligned}
\end{equation}
Let $t_1=\Delta_{j,hh'}^{(gg')}-\zeta_j^{\mathrm{in}}$. By \eqref{lower_expectation} and the definition of $\zeta_j^{\mathrm{in}}$, we have $0<\Delta_{j,hh'}^{(gg')}-\zeta_j^{\mathrm{in}}<\Delta_{j,hh'}^{(gg')}\le 1$. Now we should discuss two cases: $t_1\ge v_{j,hh'}^{(gg')}$ or $t_1< v_{j,hh'}^{(gg')}$. Specifically, if $t_1\ge v_{j,hh'}^{(gg')}$, \eqref{berstein_onesample} can be rewritten as
\begin{equation}\label{berstein_1}
\begin{aligned}
\mathbb P\left[\widehat{\Delta}_{j,hh'}^{(gg')}<\zeta_j^{\mathrm{in}}\big| t_1\ge v_{j,hh'}^{(gg')}\right]\le&\ \exp\left\{-\frac{\min\left\{n_h^{(gg')},n_{h'}^{(gg')}\right\}t_1^2}{c_1'v_{j,hh'}^{(gg')}+c_2't_1}\right\}\\
\le&\ \exp\left\{-\frac{\min\left\{n_h^{(gg')},n_{h'}^{(gg')}\right\}\cdot(\Delta_{j,hh'}^{(gg')}-\zeta_j^{\mathrm{in}})}{c_1'+c_2'}\right\}\\
\le&\ \exp\left\{-C_{1,j}'n_{\min}^{\mathrm{diff}}\right\},\\
\end{aligned}
\end{equation}
where $C_{1,j}'=(1-\kappa_j')\exp\left\{-\frac{1}{\kappa_j^2(\text{SNR}_{\min,j}^{(g)})^2}\right\}/(c_1'+c_2')>0$ is a constant. Besides, if $t_1< v_{j,hh'}^{(gg')}$, under \eqref{berstein_onesample}, \eqref{upper_variance}, we can have
\begin{equation}\label{berstein_2}
\begin{aligned}
\mathbb P\left[\widehat{\Delta}_{j,hh'}^{(gg')}<\zeta_j^{\mathrm{in}}\big |t_1< v_{j,hh'}^{(gg')}\right]\le&\ \exp\left\{-\frac{\min\left\{n_h^{(gg')},n_{h'}^{(gg')}\right\}t_1^2}{c_1'v_{j,hh'}^{(gg')}+c_2't_1}\right\}\\
\le&\ \exp\left\{-\frac{\min\left\{n_h^{(gg')},n_{h'}^{(gg')}\right\}\cdot(\Delta_{j,hh'}^{(gg')}-\zeta_j^{\mathrm{in}})^2}{(c_1'+c_2')\left(1-\exp\left\{-\frac{2}{\kappa_j^2(\mathrm{SNR}_{\min,j}^{(g)})^2}\right\}\right)}\right\}\\
\le&\ \exp\left\{-C_{2,j}'n_{\min}^{\mathrm{diff}}\right\},\\
\end{aligned}
\end{equation}
where $C_{2,j}'=\left[(1-\kappa_j')\exp\left\{-\frac{1}{\kappa_j^2(\text{SNR}_{\min,j}^{(g)})^2}\right\}\right]^2\Big/\left[(c_1'+c_2')\left(1-\exp\left\{-\frac{2}{\kappa_j^2(\mathrm{SNR}_{\min,j}^{(g)})^2}\right\}\right)\right]>0$ is a constant.
Consequently, given $g\ne g'\in[G+1]$, $j\in\mathcal A_g^*$ and a pair of $(h,h')\in\mathcal D_1^{gg'}$,
\[
\begin{aligned}
\mathbb P\left[\widehat{\Delta}_{j,hh'}^{(gg')}<\zeta_j^{\mathrm{in}}\right]=&\ \mathbb P\left[\widehat{\Delta}_{j,hh'}^{(gg')}<\zeta_j^{\mathrm{in}}\big |t_1< v_{j,hh'}^{(gg')}\right]\mathbb P\left[t_1< v_{j,hh'}^{(gg')}\right]\\
&\ +\mathbb P\left[\widehat{\Delta}_{j,hh'}^{(gg')}<\zeta_j^{\mathrm{in}}\big |t_1\ge v_{j,hh'}^{(gg')}\right]\mathbb P\left[t_1\ge v_{j,hh'}^{(gg')}\right]\\
\le&\ \mathbb P\left[\widehat{\Delta}_{j,hh'}^{(gg')}<\zeta_j^{\mathrm{in}}\big |t_1< v_{j,hh'}^{(gg')}\right]+\mathbb P\left[\widehat{\Delta}_{j,hh'}^{(gg')}<\zeta_j^{\mathrm{in}}\big |t_1\ge v_{j,hh'}^{(gg')}\right]\\
\le&\ \exp\left\{-C_{2,j}'n_{\min}^{\mathrm{diff}}\right\}+\exp\left\{-C_{1,j}'n_{\min}^{\mathrm{diff}}\right\}.
\end{aligned}
\]
As a result, given $g\in[G+1]$ and $j\in\mathcal A_g^*$,
\[
\mathbb P\left[\min_{g'\ne g\in[G+1],(h,h')\in\mathcal D_1^{gg'}}\widehat{\Delta}_{j,hh'}^{(gg')}\ge\zeta_j^{\mathrm{in}}\right]\ge 1-G|\mathcal D_{1}|\times\left[\exp\left\{-C_{1,j}'n_{\min}^{\mathrm{diff}}\right\}+\exp\left\{-C_{2,j}'n_{\min}^{\mathrm{diff}}\right\}\right],
\]
where $|\mathcal D_{1}|:=\max_{g'\ne g\in[G+1]}|\mathcal D_{1}^{gg'}|$. Let $\zeta_{\min}^{\mathrm{in}}:=\min_{j\in[p]}\zeta_j^{\mathrm{in}}$ and $C_{12}':=\min_{j\in[p]}\min\{C_{1,j}',C_{2,j}'\}$,
then
\begin{equation}\label{eq1}
\mathbb P\left[\min_{g\in[G+1],j\in\mathcal A_g^*}\min_{g'\ne g\in[G+1],(h,h')\in\mathcal D_1^{gg'}}\widehat{\Delta}_{j,hh'}^{(gg')}\ge\zeta_{\min}^{\mathrm{in}}\right]\ge 1-2pG|\mathcal D_{1}|\times\left[\exp\left\{-C_{12}'n_{\min}^{\mathrm{diff}}\right\}\right].
\end{equation}

Secondly, given a pair of $(g,g')$ with $g\ne g'\in[G+1]$ and a pair of $(h,h')\in\mathcal D_{-1}^{gg'}$, we also consider $\Delta_{j,hh'}^{(gg')}=\mathbb E(K_{j,ii'}|i\in\mathcal H_h^{(gg')},i'\in\mathcal H_{h'}^{(gg')})$ with $j\in\mathcal A_g^*$. Then,
\[
\widehat{\Delta}_{j,hh'}^{(gg')}=\frac{\sum_{i\in\mathcal H_h^{(gg')},i'\in\mathcal H_{h'}^{(gg')}}K_{j,ii'}}{n_h^{(gg')}\cdot n_{h'}^{(gg')}}
\]
is an unbiased estimator for $\Delta_{j,hh'}^{(gg')}$. In this case, we can find that there exist $m\ne m'\in[M_g]$ such that $\mathcal H_{h}^{(gg')}\subset{\mathcal G}_{g,m}^{*}$ and $\mathcal H_{h'}^{(gg')}\subset{\mathcal G}_{g,m'}^{*}$. 
By arguments analogous to those used in deriving \eqref{eqa1} -- \eqref{eqa5}, we have
\begin{equation}\label{lower_expectation2}
\begin{aligned}
\Delta_{j,hh'}^{(gg')}=&\ \mathbb E\left[\exp\left\{-\frac{(\mu_{m,j}^{(g)}+\epsilon_{i,j}-\mu_{m',j}^{(g)}-\epsilon_{i',j})^2}{2\theta_j^2}\right\}\right]\\
\le&\ \exp\left\{-\frac{(d_{\min,j}^{(g)})^2}{8\theta_j^2}\right\}+2\exp\left\{-\frac{(d_{\min,j}^{(g)})^2}{16\sigma^2}\right\}\\
=&\ \exp\left\{-\frac{1}{8\kappa_j^2}\right\}+2\exp\left\{-\frac{(\text{SNR}_{\min,j}^{(g)})^2}{16}\right\}.
\end{aligned}
\end{equation}
Similarly, we can show that 
\begin{equation}\label{upper_variance2}
v_{j,hh'}^{(gg')}\le\mathbb E(K_{j,ii'}^2|i\in\mathcal H_h^{(gg')},i'\in\mathcal H_{h'}^{(gg')})\le\exp\left\{-\frac{(d_{\min,j}^{(g)})^2}{4\theta_j^2}\right\}+2\exp\left\{-\frac{(d_{\min,j}^{(g)})^2}{16\sigma^2}\right\}.
\end{equation}
Let $t_2=\zeta_j^{\mathrm{out}}-\Delta_{j,hh'}^{(gg')}$. By \eqref{lower_expectation2} and the definition of $\zeta_j^{\mathrm{out}}$, we have $t_2>0$. Now we should discuss two cases: $t_2\ge v_{j,hh'}^{(gg')}$ or $t_2< v_{j,hh'}^{(gg')}$. Specifically, if $t_2\ge v_{j,hh'}^{(gg')}$, by \eqref{berstein_onesample}, we can have
\begin{equation}\label{berstein_3}
\begin{aligned}
\mathbb P\left[\widehat{\Delta}_{hh'}^{(j)}>\zeta_j^{\mathrm{out}}\big|t_2\ge v_{j,hh'}^{(gg')}\right]\le&\ \exp\left\{-\frac{\min\left\{n_h^{(gg')},n_{h'}^{(gg')}\right\}t_2^2}{c_1'v_{j,hh'}^{(gg')}+c_2't_2}\right\}\\
\le&\ \exp\left\{-\frac{\min\left\{n_h^{(gg')},n_{h'}^{(gg')}\right\}\cdot(\zeta_j^{\mathrm{out}}-\Delta_{j,hh'}^{(gg')})}{c_1'+c_2'}\right\}\\
\le&\ \exp\left\{-C_{3,j}'n_{\min}^{\mathrm{diff}}\right\},\\
\end{aligned}
\end{equation}
where $C_{3,j}'=(\zeta_j^{\mathrm{out}}-\Delta_{j,hh'}^{(gg')})/(c_1'+c_2')>0$ is a constant. Besides, if $t_2< v_{j,hh'}^{(gg')}$, under \eqref{berstein_onesample} and \eqref{upper_variance2}, we can have
\begin{equation}\label{berstein_4}
\begin{aligned}
&\ \mathbb P\left[\widehat{\Delta}_{j,hh'}^{(gg')}>\zeta_j^{\mathrm{out}}\big|t_2< v_{j,hh'}^{(gg')}\right]\\
\le&\ \exp\left\{-\frac{\min\left\{n_h^{(gg')},n_{h'}^{(gg')}\right\}t_2^2}{c_1'v_{j,hh'}^{(gg')}+c_2't_2}\right\}\\
\le&\ \exp\left\{-\frac{\min\left\{n_h^{(gg')},n_{h'}^{(gg')}\right\}\cdot(\zeta_j^{\mathrm{out}}-\Delta_{j,hh'}^{(gg')})^2}{(c_1'+c_2')\left(\exp\left\{-\frac{(d_{\min,j}^{(g)})^2}{4\theta_j^2}\right\}+2\exp\left\{-\frac{(d_{\min,j}^{(g)})^2}{16\sigma^2}\right\}\right)}\right\}\\
\le&\ \exp\left\{-C_{4,j}'n_{\min}^{\mathrm{diff}}\right\},\\
\end{aligned}
\end{equation}
where $C_{4,j}'=(\zeta_j^{\mathrm{out}}-\Delta_{j,hh'}^{(gg')})^2/\left[(c_1'+c_2')\left(\exp\left\{-\frac{(d_{\min,j}^{(g)})^2}{4\theta_j^2}\right\}+2\exp\left\{-\frac{(d_{\min,j}^{(g)})^2}{16\sigma^2}\right\}\right)\right]>0$ is a constant. 
Consequently, given $g\ne g'\in[G+1]$, $j\in\mathcal A_g^*$ and a pair of $(h,h')\in\mathcal D_{-1}^{gg'}$,
\[
\begin{aligned}
\mathbb P\left[\widehat{\Delta}_{j,hh'}^{(gg')}>\zeta_j^{\mathrm{out}}\right]=&\ \mathbb P\left[\widehat{\Delta}_{j,hh'}^{(gg')}>\zeta_j^{\mathrm{out}}\big |t_2< v_{j,hh'}^{(gg')}\right]\mathbb P\left[t_2< v_{j,hh'}^{(gg')}\right]\\
&\ +\mathbb P\left[\widehat{\Delta}_{j,hh'}^{(gg')}>\zeta_j^{\mathrm{out}}\big |t_2\ge v_{j,hh'}^{(gg')}\right]\mathbb P\left[t_2\ge v_{j,hh'}^{(gg')}\right]\\
\le&\ \mathbb P\left[\widehat{\Delta}_{j,hh'}^{(gg')}>\zeta_j^{\mathrm{out}}\big |t_2< v_{j,hh'}^{(gg')}\right]+\mathbb P\left[\widehat{\Delta}_{j,hh'}^{(gg')}>\zeta_j^{\mathrm{out}}\big |t_2\ge v_{j,hh'}^{(gg')}\right]\\
\le&\ \exp\left\{-C_{4,j}'n_{\min}^{\mathrm{diff}}\right\}+\exp\left\{-C_{3,j}'n_{\min}^{\mathrm{diff}}\right\}.
\end{aligned}
\]

As a result, given $g\in[G+1]$ and $j\in\mathcal A_g^*$,
\[
\mathbb P\left[\max_{g'\ne g\in[G+1],(h,h')\in\mathcal D_{-1}^{gg'}}\widehat{\Delta}_{j,hh'}^{(gg')}\le\zeta_j^{\mathrm{out}}\right]\ge 1-G|\mathcal D_{-1}|\times\left[\exp\left\{-C_{3,j}'n_{\min}^{\mathrm{diff}}\right\}+\exp\left\{-C_{4,j}'n_{\min}^{\mathrm{diff}}\right\}\right],
\]
where $|\mathcal D_{-1}|:=\max_{g\in[G+1]}\max_{g'\ne g\in[G+1]}|\mathcal D_{-1}^{gg'}|$. Let $\zeta_{\max}^{\mathrm{out}}:=\max_{j\in[p]}\zeta_j^{\mathrm{out}}$ and $C_{34}':=\min_{j\in[p]}\min\{C_{3,j}',C_{4,j}'\}$,
then
\begin{equation}\label{eq2}
\mathbb P\left[\max_{g\in[G+1],j\in\mathcal A_g^*}\max_{g'\ne g\in[G+1],(h,h')\in\mathcal D_1^{gg'}}\widehat{\Delta}_{j,hh'}^{(gg')}\le\zeta_{\max}^{\mathrm{out}}\right]\ge 1-2pG|\mathcal D_{-1}|\times\left[\exp\left\{-C_{34}'n_{\min}^{\mathrm{diff}}\right\}\right].
\end{equation}

Therefore, combining \eqref{eq1}, \eqref{eq2}, and taking $C'=\min\{C_{12}',C_{34}'\}$, then with probability at least $1-2pG(|\mathcal D_1|+|\mathcal D_{-1}|)\exp\left\{-C'n_{\min}^{\mathrm{diff}}\right\}$,
\[
\begin{aligned}
\inf_{g\ne g'\in[G+1],j\in\mathcal A_g^*}\left\langle{\mathbf K}_j-\tau\mathbf E_n,\mathbf P_g^*-\mathbf P_{g'}^*\right\rangle\ge&\ \min\{\zeta_{\min}^{\mathrm{in}}-\tau,\tau-\zeta_{\max}^{\mathrm{out}}\}\sum_{(h,h')\in\mathcal D_{\mathrm{I}}^{(gg')}}\left\|\mathbf P_{g,hh'}^*-\mathbf P_{g',hh'}^*\right\|_1\\
&\quad -\max\{\tau,1-\tau\}\sum_{(h,h')\in\mathcal D_{\mathrm{II}}^{(gg')}}\left\|\mathbf P_{g,hh'}^*-\mathbf P_{g',hh'}^*\right\|_1
\end{aligned}
\]
Consequently, if $\Lambda_{gg'}> 2r_n$ hold for any $g\ne g'\in[G+1]$. Then, for any $g\ne g'\in[G+1]$ and $j\in\mathcal A_g^*$, assigning $\mathbf K_j$ from $\mathcal A_g^*$ to $\mathcal A_{g'}^*$ is certain to decrease $\mathcal L(\mathcal A,\mathcal P)$ with high probability, which concludes the proof.
$\hfill\square$

\newpage
\noindent\textbf{Proof of Theorem 3.} To show the result in Theorem 3, if suffices to verify the conditions in Lemma \ref{lemma4} hold with high probability. 
Recall Definitions \ref{def3} and \ref{def4}.
Specifically, for $g\in[G]$, it is easy to show that $\mathbf P_g^*\in\mathbb M_{n,n}^{M_g}$, $n_{m_g}(\mathbf P_g^*)=n_{m_g}^{(g)}$ and $\delta_{m_g}(\mathbf P_g^*)\ge \sqrt{2n_{\min}}$. Let $\widehat{\mathbf P}_g^{(M_g)}$ be the $M_g$-truncated eigenvalue decomposition (EVD) of $\widehat{\mathbf P}_g$ and $\lambda_k(\widehat{\mathbf P}_g)$ be the $k$-th largest eigenvalue of $\widehat{\mathbf P}_g$. Note that, for squared matrix, the $k$-th singular value is the corresponding $k$-th absolute eigenvalue, then we have
\[
\begin{aligned}
\left\|\widehat{\mathbf P}_g^{(M_g)}-\mathbf P_g^*\right\|_2\le&\ \left\|\widehat{\mathbf P}_g^{(M_g)}-\widehat{\mathbf P}_g\right\|_2+\left\|\widehat{\mathbf P}_g-\mathbf P_g^*\right\|_2\\
=&\ \left|\lambda_{M_g+1}(\widehat{\mathbf P}_g)\right|+\left\|\widehat{\mathbf P}_g-\mathbf P_g^*\right\|_2\\
\le&\ 2\left\|\widehat{\mathbf P}_g-\mathbf P_g^*\right\|_2,\\
\end{aligned}
\]
where the equality follows from (33) in \cite{zhou2019analysis} and the last inequality follows from Weyl's theorem on the perturbation of singular values, that is, $\big||\lambda_i(\widehat{\mathbf P}_g)|-|\lambda_i(\mathbf P_g^*)|\big|\le \|\widehat{\mathbf P}_g-\mathbf P_g^*\|_2$ as well as $\lambda_{M_g+1}(\widehat{\mathbf P}_g^*)=0$. Since $\widehat{\mathbf P}_g^{(M_g)}-\mathbf P_g^*$ is of rank at most $2M_g$, we can have
\[
\begin{aligned}
\left\|\widehat{\mathbf P}_g^{(M_g)}-\mathbf P_g^*\right\|_F\le&\ \sqrt{2M_g}\left\|\widehat{\mathbf P}_g^{(M_g)}-\mathbf P_g^*\right\|_2\\
\le&\ 2\sqrt{2M_g}\left\|\widehat{\mathbf P}_g-\mathbf P_g^*\right\|_2\\
\le&\ 2\sqrt{2M_g}\left\|\widehat{\mathbf P}_g-\mathbf P_g^*\right\|_F\\
=&\ 2\sqrt{2M_g}\left\|\widehat{\mathbf P}_g-\mathbf P_g^*\right\|_1^{1/2}\\
\le&\ 2\sqrt{2M_g}r_n^{1/2}
\end{aligned}
\]
with probability at least $1-2GM_{\max}/n_{\min}-2pG(|\mathcal D_1|+|\mathcal D_{-1}|)\cdot\exp\{-C'n_{\min}^{\mathrm{diff}}\}$, where the equality holds due to the fact that all elements of $\widehat{\mathbf P}_g,\mathbf P_g^*$ are 0 or 1, and the last inequality follows from Theorems \ref{thm1} and \ref{thm2}. 

Now we are going to check Condition (b) in Lemma \ref{lemma4}. Let $c_{m_g}=1/2$ for each $m_g\in[M_g]$, since $n\rightarrow\infty$ and $s_{\min}\rightarrow\infty$, with probability at least $1-2GM_{\max}/n_{\min}-2pG(|\mathcal D_1|+|\mathcal D_{-1}|)\cdot\exp\{-C'n_{\min}^{\mathrm{diff}}\}$, we have
\[
\frac{c_{m_g}^{-2}(1+\omega)^2\zeta^2}{[\delta_{m_g}(\mathbf P_g^*)]^2n_{m_g}(\mathbf P_g^*)}\le \frac{4(1+\omega)^2\cdot 8M_gr_n}{2n_{\min}^2}\le 16M_{\max}(1+\omega)^2r_n/n_{\min}^2\le 1
\]
for sufficiently large $n$ and $s_{\min}$. Accordingly, with probability at least $1-2GM_{\max}/n_{\min}-2pG(|\mathcal D_1|+|\mathcal D_{-1}|)\cdot\exp\{-C'n_{\min}^{\mathrm{diff}}\}$, we have
\[
\sup_{g\in[G]}\overline{\mathrm{Mis}}\left(\mathbf P_g^*;\mathcal P_{\omega}(\widehat{\mathbf P}_g^{(M_g)})\right)\le 16M_{\max}(1+\omega)^2C''\cdot\max\left\{\exp[-\kappa s_{\min}],\frac{\log n_{\min}}{n_{\min}}\right\},
\]
where $C''>0$ is a constant.

\section{Additional remarks for Section 3.2}

Figure \ref{fig:toy1} is provided as a toy example to help readers understand the definitions involved in measuring the discrepancy between two clustering matrices.
Figures \ref{fig:toy1}(1) and \ref{fig:toy1}(2) illustrate the true clustering matrices associated with two distinct clustering structures. 
An occupied entry at position $(i,j)$ indicates that observations $i$ and $j$ are assigned to the same cluster, while an empty entry indicates that they belong to different clusters. 
Figure \ref{fig:toy1}(3) visualizes the entrywise difference between the two clustering matrices. 
Red entries correspond to locations where the difference equals 1, green entries correspond to locations where the difference equals -1, and blank entries indicate no discrepancy between the two clustering matrices (i.e., a difference of 0).
For ease of theoretical analysis, we can permute the rows and columns of the clustering matrix in Figure \ref{fig:toy1}(2), leading to a rearranged clustering matrix in Figure \ref{fig:toy1}(5). As a result, a explicit separation matrix with block-wise 1's, -1's or 0's are obtained. 
In Figure \ref{fig:toy1}(6), $h_{gg'}=4$, the red-dot region represents a subset of $\mathcal D_1^{gg'}$, whereas the green-dot region represents a subset of $\mathcal D_{-1}^{gg'}$.

}

\begin{figure}[H]
    \centering
\begin{tikzpicture}
    \node at (0,8) {
    \begin{tikzpicture}[scale=0.8] 
    \draw[step=0.2cm, gray, very thin] (0,0) grid (6,6);

    \foreach \i in {1,...,30} {
        \pgfmathsetmacro{\yy}{6.1-0.2*\i}
        \node[left,font=\fontsize{4pt}{4pt}\selectfont] at (-0.05,\yy) {$\i$};
    }
    
    \foreach \x in {0, 0.2, ..., 1.8} {
        \foreach \y in {4, 4.2, ..., 5.8} {
           \fill[blue] (\x+0.1,\y+0.1) circle(0.05); 
        }
    }
    \foreach \x in {2, 2.2, ..., 5.8} {
        \foreach \y in {0, 0.2, ..., 3.8} {
          \fill[orange] (\x+0.1,\y+0.1) circle(0.05);
        }
    }
    \node[below,font=\small] at (3,-0.35) {(1)};
    \end{tikzpicture}
    };

    \node at (5.5,8) {
    \begin{tikzpicture}[scale=0.8] 
    \draw[step=0.2cm, gray, very thin] (0,0) grid (6,6);

    \foreach \i in {1,...,30} {
        \pgfmathsetmacro{\yy}{6.1-0.2*\i}
        \node[left,font=\fontsize{4pt}{4pt}\selectfont] at (-0.05,\yy) {$\i$};
    }
    
    \foreach \x in {0, 0.4, ..., 5.6} {
        \foreach \y in {0.2, 0.6, ..., 6.2} {
           \fill[blue] (\x+0.1,\y+0.1) circle(0.05); 
        }
    }
    \foreach \x in {0.2, 0.6, ..., 6.2} {
        \foreach \y in {0, 0.4, ..., 5.6} {
           \fill[orange] (\x+0.1,\y+0.1) circle(0.05); 
        }
    }
    \node[below,font=\small] at (3,-0.35) {(2)};
    \end{tikzpicture}
    };

    \node at (11,8) {
\begin{tikzpicture}[scale=0.8] 
    \draw[step=0.2cm, gray, very thin] (0,0) grid (6,6);

    \foreach \i in {1,...,30} {
        \pgfmathsetmacro{\yy}{6.1-0.2*\i}
        \node[left,font=\fontsize{4pt}{4pt}\selectfont] at (-0.05,\yy) {$\i$};
    }

    \foreach \c in {1,...,30} {
        \foreach \r in {1,...,30} {

            \pgfmathsetmacro{\xx}{0.2*(\c-1)}
            \pgfmathsetmacro{\yy}{0.2*(\r-1)}

            \pgfmathtruncatemacro{\first}{
                ((\c <= 10 && \r >= 21) || (\c >= 11 && \r <= 20)) ? 1 : 0
            }

            \pgfmathtruncatemacro{\second}{
                mod(\c+\r,2) == 1 ? 1 : 0
            }

            \ifnum\first=1
                \ifnum\second=0
                    \fill[red] (\xx+0.1,\yy+0.1) circle(0.05);
                \fi
            \fi

            \ifnum\first=0
                \ifnum\second=1
                    \fill[green] (\xx+0.1,\yy+0.1) circle(0.05);
                \fi
            \fi

        }
    }
\node[below,font=\small] at (3,-0.35) {(3)};
\end{tikzpicture}
};

    \node at (0,0) {
    \begin{tikzpicture}[scale=0.8] 
    \draw[step=0.2cm, gray, very thin] (0,0) grid (6,6);

    \foreach \lab [count=\i from 1] in {1,3,5,7,9,2,4,6,8,10,12,14,16,18,20,22,24,26,28,30,11,13,15,17,19,21,23,25,27,29} {
        \pgfmathsetmacro{\yy}{6.1-0.2*\i}
        \node[left,font=\fontsize{4pt}{4pt}\selectfont] at (-0.05,\yy) {$\lab$};
    }
    
    \foreach \x in {0, 0.2, ..., 1.8} {
        \foreach \y in {4, 4.2, ..., 5.8} {
           \fill[blue] (\x+0.1,\y+0.1) circle(0.05); 
        }
    }
    \foreach \x in {2, 2.2, ..., 5.8} {
        \foreach \y in {0, 0.2, ..., 3.8} {
          \fill[orange] (\x+0.1,\y+0.1) circle(0.05);
        }
    }
\node[below,font=\small] at (3,-0.35) {(4)};
    \end{tikzpicture}
    };

    \node at (5.5,0) {
    \begin{tikzpicture}[scale=0.8] 
    \draw[step=0.2cm, gray, very thin] (0,0) grid (6,6);

    \foreach \lab [count=\i from 1] in {1,3,5,7,9,2,4,6,8,10,12,14,16,18,20,22,24,26,28,30,11,13,15,17,19,21,23,25,27,29} {
        \pgfmathsetmacro{\yy}{6.1-0.2*\i}
        \node[left,font=\fontsize{4pt}{4pt}\selectfont] at (-0.05,\yy) {$\lab$};
    }
    
    \foreach \x in {0, 0.2, ..., 0.8} {
        \foreach \y in {5, 5.2, ..., 5.8} {
           \fill[blue] (\x+0.1,\y+0.1) circle(0.05); 
        }
    }
    \foreach \x in {4, 4.2, ..., 5.8} {
        \foreach \y in {5, 5.2, ..., 5.8} {
           \fill[blue] (\x+0.1,\y+0.1) circle(0.05); 
        }
    }
    \foreach \x in {1, 1.2, ..., 3.8} {
        \foreach \y in {2, 2.2, ..., 4.8} {
           \fill[orange] (\x+0.1,\y+0.1) circle(0.05); 
        }
    }
        \foreach \x in {0, 0.2, ..., 0.8} {
        \foreach \y in {0, 0.2, ..., 1.8} {
          \fill[blue] (\x+0.1,\y+0.1) circle(0.05);
        }
    }
    \foreach \x in {4, 4.2, ..., 5.8} {
        \foreach \y in {0, 0.2, ..., 1.8} {
          \fill[blue] (\x+0.1,\y+0.1) circle(0.05);
        }
    }
\node[below,font=\small] at (3,-0.35) {(5)};
    \end{tikzpicture}
    };

    \node at (11,0) {
    \begin{tikzpicture}[scale=0.8] 
    \draw[step=0.2cm, gray, very thin] (0,0) grid (6,6);

    \foreach \lab [count=\i from 1] in {1,3,5,7,9,2,4,6,8,10,12,14,16,18,20,22,24,26,28,30,11,13,15,17,19,21,23,25,27,29} {
        \pgfmathsetmacro{\yy}{6.1-0.2*\i}
        \node[left,font=\fontsize{4pt}{4pt}\selectfont] at (-0.05,\yy) {$\lab$};
    }
    
    \foreach \x in {1, 1.2, ..., 1.8} {
        \foreach \y in {5, 5.2, ..., 5.8} {
           \fill[red] (\x+0.1,\y+0.1) circle(0.05); 
        }
    }
    \foreach \x in {0, 0.2, ..., 0.8} {
        \foreach \y in {4, 4.2, ..., 4.8} {
           \fill[red] (\x+0.1,\y+0.1) circle(0.05); 
        }
    }
    \foreach \x in {2, 2.2, ..., 3.8} {
        \foreach \y in {4, 4.2, ..., 4.8} {
          \fill[green] (\x+0.1,\y+0.1) circle(0.05);
        }
    }
    \foreach \x in {4, 4.2, ..., 5.8} {
        \foreach \y in {5, 5.2, ..., 5.8} {
          \fill[green] (\x+0.1,\y+0.1) circle(0.05);
        }
    }
    \foreach \x in {1, 1.2, ..., 1.8} {
        \foreach \y in {2, 2.2, ..., 3.8} {
          \fill[green] (\x+0.1,\y+0.1) circle(0.05);
        }
    }
    \foreach \x in {0, 0.2, ..., 0.8} {
        \foreach \y in {0, 0.2, ..., 1.8} {
          \fill[green] (\x+0.1,\y+0.1) circle(0.05);
        }
    }
    \foreach \x in {2, 2.2, ..., 3.8} {
        \foreach \y in {0, 0.2, ..., 1.8} {
          \fill[red] (\x+0.1,\y+0.1) circle(0.05);
        }
    }
    \foreach \x in {4, 4.2, ..., 5.8} {
        \foreach \y in {2, 2.2, ..., 3.8} {
          \fill[red] (\x+0.1,\y+0.1) circle(0.05);
        }
    }
    \node[below,font=\small] at (3,-0.35) {(6)};
    \end{tikzpicture}
    };
\end{tikzpicture}

    \caption{A toy example illustrating the key definitions in the theoretical analysis.}
    \label{fig:toy1}
\end{figure}

\addtolength{\textheight}{-.2in}%

\end{document}